\title{Provenance Circuits\\for Trees and Treelike Instances\\(Extended Version)}
\author{
\begin{tabular}[t]{c}
Antoine Amarilli \\
{\normalfont T\'el\'ecom ParisTech; Institut Mines--T\'el\'ecom; CNRS LTCI} \\
{\normalfont antoine.amarilli@telecom-paristech.fr} \\[0.5em]
Pierre Bourhis \\
{\normalfont CNRS CRIStAL; Université Lille 1; INRIA Lille} \\
{\normalfont pierre.bourhis@univ-lille1.fr} \\[0.5em]
Pierre Senellart \\
{\normalfont Institut Mines-Télécom; Télécom ParisTech; CNRS LTCI} \\
{\normalfont National University of Singapore; CNRS IPAL} \\
{\normalfont pierre.senellart@telecom-paristech.fr} \\[0.5em]
\end{tabular}
}
\date{}
\newtheorem{theorem}{Theorem}[section]
\newtheorem{lemma}{Lemma}[section]
\newtheorem{proposition}{Proposition}[section]
\newtheorem{corollary}{Corollary}[section]
\newtheorem{example}{Example}[section]
\newtheorem{definition}{Definition}[section]
\newcommand{\deft}[1]{\emph{#1}}
\newcommand*{\defeq}{\mathrel{\rlap{%
  \raisebox{0.3ex}{$\m@th\cdot$}}%
  \raisebox{-0.3ex}{$\m@th\cdot$}}%
  =}
\renewcommand{\leq}{\leqslant}
\renewcommand{\geq}{\geqslant}
\renewcommand{\phi}{\varphi}
\newcommand{\prxmlclass}[1]{\mathsf{#1}}
\newcommand{\fie}{\prxmlclass{fie}}
\newcommand{\muxind}{\prxmlclass{mux, ind}}
\newcommand{\mux}{\prxmlclass{mux}}
\newcommand{\ind}{\prxmlclass{ind}}
\newcommand{\prxml}{\mathsf{PrXML}}
\DeclareMathOperator{\supp}{supp}
\DeclareMathOperator{\dom}{dom}
\DeclareMathOperator{\node}{node}
\newcommand{\neuter}[1]{\underline{#1}}
\newcommand{\lbl}[1]{\lblf(#1)}
\newcommand{\lblf}{\lambda}
\newcommand{\annotf}{\alpha}
\newcommand{\decode}[1]{\left\langle #1 \right\rangle}
\newcommand{\sems}[1]{\llbracket #1 \rrbracket}
\newcommand{\true}{1}
\newcommand{\false}{0}
\newcommand{\calB}{\mathcal{B}}
\newcommand{\calD}{\mathcal{D}}
\newcommand{\calE}{\mathcal{E}}
\newcommand{\calT}{\mathcal{T}}
\newcommand{\inp}{\mathsf{inp}}
\newcommand\restr[2]{{
  \kern-\nulldelimiterspace 
  #1 
  _{|#2} 
  }}
\newcommand{\LCf}{\text{L}}
\newcommand{\RCf}{\text{R}}
\newcommand{\LC}{L}
\newcommand{\RC}{R}
\newcommand{\FC}{\mathit{FC}}
\newcommand{\NS}{\mathit{NS}}
\newcommand{\card}[1]{\left|#1\right|}
\let\oldPr\Pr
\renewcommand\Pr{\oldPr\nolimits}
\newcommand{\posbool}[1]{\mathrm{PosBool}[#1]}
\newcommand{\vars}[1]{\mathrm{Vars}(#1)}
\newcommand{\trunc}[2]{#1^{\leq #2}}
\newcommand{\cqneq}{$\mathrm{CQ}^{\neq}$\xspace}
\newcommand{\ucqneq}{$\mathrm{UCQ}^{\neq}$\xspace}
\newcommand{\is}[1]{\makebox[0pt]{\(\scriptstyle#1\)}}
\newcommand{\quot}[2]{#1/{#2}}
\renewcommand{\int}{\mathrm{int}}
\newcommand{\arity}[1]{\mathrm{arity}(#1)}
\newcommand{\aritysg}[1]{a_{#1}}
\newcommand{\pfun}{\pi}
\newcommand{\acpt}[2]{#2 \models #1}
\newcommand{\accepts}[2]{\acpt{#1}{#2}}
\newcommand{\naccepts}[2]{\card{\mathrm{aruns}(#1, #2)}}
\newcommand{\p}{\mathrm{p}}
\renewcommand{\o}{\mathrm{o}}
\newcommand{\n}{\mathrm{n}}
\renewcommand{\i}{\mathrm{i}}
\renewcommand{\r}{\mathrm{r}}
\newcommand{\h}{\mathrm{h}}
\newcommand{\s}{\mathrm{s}}
\renewcommand{\t}{\mathrm{t}}
\newcommand{\boolp}[1]{\overline{#1}}
\newcommand{\natp}[2]{\overline{#1}^{#2}}
\newcommand{\goal}{\mathrm{Goal}}
\newcommand{\NN}{\mathbb{N}}
\newcommand{\prov}[2]{\mathrm{Prov}(#1, #2)}
\newcommand{\provn}[2]{\mathrm{Prov}_{\Nx}(#1, #2)}
\newcommand{\provnr}[3]{\mathrm{Prov}_{\Nx}(#1, #2, #3)}
\newcommand{\alphab}[2]{\Gamma^{#1}_{#2}}
\newcommand{\Nx}{\mathbb{N}[X]}
\newcommand{\pvalr}[3]{\mathrm{Val}^{#2}_{#3}(#1)}
\newcommand{\all}{\mathrm{all}}
\newcommand{\pval}[2]{\mathrm{Val}^{#2}(#1)}
\newcommand{\val}[1]{\mathrm{Val}(#1)}
\newcommand{\circuit}{\mathrm{Circuit}}
\newcommand{\nsum}{\bigoplus}
\newcommand{\nprod}{\bigotimes}
\renewcommand{\nplus}{\oplus}
\newcommand{\ntimes}{\otimes}
\newcommand{\ours}{\mathrm{ours}}
\newcommand{\theirs}{\mathrm{theirs}}
\newcommand{\tval}[2]{#1_{[#2]}}
\newcommand{\teval}[1]{\tevalf(#1)}
\newcommand{\tevalf}{\epsilon}
\newcommand{\pierre}[1]{}
\newcommand{\antoine}[1]{}
\newcommand{\todo}[1]{}
\newcommand{\discuss}[1]{}
\newcommand{\width}{\mathrm{w}}
\newcommand{\weight}{\mathrm{w}}
\begin{document}

\maketitle

\defaultbibliographystyle{abbrv}

\begin{bibunit}
\begin{abstract}
  Query evaluation in monadic second-order logic (MSO) is 
tractable on trees and treelike instances, even though it is hard for
arbitrary instances. This tractability result has been extended to several tasks
related to query evaluation, such as counting query
results~\cite{arnborg1991easy} or performing query evaluation on probabilistic
trees~\cite{cohen2009running}. These are two
examples of the more general problem of computing augmented query
output, that is referred to as \emph{provenance}. This article presents a
provenance framework for trees and treelike instances, by describing a
linear-time construction of a circuit provenance representation for MSO
queries. We show how this provenance can be connected to the usual
definitions of semiring provenance on relational
instances~\cite{green2007provenance}, even though we compute it in an
unusual way, using tree automata; we do so via intrinsic definitions of
provenance for general semirings, independent of the
operational details of query evaluation. We show applications of
this provenance to capture existing counting and probabilistic results on
trees and treelike instances, and give novel consequences for probability
evaluation.

\end{abstract}

\section{Introduction}\label{sec:intro}
A celebrated result by Courcelle~\cite{Courcelle90} has shown that
evaluating a fixed monadic second-order (MSO) query on relational
instances, while generally hard in the input instance
for any level of the polynomial hierarchy~\cite{AjtaiFS98},
can be performed in linear time on
input instances of \emph{bounded treewidth} (or \emph{treelike}
instances), by encoding the query to an automaton on
tree encodings of instances.
This idea has been extended more recently
to monadic Datalog~\cite{gottlob2010monadic}.
In addition to query evaluation,
it is also possible to \emph{count} in linear time the number of query
answers over treelike instances~\cite{arnborg1991easy,pichler2010counting}.

However, query evaluation and counting are
special cases of the more general problem of capturing \emph{provenance
information} \cite{CheneyCT09,green2007provenance} 
of query results, which describes the link between input and
output tuples.
Provenance information can be expressed through various formalisms, such as
\emph{provenance semirings}~\cite{green2007provenance}
or Boolean formulae~\cite{suciu2011probabilistic}.
Besides counting, provenance can be exploited 
for practically important tasks such as answering queries in
incomplete databases~\cite{ImielinskiL84}, 
maintaining access rights~\cite{ParkNS12},
or computing query probability~\cite{suciu2011probabilistic}.
To our knowledge,
no previous work has looked at the general question of efficient
evaluation of
expressive queries on treelike instances while keeping track of
provenance.

Indeed, no proper definition of provenance
for queries evaluated via tree automata has been put forward.
The first contribution of this work
(Section~\ref{sec:provenance})
is thus to introduce a
general notion of \emph{provenance circuit}~\cite{deutch2014circuits} for tree automata, which
provides an efficiently computable representation of all possible results of an
automaton over a tree with uncertain annotations. Of course, we are interested
in the provenance of \emph{queries} rather than automata;
however, 
in this 
setting, the provenance that we compute has an intrinsic
definition, so it does not depend on which automaton we use to compute the query.

We then extend these results in Section~\ref{sec:encodings} to the provenance of
queries on
treelike relational instances. We propose again an intrinsic definition
of provenance capturing the subinstances that satisfy the query. We then show
that, in the same way that queries can be evaluated by compiling them to an
automaton on tree encodings, we can compute a provenance circuit for the query
by compiling it to an automaton, computing a tree decomposition of the instance, and performing the previous construction, in
linear time overall in the input instance.
Our intrinsic definition of provenance ensures the provenance only
depends on the logical query, not on the choice of query 
plan, of automaton, or of tree decomposition. 

Our next contribution in Section~\ref{sec:semirings} is to extend
such definitions of provenance from Boolean formulae to
$\mathbb{N}[X]$, the \emph{universal provenance
semiring}~\cite{green2007provenance}. This poses several challenges. First, as
semirings cannot deal satisfactorily with
negation~\cite{geerts2010database,amsterdamer2011limitations}, we must restrict to \emph{monotone}
queries, to obtain monotone provenance circuits. Second, we must keep track of
the multiplicity of facts, as well as the multiplicity of matches. For this reason, we restrict to unions of conjunctive queries (UCQ) in
that section, as richer languages do not directly provide notions
of multiplicity for matched facts. We generalize our notion of provenance
circuits for automata to instances with unknown multiplicity annotations,
using arithmetic circuits. We show that, for UCQs, the
standard provenance for the universal semiring~\cite{green2007provenance}
matches the one defined via the automaton, and that a provenance circuit for it
can be computed in linear time for
treelike instances.

Returning to the non-monotone Boolean provenance, we show in Section~\ref{sec:applications} how the tractability of
provenance computation on treelike instances implies that of two important
problems: determining the probability of a query, and counting
query matches. We show that probability evaluation of
fixed MSO queries is tractable on probabilistic XML models with local uncertainty, a result
already known in~\cite{cohen2009running}, and extend it to trees with event
annotations that satisfy a condition of having bounded \emph{scopes}. We
also show that MSO query evaluation is tractable on
treelike block-independent-disjoint
(BID) relational instances~\cite{suciu2011probabilistic}.
These tractability results for
provenance are achieved by applying message
passing~\cite{lauritzen1988local} on our
provenance circuits.
Last, we show the tractability of
counting query matches, using a reduction to the probabilistic
setting, capturing a result of~\cite{arnborg1991easy}.

\section{Preliminaries}\label{sec:prelim}
We introduce basic notions related to trees, tree automata, and Boolean
circuits.

Given a fixed \deft{alphabet} $\Gamma$, we define a \deft{$\Gamma$-tree} $T
= (V, \LC, \RC, \lblf)$
as a set of \deft{nodes}~$V$, two partial mappings $\LC,\RC:V\to V$
that associate an internal node with its left and right child, and a
\deft{labeling function} $\lblf:V \to \Gamma$.
Unless stated otherwise, the trees that we consider are rooted, directed, ordered, binary, and full (each node
has either zero or two children). We write $n \in T$ to mean $n \in V$.
We say that two trees $T_1$ and $T_2$ are \deft{isomorphic} if there is a
bijection between their node sets preserving children and labels (we
simply write it $T_1=T_2$);
they have \deft{same skeleton} if
they are isomorphic except for labels.

A \deft{bottom-up nondeterministic tree automaton}
on $\Gamma$-trees,
or \deft{$\Gamma$-bNTA},
is a tuple
$A = (Q, F, \iota, \delta)$
of a set $Q$ of \deft{states},
a subset $F \subseteq Q$ of \deft{accepting states},
an \deft{initial relation} $\iota : \Gamma \rightarrow 2^Q$
giving possible states for leaves from their label,
and a \deft{transition relation} $\delta : Q^2 \times \Gamma \rightarrow 2^Q$
determining possible states for internal nodes
from their label and the states of their children.
A~\deft{run} of~$A$ on a $\Gamma$-tree $T=(V,\LC,\RC,\lblf)$ is a
function $\rho: V \to Q$
such that for each leaf~$n$ we have
$\rho(n) \in \iota(\lbl{n})$,
and for every internal node~$n$
we have $\rho(n) \in \delta(\rho(\LC(n)), \rho(\RC(n)), \lbl{n})$.
A run is \deft{accepting} if, for the root $n_{\mathrm{r}}$ of $T$,
$\rho(n_{\mathrm{r}}) \in F$; and $A$ \deft{accepts} $T$ (written $\acpt{A}{T}$)
if there is some accepting run of $A$ on $T$.
Tree automata capture usual query languages on trees, such as
MSO~\cite{thatcher1968generalized}
and tree-pattern queries~\cite{Neven02}.

A \deft{Boolean circuit} is a directed acyclic graph $C = (G, W,
g_0,\mu)$ where $G$ is a set of \deft{gates}, $W \subseteq G \times G$ is a set
of \deft{wires} (edges), $g_0 \in G$ is a
distinguished \deft{output gate}, and
$\mu$ associates each \deft{gate} $g \in G$ with a \deft{type} $\mu(g)$ that
can be $\inp$ (\deft{input gate}, with no incoming wire in~$W$), $\neg$
(NOT-gate, with exactly one incoming wire in~$W$),
$\land$ (AND-gate) or $\lor$ (OR-gate).
A \deft{valuation} of the \deft{input gates} $C_\inp$ of~$C$ is a function 
$\nu : C_\inp \to \{\false, \true\}$; it defines inductively a unique
\deft{evaluation} $\nu':C \to \{\false, \true\}$ as follows:
$\nu'(g)$ is $\nu(g)$ if $g \in C_\inp$ (i.e., $\mu(g) = \inp$); it is $\neg
\nu'(g')$
if $\mu(g) = \neg$ (with $(g', g) \in W$);
otherwise it is $\bigodot_{(g', g) \in W} \nu'(g')$
where $\odot$ is $\mu(g)$ (hence, $\land$ or $\lor$). Note that this
implies that AND- and OR-gates with no inputs always evaluate to $\true$ and
$\false$ respectively.
We will abuse notation and use valuations and evaluations
interchangeably, and we write $\nu(C)$ to mean $\nu(g_0)$.
The \deft{function} captured by~$C$ is 
the one that maps any valuation $\nu$ of $C_\inp$ to $\nu(C)$.

\mysec{Provenance Circuits for Tree Automata}{sec:provenance}
We start by studying a notion of provenance on trees, defined in an
uncertain tree framework. 
Fixing a finite alphabet $\Gamma$ throughout this section, we view a
$\Gamma$-tree~$T$ as an \deft{uncertain tree}, where each node carries an unknown
Boolean annotation in $\{\false, \true\}$, and consider all possible
\deft{valuations} that choose an annotation for each node of $T$, calling
$\boolp{\Gamma}$ the alphabet of annotated trees:

\begin{definition}
  We write $\boolp{\Gamma} \defeq \Gamma \times \{0,1\}$.
  For any $\Gamma$-tree $T=(V,\LC,\RC,\lblf)$ and valuation $\nu: V \to \{0,1\}$, 
  $\nu(T)$ is the $\boolp{\Gamma}$-tree with same skeleton where each
  node~$n$
  is given the label $(\lblf(n), \nu(n))$.
\end{definition}

We consider automata on \emph{annotated} trees, namely,
$\boolp{\Gamma}$-bNTAs, and define their \deft{provenance} on a $\Gamma$-tree
$T$ as a Boolean function that describes which valuations of~$T$ are accepted by
the automaton. Intuitively, provenance keeps track of the
dependence between Boolean annotations and acceptance or rejection of
the tree.

\begin{definition}
  \label{def:prov}
  The \deft{provenance} 
  of a $\boolp{\Gamma}$-bNTA~$A$ on a
  $\Gamma$-tree $T=(V,\LC,\RC,\lblf)$ is the function $\prov{A}{T}$ mapping any valuation $\nu: V \to
  \{\false, \true\}$ to $\true$ or $\false$ depending on whether
  $\accepts{A}{\nu(T)}$.
\end{definition}

We now define a \deft{provenance circuit} of $A$ on a $\Gamma$-tree $T$ as a circuit that
captures the provenance of~$A$ on~$T$, $\prov{A}{T}$. Formally:

\begin{definition}
  \label{def:provcirc}
  Let $A$ be a 
  $\boolp{\Gamma}$-bNTA and $T = (V, \LC, \RC, \lblf)$ be
  a $\Gamma$-tree. A \deft{provenance circuit} of $A$ on $T$ is a
  Boolean
  circuit~$C$ with $C_\inp = V$ that captures the function $\prov{A}{T}$.
\end{definition}

An important result is that provenance circuits can be
tractably constructed:

\begin{propositionrep}{prp:provenance-circuits}
  A provenance circuit of a 
  $\boolp{\Gamma}$-bNTA~$A$ on a
  $\Gamma$-tree~$T$ can be constructed in time 
$O(\card{A}\cdot\card{T})$.
\end{propositionrep}

\begin{toappendix}
  Throughout the appendix, we will call \deft{0-gates} (resp.\ \deft{1-gates})
  OR-gates (resp.\ AND-gates) with no inputs; they always evaluate to $0$
  (resp.\ to $1$).

  We first prove this result in the specific case of a \emph{monotone}
  $\boolp{\Gamma}$-bNTA~$A$ (see Definition~\ref{def:monotone}), showing the same result but for provenance circuits
  which are taken to be \emph{monotone} Boolean circuits (i.e., they do not feature
  NOT-gates).

  \begin{propositionrep}{prp:provenance-circuits-mono}
    A monotone provenance circuit~$C$ of a monotone $\boolp{\Gamma}$-bNTA~$A$ on a
    $\Gamma$-tree~$T$ can be constructed in time 
  $O(\card{A}\cdot\card{T})$.
  \end{propositionrep}

\begin{proof}
  Fix $T = (V, \LC, \RC, \lblf)$, $A = (Q, F, \iota, \delta)$, and
  construct the provenance circuit $C = (G, W, g_0, \mu)$.
  For each node $n$ of $T$, create one input gate $g^{\i}_n$ in $C$ (which we
  identify
  to $n$, so that we have $C_\inp = V$), and create one gate $g^q_n$ for every $q \in
  Q$. If $n$ is a leaf node, for $q \in Q$, set $g^q_n$ to be:
  \begin{compactitem}
  \item if $q \in \iota(\lbl{n}, \false)$, a $\true$-gate;
  \item if $q \in \iota(\lbl{n}, \true)$ but $q \notin \iota(\lbl{n}, \false)$,
    an OR-gate with sole input $g^\i_n$;
  \item if $q \notin \iota(\lbl{n}, \true)$, a $\false$-gate.
  \end{compactitem}

  If $n$ is an internal node, create gates $g^{q_{\LCf},q_{\RCf}}_n$ and
  $g^{q_{\LCf},q_{\RCf},\i}_n$ for every pair $q_{\LCf},q_{\RCf} \in Q$
  (that appears as input states of a transition of $\delta$), the
  first one being an AND-gate of $g^{q_{\LCf}}_{\LC(n)}$ and
  $g^{q_{\RCf}}_{\RC(n)}$, the second one being an AND-gate of
  $g^{q_{\LCf},q_{\RCf}}_n$ and of $g^\i_n$. Now, for $q \in Q$, set $g^q_n$ to
  be an OR-gate of all the $g^{q_{\LCf},q_{\RCf}}_n$ such that $q \in
  \delta(q_{\LCf},q_{\RCf}, (\lbl{n}, \false))$ and of all the
  $g^{q_{\LCf},q_{\RCf},\i}_n$ such that $q \in
  \delta(q_{\LCf},q_{\RCf}, (\lbl{n}, \true))$.

  Add gate $g_0$ to be an OR-gate of all the $g^q_r$ such that $q \in F$, where
  $r$ is the root of~$T$.

  \medskip

  This construction is in time $O(\card{A}\cdot\card{T})$: more precisely, for
  every node of the tree $T$, we create a number of states that is linear
  in the number of states in $Q$ and in the number of transitions of
  $\delta$.
  
  Now we show that $C$ is indeed a
  provenance circuit of $A$ on $T$. Let $\nu: V \to \{0,1\}$ be a valuation that we
  extend to an evaluation of $C$. We
  show by induction on $n \in T$ that for any $q \in Q$, we have $\nu(g^q_n) = \true$
  iff, letting $T_n$ be the subtree of~$T$ rooted at $n$, there is a run $\rho$
  of $A$ on $T_n$ such that $\rho(n) = q$.

  For a leaf node $n$, choosing $q \in Q$, if $\nu(n) = \false$ then
  $\nu(g^q_n)=1$ iff $q \in
  \iota(\lbl{n}, \false)$, and if $\nu(n) = \true$ then $\nu(g^q_n)=\true$ iff $q
  \in \iota(\lbl{n}, \true)$, so in both cases we can define a run $\rho$ as
  $\rho(n) \defeq q$. Conversely, the existence of a run clearly ensures that
  $\nu(g^q_n) = \true$.

  For an internal node $n$, choosing $q \in Q$, if $\nu(n) = \false$ then
  $\nu(g^q_n)=\true$ iff there are some $q_{\LCf}, q_{\RCf} \in Q$ such
  that $q \in \delta(q_{\LCf}, q_{\RCf}, (\lbl{n}, \false))$,
  $\nu(g^{q_{\LCf}}_{\LCf(n)})=\true$, and
  $\nu(g^{q_{\RCf}}_{\RCf(n)})=\true$. By induction hypothesis this
  implies the existence of a run $\rho_{\LCf}$ of $A$ on~$T_{\LC(n)}$ such
  that $\rho_{\LCf}(\LCf(n)) = q_{\LCf}$ and a run $\rho_{\RCf}$ of $A$ on~$T_{\RC(n)}$ such that $\rho_{\RCf}(\RCf(n)) = q_{\RCf}$,
  from which we construct a run $\rho$ of $A$ on $T_n$ such that
  $\rho(n) = q$, by setting $\rho(n) \defeq q$ and setting $\rho(n')$ either to~$\rho_{\LCf}(n')$ or to~$\rho_{\RCf}(n')$ depending on whether $n' \in
  T_{\LC(n)}$ or $n' \in T_{\RC(n)}$. Conversely, the existence of such a run
  $\rho$ implies the existence of two such runs $\rho_{\LCf}$ and~$\rho_{\RCf}$,
  from which we deduce that $\nu(g^q_n)=\true$.

  If $\nu(n) = \true$ then $\nu(g^q_n)=\true$ iff there are some $q_{\LCf},
  q_{\RCf} \in Q$ such that
  $\nu(g^{q_{\LCf}}_{\LCf(n)})=\true$,
  $\nu(g^{q_{\RCf}}_{\RCf(n)})=\true$, and
  either $q \in \delta(q_{\LCf}, q_{\RCf}, (\lbl{n}, \false))$ or $q \in
  \delta(q_{\LCf}, q_{\RCf}, (\lbl{n}, \true))$. By monotonicity of $A$, this
  is equivalent to $q \in \delta(q_{\LCf}, q_{\RCf}, (\lbl{n}, \true))$. The rest is
  analogous to the previous case.

  The claim proven by induction clearly justifies that $C$ is a
  provenance circuit, as,
  applying it to the root of~$T$, we deduce that, for any valuation $\nu$, we
  have $\nu(C) = \true$ iff there is an accepting run of $A$ on $\nu(T)$.
\end{proof}

We now generalize this result to automata and provenance circuits which are not
necessarily monotone.
\medskip
\end{toappendix}

\begin{proof}
  We adapt the construction of Proposition~\ref{prp:provenance-circuits-mono}. The only
  difference is that we add, for every node $n \in T$, a gate $g^{\neg \i}_n$
  which is a NOT-gate of $g^\i$, and we modify the definition of the following
  nodes:
  \begin{itemize}
  \item for leaf nodes $n$, for any state $q$, we set $g^q_n$ to be an OR-gate
    of $g^\i$ if $q \in \iota(\lbl{n}, \true)$ (and a 0-gate otherwise), and
    $g^{\neg \i}$ if $q \in \iota(\lbl{n}, \false)$ (and a 0-gate otherwise).
  \item for internal nodes $n$, for every pair $q_{\LCf},q_{\RCf} \in Q$ that
    appears as input states of a transition of $\delta$, create a gate
    $g^{q_{\LCf},q_{\RCf},\neg \i}_n$ which is an AND-gate of
    $g^{q_{\LCf},q_{\RCf}}_n$ and of $g^{\neg \i}_n$. Now,
    for any state $q$, we set $g^q_n$ as before except that we use 
    $g^{q_{\LCf},q_{\RCf}}_n$ instead of $g^{q_{\LCf},q_{\RCf},\neg \i}_n$.
  \end{itemize}

  \medskip

  We show correctness as before, showing by induction on $n \in T$ that for any
  $q \in Q$, $\nu(g^q_n) = \true$ iff $A_q$ accepts $T_n$, where $A_q$ is
  obtained from $A$ by letting $q$ be the only final state. The property is
  clearly true on leaf nodes, and at internal nodes, if $\nu(n) = \false$ we
  have $\nu(g^q_n) = \true$ iff there exist $q_{\LCf}, q_{\RCf} \in Q$ such that
  $q \in \delta(q_{\LCf}, q_{\RCf}, (\lbl{n}, \false))$,
  $\nu(g^{q_{\LCf}}_{\LCf(n)})=\true$, which by induction hypothesis implies the
  existence of sub-runs on $T_{\LC(n)}$ and $T_{\RC(n)}$ that we combine as
  before. If $\nu(n) = \false$ we have $\nu(g^q_n) = \true$ iff there exist
  $q_{\LCf}, q_{\RCf} \in Q$ such that $q \in \delta(q_{\LCf}, q_{\RCf},
  (\lbl{n}, \true))$ (this time we cannot have $q \in \delta(q_{\LCf}, q_{\RCf},
  (\lbl{n}, \false))$) so we conclude in the same way. We conclude by justifying
  that $g_0$ is correctly defined, as before.
\end{proof}

The proof is by creating one gate in~$C$ per state of~$A$ per node of~$T$, and writing
out in~$C$ all possible transitions of~$A$ at each node~$n$ of~$T$,
depending on the input gate that indicates the annotation of~$n$.
In fact, we can show that~$C$ is treelike for fixed~$A$; we use
this 
in Section~\ref{sec:applications} to show the
tractability of
tree automaton evaluation on probabilistic XML trees from
$\prxml^\muxind$~\cite{kimelfeld2013probabilistic}.

It is not hard to see that this construction gives us a way to capture the
provenance of any \deft{query} on trees that can be expressed as an
automaton,
no matter the choice of automaton. A \deft{query} $q$ is any logical sentence on
$\boolp{\Gamma}$-trees which a $\boolp{\Gamma}$-tree $T$ can \deft{satisfy} (written $T \models
q$) or \deft{violate} ($T \not\models q$). 
An automaton $A_q$ \deft{tests} query $q$ if for any
$\boolp{\Gamma}$-tree~$T$, we have $\accepts{A_q}{T}$ iff $T \models q$. We define
$\prov{q}{T}$ for a $\Gamma$-tree~$T$
as in Definition~\ref{def:prov}, and run circuits for queries as in
Definition~\ref{def:provcirc}. It is immediate that
Proposition~\ref{prp:provenance-circuits} implies:

\begin{proposition}
  For any fixed query $q$ on $\boolp{\Gamma}$-trees for which we can
  compute an
  automaton
  $A_q$ that tests it,
  a provenance circuit of $q$ on a $\Gamma$-tree $T$ can be constructed in time
  $O(\card{T})$.
\end{proposition}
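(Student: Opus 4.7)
The plan is to derive this directly from Proposition~\ref{prp:provenance-circuits} by observing that the dependence on $\card{A}$ is absorbed into the hidden constant once the query~$q$ is fixed. So I would first pick, once and for all, an automaton~$A_q$ that tests~$q$; since $q$ is fixed, $\card{A_q}$ is a constant (depending only on~$q$). Feeding $A_q$ and the input tree~$T$ into the construction of Proposition~\ref{prp:provenance-circuits} yields a circuit~$C$ in time $O(\card{A_q}\cdot \card{T}) = O(\card{T})$.

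The second step is to check that the circuit~$C$ returned by this construction, which by Proposition~\ref{prp:provenance-circuits} captures $\prov{A_q}{T}$, also captures $\prov{q}{T}$. This is where the hypothesis that $A_q$ \emph{tests}~$q$ is used: for any valuation $\nu:V\to\{\false,\true\}$, we have $\accepts{A_q}{\nu(T)}$ iff $\nu(T)\models q$. Unfolding the definitions of $\prov{A_q}{T}$ and $\prov{q}{T}$ (the latter being the straightforward extension of Definition~\ref{def:prov} from automata to queries mentioned just before the statement), both send $\nu$ to $\true$ exactly on the valuations $\nu$ for which $\nu(T)\models q$, so they are the same Boolean function on~$V$.

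Combining the two observations, $C$ has $C_\inp = V$ and captures $\prov{q}{T}$, so it is a provenance circuit of~$q$ on~$T$ in the sense of Definition~\ref{def:provcirc} (extended to queries), and it has been constructed in time $O(\card{T})$. There is no real obstacle here: the only subtlety worth making explicit is that the intrinsic nature of the query provenance means the resulting circuit does not depend on any particular choice of $A_q$ among the automata testing~$q$, even though the construction internally fixes one; this is precisely the content of the remark preceding the statement that ``this construction gives us a way to capture the provenance of any query\ldots no matter the choice of automaton.''
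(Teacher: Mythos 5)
Your proposal is correct and matches the paper's reasoning exactly: the paper treats this as an immediate consequence of Proposition~\ref{prp:provenance-circuits}, obtained by fixing one automaton $A_q$ testing $q$ (so $\card{A_q}$ is a constant) and observing that $\prov{A_q}{T}$ and $\prov{q}{T}$ coincide as Boolean functions because $A_q$ tests $q$. Your explicit unfolding of the two definitions and the remark on independence from the choice of automaton are exactly the intended justification.
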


\noindent Note that provenance does not depend on
the automaton used to test the query.

\mysec{Provenance on Tree Encodings}{sec:encodings}
We lift the previous results to the setting of \deft{relational instances}.

A \deft{signature} $\sigma$ is a finite set of \deft{relation names} (e.g., $R$) with
associated \deft{arity} $\arity R \geq 1$.
Fixing a countable domain $\calD=\{a_k\mid k\geq 0\}$,
a \deft{relational instance}~$I$
over $\sigma$
(or $\sigma$-instance)
is a finite set~$I$ of \deft{ground facts} of the
form $R(\mathbf{a})$ with $R \in \sigma$,
where
$\mathbf{a}$ is a tuple of~$\arity{R}$ elements of~$\calD$. The \deft{active domain} $\dom(I)\subseteq\calD$
of~$I$ is
the finite set of elements of $\calD$ used in $I$.
Two instances $I$ and $I'$ are \deft{isomorphic} if
there is a bijection $\phi$ from $\dom(I)$ to $\dom(I')$ such that
$\phi(I)=I'$.
We say that an instance~$I'$ is a
\deft{subinstance}\footnote{Subinstances are not necessarily ``induced'' by a subset of the
domain, they can be arbitrary subsets of facts.}
of $I$, written $I' \subseteq
I$, if it is a
subset of the facts of~$I$, which implies $\dom(I') \subseteq
\dom(I)$.

A \deft{query} $q$ is a logical formula in (function-free) first- or second-order logic on
$\sigma$, without free second-order variables; a $\sigma$-instance $I$ can
\deft{satisfy} it ($I \models q$) or
\deft{violate} it ($I \not\models q$). For
simplicity, unless stated otherwise, we restrict to \deft{Boolean queries}, that is, queries with no
free variables, that are \deft{constant-free}.
This limitation is inessential for \deft{data complexity}, namely complexity for a fixed query:
we can handle non-Boolean queries by
building a provenance circuit for each possible output result (there are
polynomially many), and we encode constants by extending the signature with
fresh unary predicates for them.

As before, we consider unknown Boolean annotations
on the facts of an instance. However, rather than annotating the facts, it is
more natural to say that a fact annotated by~$\true$ is kept, and a fact
annotated by~$\false$ is deleted. Formally, given an instance $\sigma$, a
\deft{valuation} $\nu$ is a function from the facts of~$I$ to $\{0, 1\}$, and we
define $\nu(I)$ as the subinstance $\{F \in I \mid \nu(F) = 1\}$ of~$I$. We then
define:

\begin{definition}
  \label{def:provenance-query}
  The \deft{provenance} of a
  query $q$ on a $\sigma$-instance $I$ is
  the function $\prov{q}{I}$ mapping any valuation $\nu: I \to \{0, 1\}$ to $1$
  or $0$ depending on whether $\nu(I) \models q$. A \deft{provenance circuit} of
  $q$ on $I$ is a 
  Boolean circuit $C$ with $C_\inp = I$ that
  captures $\prov{q}{I}$.
\end{definition}

We study provenance for treelike instances (i.e., bounded-treewidth
instances), encoding queries to automata on tree encodings. Let us first define this.
The \deft{treewidth} $\width(I)$ of an instance~$I$ is a standard
measure~\cite{robertson1986graph} of how close $I$ is to a tree: the
treewidth of a tree is~$1$, that of a cycle is~$2$, and that of a
$k$-clique or $k$-grid is $k-1$; further, we have $\width(I') \leq \width(I)$ for any $I' \subseteq I$.
It is
known~\cite{Courcelle90,flum2002query}
that for any fixed $k \in \NN$, there is a finite alphabet $\alphab{k}{\sigma}$ such
that any $\sigma$-instance $I$ of treewidth~$\leq k$ can be encoded in linear
time~\cite{bodlaender1996linear} to a $\alphab{k}{\sigma}$-tree $T_I$, called
the \deft{tree encoding}, which can be decoded back to $I$ up to isomorphism
(i.e., up to the identity of constants). Each fact in $I$ is encoded in a
node for this fact in the tree encoding, where the node
label describes the fact.

The point of tree encodings is that
queries in \deft{monadic second-order logic}, the extension of first-order logic
with second-order quantification on sets, can be encoded to automata
which are then evaluated on tree encodings. Formally:

\begin{definition}
  \label{def:tests}
  For $k \in \NN$, we say that a $\alphab{k}{\sigma}$-bNTA $A^k_q$ \deft{tests}
  a query $q$ for treewidth~$k$ if,
  for any $\alphab{k}{\sigma}$-tree~$T$, we have $\accepts{A^k_q}{T}$ iff $T$ decodes to
  an instance $I$ such that $I \models q$.
\end{definition}

\begin{toappendix}
  \subsection{Formal preliminaries}

  We start by giving the omitted formal definitions:

\begin{definition}
A \deft{tree decomposition} of an
instance $I$ is a $\calT$-tree $T = (B, \LC, \RC, \dom)$
where $\calT$ is the set of subsets of $\dom(I)$.
The nodes of~$T$ are called \deft{bags} and their label is written $\dom(b)$.
We require:
\begin{enumerate}
  \item for every $a \in \dom(I)$, letting $B_a \defeq \{b \in
    B \mid a \in \dom(b)\}$, for every two bags $b_1, b_2\in B_a$,
all bags on the (unique) undirected
    path from $b_1$ to $b_2$ are also in~$B_a$;
  \item for every fact $R(\mathbf{a})$ of $I$, there exists a bag
    $b_{\mathbf{a}} \in B$ such that
    $\mathbf{a} \subseteq \dom(b_{\mathbf{a}})$.
\end{enumerate}
The \deft{width} of~$T$ is $\width(T) \defeq k-1$ where $k \defeq \max_{b \in T}
\card{\dom(b)}$.
The \deft{treewidth} (or \deft{width}) of an instance $I$, written $\width(I)$, is
the minimal width $\width(T)$ of a tree decomposition $T$ of $I$.
\end{definition}

It is NP-hard, given an instance~$I$, to determine $\width(I)$. However,
given a fixed width~$k$, one can compute in linear time in~$I$ a tree
decomposition of width $\leq k$ of $I$
if one exists~\cite{bodlaender1996linear}.

To represent bounded-treewidth instances as trees on a
finite alphabet, we introduce the notion of \deft{tree encodings}. The
representation is up to isomorphism, i.e., it loses the identity of constants.
Our finite alphabet $\alphab{k}{\sigma}$ is the set of possible facts on
an instance of width fixed to~$k$;
following the definition of proof trees in~\cite{chaudhuri1992equivalence}
we use element co-occurrences between one node and its parent in the tree as a way to encode
element reuse. Formally, we take $\alphab{k}{\sigma}$ to be the set
defined as follows:

\begin{definition}
  \label{def:kfact}
  The set of \deft{$k$-facts} of the signature $\sigma$, written
  $\alphab{k}{\sigma}$, is the set of pairs $\tau = (d,s)$ where:
  \begin{compactitem}
  \item the \deft{domain} $d$ is a subset of
  size at most~$k+1$ of the first $2k+2$ elements of the countable domain~$\calD$,
  $a_1, \ldots,
  a_{2k+2}$;
\item the \deft{structure} $s$ is a zero- or single-fact structure over~$\sigma$ such that
  $\dom(s)\subseteq d$.
  \end{compactitem}
\end{definition}

A tree encoding is just a $\alphab{k}{\sigma}$-tree.
We first explain how such a tree encoding $E$ can be decoded to a structure $I =
\decode{E}$ (defined up to isomorphism) and
a tree decomposition $T$ of width $k$ of $I$.
Process $E$ top-down. At each $(d,s)$-labeled node of $E$ that is child
of a $(d',s')$-labeled node,
pick fresh elements in $\calD$ for the elements of $d \backslash d'$
(at the root, pick all fresh elements), add the fact of $s$
to~$I$
(replacing the elements in $d$ by the fresh elements, and by the old elements of
$\dom(I)$ for $d \cap d'$), and add a bag to $T$ with the elements
of $I$ matching those in $d$. If we ever attempt to create a fact that already
exists, we abort and set $\decode{E} \defeq \bot$ (we say that $E$ is
\deft{invalid}).

We can now define tree encodings in terms of decoding:
\begin{definition}
  \label{def:encoding}
  We call a $\alphab{k}{\sigma}$-tree $T$ a \deft{tree encoding} of width~$k$ of a
  $\sigma$-structure $I$ if $\decode{T}$ is isomorphic to $I$.
\end{definition}

We note that clearly if a structure~$I$ has a tree encoding of width $k$, then
$\width(I) \leq k$. Further, observe that there is a clear injective function
from $\decode{T}$ to $T$, which maps each fact of~$\decode{T}$ to the node
of~$T$ which encoded this fact. This function is not total, because some nodes
in~$T$ contain no fact (their $s$ is a zero-fact structure).

We now justify that one can efficiently compute a tree encoding of width~$k$
of~$I$ from a tree decomposition of width~$k$ of $I$ (this
result is implicit in~\cite{chaudhuri1992equivalence}).

\begin{lemmarep}{lem:encode}
  From a tree decomposition $T$ of width $k$ of a $\sigma$-structure~$I$, one
  can compute in linear time a tree encoding~$E$ of width~$k$ of~$I$ with a bijection from
  the facts of~$I$ to the non-empty nodes of $E$.
\end{lemmarep}

\begin{proof}
The intuition is that we assign each fact $R(\mathbf{a})$ of $I$ to a bag $b \in T$ such
that $\mathbf{a} \subseteq \dom(b)$, which can be done in linear
time~\cite{flum2002query}.
We then encode each node of~$T$ as a chain of nodes in
$E$, one for each fact assigned to $T$.

Fix the $\sigma$-structure $I$ and its tree decomposition $T$ of width $k$.
Informally, we build $E$ by walking through the decomposition~$T$ and
copying it by enumerating the new facts in the domain of each bag of~$T$ as a
chain of nodes in $E$, picking the labels in $\alphab{k}{\sigma}$ so that the elements
shared between a bag and its parent in $T$ are retained, and the new elements
are chosen so as not to overlap with the parent node. Overlaps between one node
and a non-parent or non-child node are irrelevant.

Formally, we proceed as follows.
We start by precomputing a mapping that indicates, for every
tuple $\mathbf{a}$ of~$I$ such that some fact $R(\mathbf{a})$ holds in $I$,
the topmost bag $\node(\mathbf{a})$ of~$T$ such that $\mathbf{a} \subseteq
\dom(\node(\mathbf{a}))$. This can be performed in linear time by Lemma~3.1
of~\cite{flum2002query}. Then, we label the tree decomposition $T$ with the
facts of~$I$ as follows:  for each fact $F = R(\mathbf{a})$ of~$I$, we add $F$
to the label of~$\node(\mathbf{a})$.

Now, to encode a bag $b$ of $T$, consider
$b_\p$ the
parent of~$b$ in~$T$ and partition $\dom(b) = d_\o \sqcup
d_\n$ where $d_\o$ are
the \emph{old elements} already present in $\dom(b_\p)$, and
$d_\n$ are the
\emph{new elements} that did not appear in $\dom(b_\p)$. (If $b$ is the root,
then $d_\o = \emptyset$ and $d_\n = \dom(b)$.)
Under a specific node $(d_\p,n_\p)$ in $E$, with a
bijection $f_\p$ from $\dom(b_\p)$ to
$d_\p$, choose a domain $d$ of size $\card{\dom(b)}$ over the
fixed $a_1, \ldots, a_{2k+2}$ whose intersection with
$f_\p(\dom(b_\p))$ is exactly
$f_\p(d_\o)$ (this is possible, as there are $2k+2$ elements to choose from and
$\card{\dom(b_\p)} \leq k+1$) and extend the bijection $f_p$ to $f$ so that it maps
$\dom(b)$ to $d$. At the root, choose an arbitrary bijection.
Now, encode $b$ as a chain of nodes in $E$ labeled with $(d,s_i)$ where
each $s_i$ encodes one of the facts in the label of $b$ (thus defining the
bijection from $I$ to the non-empty nodes of~$E$). If there are zero such facts,
create a $(d,\emptyset)$ zero-fact node instead, rather than creating no
node. Recursively encode the children of~$b$ (if any) in~$T$, under this
chain of nodes in~$E$. Add zero-fact $(\emptyset,\emptyset)$ child nodes so that each non-leaf
node has exactly two children. We assume that all arbitrary choices are
done in a consistent manner so that the process is deterministic.
\end{proof}

Hence, when restricting to instances whose width is bounded by a constant~$k$,
one can equivalently work with $\alphab{k}{\sigma}$-trees which are encoding of
these instances, instead of working with the instances themselves.

We redefine properly the notion of an automaton testing a query:

\begin{definition}
  \label{def:ftar}
  A $\alphab{k}{\sigma}$-bNTA $A$ \deft{tests} a Boolean query~$q$ for
  treewidth~$k$ if for any $\alphab{k}{\sigma}$-tree $E$, $\accepts{A}{E}$ iff
  $\decode{E} \models q$. (In particular, if $\decode{E} = \bot$ then $A$
  rejects $E$.)
\end{definition}

\subsection{Proof of Theorem~\ref{thm:courcelle}}

We can now state and prove the theorem that says that MSO sentences can be
tested by automata for any treewidth.
The main problem is to justify that MSO sentences can be rewritten to MSO
sentences on our tree encodings, as we can then use~\cite{thatcher1968generalized}
to compile them to a bNTA. We rely on~\cite{flum2002query} for that
result, but we must translate between our tree encodings and theirs. We do so
by a general technique of justifying that certain product trees annotated with
both encodings can be recognized by a bNTA. Let us state and prove the result:
\medskip
\end{toappendix}

\begin{theoremrep}[\cite{Courcelle90}]{thm:courcelle}
  For any $k \in \NN$, for any MSO query $q$, one can
  compute a $\alphab{k}{\sigma}$-bNTA $A^k_q$ that tests $q$ for
  treewidth $\leq k$.
\end{theoremrep}

\begin{proof}
  In the context of this proof, we define a \deft{bDTA} (bottom-up deterministic
tree automaton) as a bNTA but where $\iota$ and $\delta$, rather than returning
sets of reachable states, return a single state. This implies that the automaton
has a unique run on any tree.

Let us fix $k \in \mathbb{N}^*$.
We denote by $[m]$ the set $\{1,\ldots,m\}$
and by $n_{\sigma}$ the number of relations in~$\sigma$.
Lemma 4.10 of \cite{flum2002query} shows that for a certain finite alphabet
$\Gamma(\sigma,k)$,
for any MSO formula $\phi$ over the signature $\sigma$, there exists an  MSO formula $\phi^*$ such that
for any $\Gamma(\sigma,k)$-tree $t$ representing an instance $I$,
$t$ satisfies $\phi^*$ iff $I$ satisfies $\phi$. More precisely, one can define
a partial $\decode{\cdot}'$
function on
$\Gamma(\sigma,k)$-trees such that
for every instance $I$ of treewidth $\leq k$ there is a
$\Gamma(\sigma,k)$-tree $t$ such that $\decode{T}'$ is well-defined and isomorphic to $I$
and for every $\Gamma(\sigma,k)$-tree $T$, we have $T
\models \phi^*$ iff $\decode{T}'$ is well-defined and $\decode{T}' \models
\phi$.

We first describe the alphabet $\Gamma(\sigma,k)$.
A letter of $\Gamma(\sigma,k)$ is of the form $(\gamma_1, \allowbreak \gamma_2, \ldots,
\allowbreak \gamma_{n_{\sigma}+2})$. The element $\gamma_1$ belongs to
$2^{[k]^2}$ and
describes the equalities between the elements inside a bag; the element
$\gamma_2$ belongs to
$2^{[k]^2}$ and
describes the equalities between the
elements from this bag and its parents;
For $i \geq 3$, $\gamma_i$ belongs to
$2^{k^{\arity{R_i}}}$
and describes the tuples belonging to the relation $R_i$.
In the $\Gamma(\sigma,k)$-trees, the encoding of equalities between values of
the bags and its parents are described explicitly (by $\gamma_2$) rather than
implicitly (by element reuse between parent and child, as in our encoding).

We next describe for which $\Gamma(\sigma,k)$ trees $T$ is their encoding operation
$\decode{T}'$ well-defined, and how it is then computed. We say that $T$ is
\deft{well-formed} if $\decode{T}'$ is well-defined. We accordingly say that a
$\alphab{k}{\sigma}$-tree $T$ is \deft{well-formed} if $\decode{T}$ is
\deft{well-defined}, namely, different from $\bot$.

For a $\Gamma(\sigma,k)$-tree~$T$, $\decode{T}'$ is well-defined iff for each
node $n$ with $\mathbf{\gamma} \defeq \lbl{n}$ and each child $n' \in \{\LC(n), \RC(n)\}$
with $\mathbf{\gamma'} \defeq \lbl{n'}$:
\begin{itemize}
\item $\gamma_1$ is closed by transitive closure, i.e., if $(i,j)$ and $(j,e)$ belong to $\gamma_1$ then $(i,e)$ belongs to $\gamma_1$
\item $\gamma_2$ is closed by transitive closure (to check this, we need to
  consider paths, rather than the mere pair $n$ and $n'$)
    $\gamma_2$ is closed by transitive closure. 
\item for each $(i,j)$ in $\gamma_1$ and $(j,e)$ in $\gamma'_2$ then $(i,e)$
  belongs to $\gamma_2'$ (and symmetrically, reversing the roles of $n$ and
  $n'$)
\item for each pair $(j_1,\ldots,j_l)$ in\ $\gamma'_m$ and if for each $b$ $(i_b,j_b)$ in $\gamma_2'$, then $(i_1,\ldots,i_l)$ is in $\gamma_m$
(and vice-versa, reversing the roles of $n$ and $n'$); a similar condition holds
with $\gamma_1$ 
\end{itemize}
Note that these conditions are clearly expressible in MSO.
While~\cite{flum2002query} does not precisely describe the behavior of $\phi^*$
on $\Gamma(\sigma,k)$-trees which are not well-formed, the above justifies our
assumption that $\phi^*$ tests well-formedness and rejects the trees which are
not well-formed.

We now define $\decode{T}'$ as follows, if~$T$ is well-formed.
Process $E$ top-down. At each node $n \in E$ with $\mathbf{\gamma} \defeq
\lbl{n}$ with parent node $n' \in E$ with $\mathbf{\gamma'} \defeq \lbl{n'}$,
pick fresh elements in $\calD$ for the positions $j$ such that there is no pair
$(i,j)$ in $\gamma'_2$
(at the root, pick all fresh elements) and if $(j_1,j_2)$ belongs to $\gamma'_1$
then the same fresh element is assigned for the elements at both positions;
if there is such a pair, pick the existing elements used when decoding $n'$.
These choices define a mapping $\nu$ from the positions to the fresh elements
and to existing elements. Now,
for each $(j_1,\ldots,j_m)$ in $\gamma_i$, then the fact
$R_i(\nu(j_1),\ldots,\nu(j_m))$ is added to $I$.
 If we ever attempt to create a fact that already exists, we ignore it.

 We have reviewed the alphabet $\Gamma(\sigma,k)$ of~\cite{flum2002query}, the conditions for the
well-definedness of $\decode{T}'$ and the semantics of this operation. Now,
following 
\cite{thatcher1968generalized,flum2002query}, with an additional step to determinize the resulting
automaton to a bDTA, the formula
$\phi^*$ from~\cite{flum2002query} can be translated into a bDTA $A_{\theirs}$ on
$\Gamma(\sigma,k)$-trees
such that for any $\Gamma(\sigma,k)$-tree $T$, 
$A_{\theirs}$ accepts $T$
iff $T \models \phi^*$, that is, iff $\decode{T}'$ is well-defined and satisfies
$\phi$. Note that this implies that $A_{\theirs}$ is
encoding-invariant.
We now explain how to translate $A_{\theirs}$ to our desired bDTA $A_{\ours}$ over
$\alphab{k}{\sigma}$ such that for every $\alphab{k}{\sigma}$-tree $T$,
$A_{\ours}$
accepts $T$ iff $\decode{T} \models \phi$.

We consider the alphabet $\Sigma = \alphab{k}{\sigma} \times \Gamma(\sigma, k)$, and
call $\pi_1$ and $\pi_2$ the operations on $\Sigma$-trees that map them
respectively to $\alphab{k}{\sigma}$ and $\Gamma(\sigma,k)$ trees with same skeleton
by keeping the first or second component of the labels. Given a
$\alphab{k}{\sigma}$-tree $T_1$ and a $\Gamma(\sigma,k)$-tree $T_2$ with same
skeleton, we will write $T_1 \times T_2$ the $\Sigma$-tree obtained from them.

We will do this by building a $\Sigma$-bDTA $A_\t$ with the following properties:
\begin{enumerate}
  \item If $A_\t$ accepts $T$ then $\decode{\pi_1(T)}$ and $\decode{\pi_2(T)}'$ are
    well-defined and isomorphic.
  \item For every $\alphab{k}{\sigma}$-tree $T_1$ such that $\decode{T_1}$ is
    well-defined, there exists a $\Gamma(\sigma, k)$-tree $T_2$ such that $A_\t$
    accepts $T_1 \times T_2$.
\end{enumerate}
Then, we can notice that from $A_{\theirs}$, we can build an $\Sigma$-bDTA
$A'_{\theirs}$
such that $T$ is recognized by $A'_{\theirs}$ iff $\pi_2(T)$ is accepted by
$A'_{\theirs}$, and build $A_{\ours}$ as the conjunction of $A_\t$ and
$A'_{\theirs}$,
projected to the first component (accept a $\alphab{k}{\sigma}$-tree $T_1$ iff there
is some $\Gamma(\sigma,k)$-tree $T_2$ such that $T_1 \times T_2$ is accepted, which is
possible using non-determinism, and then determinizing). It is now clear that
$A_{\ours}$ thus defined accepts a $\alphab{k}{\sigma}$-tree $T_1$ iff the
$\Sigma$-tree $T_1 \times T_2$ is accepted, for some $\Gamma(\sigma,k)$-tree
$T_2$, by $A'_{\theirs}$ and $A_\t$: if this happens then $T_2$ is accepted by
$A_{\theirs}$ and $\decode{T_1}$ is isomorphic to $\decode{T_2}'$ so
$\decode{T_2} \models \phi$; and conversely, if $\decode{T_1}$ models $\phi$,
there is some $\Gamma(\sigma, k)$-tree $T_2$ such that $A_\t$ accepts $T_1 \times
T_2$, and this implies that $\decode{T_2}'$ is isomorphic to $\decode{T_1}$ so
(as $\phi$, being a constant-free MSO query, is invariant under isomorphisms)
$\decode{T_2}$ satisfies $\phi^*$ and $A'_{\theirs}$ accepts $T_1 \times T_2$.
So it suffices to build the $\Sigma$-bDTA $A_\t$ with the desired properties.

We now define a simple encoding from $\alphab{k}{\sigma}$ to $\Gamma(\sigma, k)$
describing what is the tree $T_2$, given a well-formed tree $T_1$, such that
$A_\t$
accepts $T_1 \times T_2$. It will then suffice to see that it is possible, with
a MSO formula $\psi$, to check on a $\Sigma$-tree $T$ whether $\pi_2(T)$ is the encoding of
$\pi_1(T)$ in this sense. Indeed, we can then compile $\psi$ to a
$\Gamma$-bDTA using~\cite{thatcher1968generalized}.

Consider a node $n_1 \in T_1$ and let $(d, s) \defeq \dom(n)$. We define the label
of the corresponding node $n_2 \in T_2$. We define $\gamma_1$ so that the $k+1 -
\dom(d)$ last elements are all equal to the $\dom(d)$-th element (i.e., we
complete $\dom(d)$ to always have $k+1$ elements, by ``repeating'' the last
element, where ``last'' is according to an arbitrary order on domain elements).
We define $\gamma_2$ to indicate which elements of $n_1$ were shared with its
parent node, completing it to be consistent with respect to $\gamma_1$. Last, we
define $\gamma_3, \ldots, \gamma_{n_{\sigma} + 2}$ to be the tuples of elements
in the various relations of $\sigma$ in $\decode{T_1}$, with repetitions to be
consistent according to $\gamma_1$.

It is clear that this encoding maps every tree $T_1$ such that $\decode{T_1}$ is
well-defined to a tree $T_2$ such that $\decode{T_2}$ is well-defined and
isomorphic to $\decode{T_1}$. Now, to justify the existence of $\psi$, observe
that the only non-local condition to check on $T$ is the definition of the
$\gamma_3, \ldots, \gamma_{n_{\sigma} + 2}$; but we can clearly define by an MSO
formula, for a node $n \in T$ with $(d, s) \defeq \lbl{n}$, the exact set of
facts stated by $T$ for the elements represented by $d$ (there are only a finite
number of such ``types''): they are defined to check, for all facts of the
putative type, whether a node with the right fact is reachable following an
undirected path where the same elements are kept along the path. So we can
define an MSO formula checking for each node $n \in T$ whether the type of
$\pi_1(n)$ in $\pi_1(T)$ in this sense matches the graph stated in
$\pi_2(n)$.

\end{proof}

Our results apply to any query language that can be rewritten to tree automata
under a bound on instance treewidth. Beyond MSO, this is also the case of
\deft{guarded second-order logic} (GSO). GSO
extends first-order logic with second-order quantification on arbitrary-arity relations, with a
semantic restriction to \emph{guarded tuples} (already co-occurring in some instance
fact); it captures MSO (it has the same expressive power on treelike
instances~\cite{gradel2002back}) and many common database query languages, e.g.,
\deft{guarded Datalog}~\cite{Gradel:2000:EEM:1765236.1765272} or
\deft{frontier-guarded Datalog}~\cite{baget2010walking}. We use GSO in the
sequel as our choice of query language that can be rewritten to automata.
Combining the result above with the results of the previous section, we claim that
provenance for GSO queries on treelike instances can be tractably computed, and
that the resulting provenance circuit has treewidth independent of the instance.

\begin{toappendix}
  \subsection{Proof of Theorem~\ref{thm:provenance-encodings}}
  We first give the formal definition of the
  treewidth of a circuit, which we omitted. To do so, we must first give a
  normal form for circuits:

\begin{definition}
  \label{def:aritytwo}
  Let $C = (G, W, g_0, \mu)$ be a Boolean circuit. The \deft{fan-in} of a gate
  $g \in G$ is the number of gates $g' \in G$, such that $(g', g) \in W$. Note
  that our definitions of circuits impose that the fan-in of input gates is
  always~$0$ and the fan-in of NOT-gates is always $1$.
  We say $C$ is \deft{arity-two} if the fan-in of AND- and OR-gates is
  always~$2$, where we allow constant $0$- and $1$-gates as gate types of their
  own (but require that such gates have fan-in of~$0$).
\end{definition}

  Clearly this restriction is inessential as circuits can be rewritten in
  linear-time to an arity-two circuit by merging AND- and OR-gates with fan-in
  of~$1$ with their one input, replacing those with fan-in of~$0$ by a 0- or
  1-gate, and rewriting those with fan-in $>2$ to a chain of gates of the same
  type with fan-in~$2$.

  We now define tree decompositions of circuits, and their relational encoding:

\begin{definition}
  \label{def:circuittw}
  The relational signature $\sigma_{\circuit}$ features one unary relation
  $R^\i$ which applies to input gates, two unary relations $R^0$ and $R^1$ which
  apply to constant 0- and 1-gates, one binary relation $R^{\neg}(g_\o, g_\i)$
  which applies to NOT-gates (the first element is the output and the second is
  the input), and two ternary relations $R^\wedge(g_\o, g_\i, g_\i')$ and
  $R^\vee(g_\o, g_\i, g_\i')$ which apply respectively to AND- and OR-gates,
  with the first element being the input and the second and third being the
  inputs. The \deft{relational encoding} of an (arity-two non-monotone) Boolean
  circuit $C$ is the $\sigma_{\circuit}$-instance $I_C$ obtained in the expected
  way; we can clearly construct $I_C$ from $C$ in linear time. The \emph{treewidth} $\width(C)$
  of~$C$ is $\width(I_C)$ (but we talk of tree decompositions of~$C$ as shorthand).
\end{definition}

  Now, to prove the theorem, we first take care of a technical issue. Given an instance $I$ and
  its tree encoding $T_I$, we want to construct provenance circuits on $T_I$, meaning
  that we wish to consider Boolean-annotated versions of $T_I$. However, the
  tree encodings of subinstances of $I$ are not annotated versions
  of $T_I$. We need to justify that they can be taken to have the same structure
  as $T_I$, with some facts having been removed in nodes containing no fact.

\begin{definition}
  \label{def:neuter}
  For any $k$-fact $\tau = (d, s) \in \alphab{k}{\sigma}$, we define the
  \deft{neutered $k$-fact} $\neuter{\tau}$ as $(d, \emptyset)$. In particular, if
  $s = \emptyset$ then $\neuter{\tau} = \tau$.
  For $\tau \in \alphab{k}{\sigma}$ and $b \in \{\false, \true\}$, we write
  $\tval{\tau}{b}$
  to be $\tau$ if $b$ is $\true$ and $\neuter{\tau}$ if $b$ is $\false$.

  Given a $\boolp{\alphab{k}{\sigma}}$-tree $E$, we define its
  \deft{evaluation} $\teval{E}$ as the $\alphab{k}{\sigma}$-tree that has
  same skeleton, where for every node $n \in E$ with corresponding node
  $n'$ in $\teval{E}$, letting $\lbl{n} = (\tau, b) \in \alphab{k}{\sigma}
  \times \{\false, \true\}$, we have $\lbl{n'} = \tval{\tau}{b}$.
\end{definition}

This definition allows us to lift $\alphab{k}{\sigma}$-bNTAs, intuitively
testing a query, to $\boolp{\alphab{k}{\sigma}}$-bNTAs that test the same query
on Boolean-annotated tree encodings, seen via~$\tevalf$ as the tree encoding of a
subinstance. Formally:

\begin{lemma}
  \label{lem:relabel}
  For any $\alphab{k}{\sigma}$-bNTA $A$, one can compute in linear time a
  $\boolp{\alphab{k}{\sigma}}$-bNTA $A'$ on such that $\accepts{A'}{E}$
  iff $\accepts{A}{\teval{E}}$.
\end{lemma}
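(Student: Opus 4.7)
The plan is to construct $A'$ by simply rewriting each transition of $A$ so that it reads a pair $(\tau, b) \in \boolp{\alphab{k}{\sigma}}$ by acting as $A$ would have acted on $\tval{\tau}{b}$. Concretely, writing $A = (Q, F, \iota, \delta)$, I would define $A' = (Q, F, \iota', \delta')$ with the same state set and accepting set, and with
\[
  \iota'((\tau,b)) \defeq \iota(\tval{\tau}{b}),
  \qquad
  \delta'(q_\LCf, q_\RCf, (\tau,b)) \defeq \delta(q_\LCf, q_\RCf, \tval{\tau}{b}).
\]
Each transition (or initial mapping entry) of $A$ over a letter $\tau$ contributes a transition in $A'$ over $(\tau, 1)$; each transition of $A$ over a neutered letter $\neuter{\tau}$ contributes a transition in $A'$ over every $(\tau', 0)$ with $\neuter{\tau'} = \neuter{\tau}$. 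Since the number of $k$-facts sharing a given neutered form is bounded by a constant depending only on $k$ and $\sigma$ (not on~$A$), this rewriting runs in linear time in $\card{A}$.

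For correctness, I would exhibit a direct bijection between runs of $A'$ on a $\boolp{\alphab{k}{\sigma}}$-tree $E$ and runs of $A$ on $\teval{E}$. Since $E$ and $\teval{E}$ have the same skeleton by Definition~\ref{def:neuter}, and a function $\rho : V \to Q$ is a run of $A'$ on $E$ iff at every node $n$ (leaf or internal) the state assignment satisfies the $A'$-transition relation at label $\lblf(n) = (\tau_n, b_n)$ iff it satisfies the $A$-transition relation at label $\tval{\tau_n}{b_n}$ iff $\rho$ is a run of $A$ on $\teval{E}$, the correspondence is immediate. Accepting runs are preserved since the state sets and accepting sets are literally the same, so $\accepts{A'}{E}$ iff $\accepts{A}{\teval{E}}$.

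There is essentially no obstacle here: the statement is a purely syntactic relabeling, and the only thing to verify carefully is that the linear-time bound holds uniformly, which is ensured because each $A$-transition gets replicated into a constant number of $A'$-transitions (the constant being the number of $k$-facts with a given domain, bounded by a function of $k$ and the signature).
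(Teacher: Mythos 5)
Your construction is exactly the one in the paper: keep the state set and accepting set, and define $\iota'((\tau,b)) \defeq \iota(\tval{\tau}{b})$ and $\delta'(q_1,q_2,(\tau,b)) \defeq \delta(q_1,q_2,\tval{\tau}{b})$, with correctness following from the immediate correspondence between runs of $A'$ on $E$ and runs of $A$ on $\teval{E}$. The proposal is correct and matches the paper's proof, with your accounting for the constant blow-up from neutered letters being a slightly more careful justification of the linear-time bound than the paper bothers to give.
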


\begin{proof}
  Let $A=(Q,F,\iota,\delta)$. We construct the bNTA $A'=(Q,F,\iota',\delta')$
  according to the following definition:
  $\iota'((\tau, b))\defeq\iota(\tval{\tau}{b})$ and $\delta'((\tau, b), q_1,
  q_2)\defeq\delta(\tval{\tau}{b}, q_1, q_2)$ for all $b \in \{\false, \true\}$,
  $\tau \in
  \alphab{k}{\sigma}$, and $q_1, q_2 \in Q$. The process is clearly in linear time
  in~$\card{A}$. Now, it is immediate that $\accepts{A'}{E}$ iff
  $\accepts{A}{\teval{E}}$, because a run of $A'$ on~$E$ is a run of $A$
  on~$\teval{E}$, and vice-versa.
\end{proof}

  We are now ready to state and prove the result:
  \medskip
\end{toappendix}

\begin{theoremrep}{thm:provenance-encodings}
  For any fixed $k \in \NN$ and 
  GSO query~$q$, for any
  $\sigma$-instance~$I$ such that $\width(I) \leq k$, one can construct a provenance
  circuit~$C$ of $q$ on $I$
  in time $O(\card{I})$.
  The treewidth of~$C$ only depends on~$k$ and $q$ (not on $I$).
\end{theoremrep}

\begin{proof}
  Fix $k \in \NN$ and the GSO query~$q$.
  Using Theorem~\ref{thm:courcelle}, let $A$ be a $\alphab{k}{\sigma}$-bNTA
  $A^k_q$ that tests $q$ for treewidth~$k$ (remember that
  Theorem~\ref{thm:courcelle} extends from MSO to GSO because both collapse on
  treelike instances~\cite{gradel2002back}). We lift $A$ to a bNTA $A'$ on
  $\boolp{\alphab{k}{\sigma}}$ using Lemma~\ref{lem:relabel}. This is performed
  in constant time in the instance.

  Now, given the input instance~$I$ such that $\width(I) \leq k$,
  compute in linear time~\cite{bodlaender1996linear} a tree
  decomposition of~$I$, and, from this, compute in linear time a tree encoding
  $E_I$ of $I$ using Lemma~\ref{lem:encode}. We now use
  Proposition~\ref{prp:provenance-circuits} to construct a provenance circuit $C$
  of $A'$ on $E_I$. Consider now the injective function $f$ that maps the facts of~$I$ to
  the nodes of $E_I$ where those facts are encoded. We modify $C$ to replace the
  input gate $g^{\i}_n$ for any $n \in E_I$ not in the image of $f$, setting it to be
  a $1$-gate; and renaming the input gates $g^{\i}_n$ for any $n \in E_I$ to be $F$,
  for $F$ the fact such that $f(F) = n$. Let $C'$ be the result of this process.
  $C'$ is thus a
  Boolean circuit such that $C'_\inp = I$, and it was
  computed in linear time from $I$. We claim that it captures $\prov{q}{I}$.

  \medskip

  To check that it does, let $\nu: I \to \{\false, \true\}$ be a valuation of
  $I$. We show that $\nu(C') = \true$ iff $\nu(I) \models q$. To do so, the key
  point is to
  observe that, letting $\nu'$ be the valuation of $E_I$ defined by $\nu'(n) =
  \nu(F)$ if there is $F \in I$ such that $f(F) = n$, and $\nu'(n) = \true$
  otherwise, we have that $\teval{\nu'(E_I)}$ is a tree encoding of $\nu(I)$.
  Indeed, $\teval{\nu'(E_I)}$ and $E_I$ have same skeleton, the elements
  that constitute the domains of node labels are the same, and a fact $F \in I$ is
  encoded in $\teval{\nu'(E_I)}$ iff $f(F)$ is annotated by $\true$ in
  $\nu'(E_I)$ iff we have $F \in \nu(I)$. (Note that our choice to extend $\nu'$
  by setting it to be $\true$ on nodes that encode no facts makes no difference,
  as the annotation of such nodes is projected to the same label by $\tevalf$,
  i.e., for such labels $\tau$ in $E_I$, we have $\tval{\tau}{\false} =
  \tval{\tau}{\true}$.)

  Having observed this, we know that, because $A$ tests $q$, $\nu(I) \models q$
  iff $\accepts{A}{\nu'(E_I)}$. Now, by definition of Lemma~\ref{lem:relabel}, we
  have $\accepts{A}{\teval{\nu'(E_I)}}$ iff $\accepts{A'}{\nu'(E_I)}$, which by
  definition of the provenance circuit $C$ is the case iff $\nu'(C) = \true$,
  which by definition of $C'$ is the case iff $\nu(C') = \true$. Hence, $C'$ is
  indeed a provenance circuit of $q$ on $I$.

  \medskip

  The only
  point left to justify is that the treewidth of the circuit $C$ is indeed bounded.
  Indeed, the number of gates that we create in $C$ for
  each node $n$ of $E$ only depends on the automaton $A$ that tests $q$,
  and wires in $C$ only go from gates for one node $n$ to gates for nodes
  $\LC(n)$ and $\RC(n)$, so that the tree decomposition $T$ for $C$
  is obtained by putting, in each bag of the tree decomposition
  corresponding to node $n$ of $E$, the gates for node $n$, and $\LC(n)$ and
  $\RC(n)$ if they exist. The additional distinguished gate $g_0$ is added to
  the bag of the root node of $E$. This construction is described on the circuit
  before translating to arity-two, but as the fan-in of the gates of the
  original circuit is bounded by a constant, clearly rewriting to arity-two
  preserves the property that the number of gates per bag of the decomposition
  is bounded by a constant. This proves the claim.
\end{proof}

The proof is by encoding the instance~$I$ to its tree encoding $T_I$ in linear
time, and compiling the query $q$ to an automaton $A_q$ that tests it, in constant time in
the instance. Now, Section~\ref{sec:provenance} worked with
$\boolp{\alphab{k}{\sigma}}$-bNTAs rather than $\alphab{k}{\sigma}$-bNTAs, but the
difference is inessential: we can easily map any
$\boolp{\alphab{k}{\sigma}}$-tree~$T$ to a $\alphab{k}{\sigma}$-tree
$\teval{T}$ where any node label $(\tau, 1)$ is replaced by $\tau$, and any
label $(\tau,
0)$ is replaced by a dummy label indicating the absence of a fact; and we
straightforwardly translate $A$ to a $\boolp{\alphab{k}{\sigma}}$-bNTA~$A'$ such
that $\accepts{A'}{T}$ iff $\accepts{A}{\teval{T}}$ for any
$\boolp{\alphab{k}{\sigma}}$-tree~$T$.
The key point is then that, for any valuation $\nu: T \to \{0, 1\}$,
$\teval{\nu(T)}$ is a tree encoding of~$\nu(I)$ (defined in the expected way),
so we
conclude by applying
Proposition~\ref{prp:provenance-circuits} to~$A'$ and $T$.
As in Section~\ref{sec:provenance},
our definition of provenance is intrinsic to the query and does not depend on
its formulation, on the choice of tree decomposition, or on the choice of
automaton to evaluate the query on tree encodings.

Note that tractability holds only in data complexity. For combined complexity, we incur the cost of compiling the query to an
automaton, which is nonelementary in general~\cite{meyer1975weak}. However, for some
restricted query classes, such as \deft{unions of conjunctive queries} (UCQs), the compilation phase has lower cost.

\begin{toappendix}
  \subsection{Proof of EXPTIME rewriting of UCQs}

  \begin{proposition}[\cite{chaudhuri1992equivalence}]
    \label{prp:ucq}
    For any UCQ $q$ and $k \in \NN$, a $\alphab{k}{\sigma}$-bNTA that tests $q$
    for treewidth $\leq k$ can be computed in EXPTIME in~$q$ and $k$.
  \end{proposition}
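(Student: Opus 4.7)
The plan is to reduce UCQ evaluation to nondeterministic tree automata by constructing, for each CQ in the union, a bNTA that guesses a homomorphism into the decoded instance and verifies it bottom-up, then taking the disjoint union.

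First, I would handle the union: given $q = q_1 \vee \cdots \vee q_m$, it suffices to build a $\alphab{k}{\sigma}$-bNTA $A_i$ for each $q_i$ and then construct $A_q$ as their disjoint union (with each accepting state of any $A_i$ becoming accepting in $A_q$). This multiplies the overall size only by a polynomial factor in $m$, so it is enough to prove the bound for a single CQ $q = \exists x_1 \cdots \exists x_\ell.\, \phi(x_1,\ldots,x_\ell)$ where $\phi$ is a conjunction of atoms.

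For a single CQ, I would design the state space of the automaton as follows. A bag in the tree encoding has at most $k+1$ domain positions drawn from $\{a_1,\ldots,a_{2k+2}\}$, and by the connectedness property of tree decompositions, once a domain element disappears from a bag going upward it can never reappear. A state will therefore encode two pieces of information: (i) a partial function $h$ from the variables of $q$ to the positions $\{a_1,\ldots,a_{2k+2}\}$ currently present in the bag, representing which variables are tentatively mapped to which live domain elements in the subtree below; and (ii) a subset $S \subseteq \mathrm{atoms}(q)$ indicating which atoms of $\phi$ have already been witnessed in the subtree. The number of such pairs $(h,S)$ is at most $(2k+3)^{|\vars{q}|} \cdot 2^{|\phi|}$, i.e., singly exponential in $|q|$ and $k$.

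The transitions are then defined bottom-up in the natural way. At a leaf labelled $(d,s)$, I would allow any $(h,S)$ such that $h$ is consistent with $d$ and $S$ is the set of atoms whose entire image under $h$ equals the fact $s$ (or $S = \emptyset$ if $s$ is empty). At an internal node with children states $(h_L,S_L)$ and $(h_R,S_R)$ and label $(d,s)$, I would require that: the two partial maps $h_L, h_R$ agree on every variable mapped to a position that belongs to the parent's bag~$d$ (so that partial matches can be merged), every variable in $h_L$ or $h_R$ whose image lies outside $d$ is ``closed'' (i.e., it appears in no atom of $\mathrm{atoms}(q) \setminus (S_L \cup S_R)$ together with any still-open variable), the new state $h$ is the restriction of $h_L \cup h_R$ to positions of $d$ possibly extended by guesses for fresh positions of $d$, and the new set $S$ is $S_L \cup S_R$ possibly augmented by atoms fully realized by the fact $s$. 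Finally, a state $(h,S)$ at the root is accepting iff $S = \mathrm{atoms}(q)$ (all atoms have been witnessed) and $h$ has collapsed to the empty map (all existential variables have been discharged).

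Correctness is proved by a straightforward induction over the tree encoding: I would show that reaching state $(h,S)$ at node $n$ in the subtree rooted at $n$ is equivalent to the existence of a partial homomorphism from the sub-CQ consisting of the atoms in $S$ plus some ``open'' atoms into $\decode{E_{|n}}$ whose restriction to variables with live images matches $h$. The connectedness property of tree decompositions is exactly what guarantees that merging $h_L$ and $h_R$ at an internal node yields a globally consistent partial homomorphism. The main obstacle, and the delicate bookkeeping in the transition, is this merging/closure step: one must carefully forbid a variable to be ``forgotten'' while atoms involving it still remain unmatched, and one must guess at the right moment when a newly introduced element of $d$ becomes the image of a variable. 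Once these transitions are laid out, the size bound $2^{O(|q| \log k)}$ on the state space, together with a polynomial number of transitions per pair of states and a singleton fact $s$, gives the claimed EXPTIME construction.
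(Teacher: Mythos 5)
Your overall strategy (disjoint union over the disjuncts, then a direct ``partial homomorphism'' automaton for each CQ) is a legitimate alternative to the paper's proof, which instead routes through Datalog proof trees and Proposition~5.10 of Chaudhuri--Vardi plus closure of bNTAs under alphabet homomorphisms. However, as written your single-CQ construction is unsound: the state $(h,S)$ does not carry enough information to guarantee that all witnessed atoms come from \emph{one} homomorphism. Concretely, take $q:\exists x\, R(x)\wedge S(x)$ and $I=\{R(a),S(b)\}$ with $a\neq b$, encoded as a root with empty domain whose two children both have domain $\{a_1\}$ and carry the facts $R(a_1)$ and $S(a_1)$ respectively; the decoding assigns \emph{distinct} fresh elements to $a_1$ in the two children. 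Your automaton can reach $(\{x\mapsto a_1\},\{R(x)\})$ on the left and $(\{x\mapsto a_1\},\{S(x)\})$ on the right; at the root your agreement condition is vacuous (no position of either child belongs to $d=\emptyset$ --- and even if it were enforced, the two maps agree \emph{syntactically} on the position $a_1$ while denoting different elements), your closure condition is satisfied because $\mathrm{atoms}(q)\setminus(S_L\cup S_R)=\emptyset$, and the root accepts with $S=\mathrm{atoms}(q)$ and $h=\emptyset$, although $I\not\models q$. The two underlying problems are: (i) equality of positions in $\{a_1,\dots,a_{2k+2}\}$ only implies equality of elements when the position survives into the parent's bag, and (ii) nothing in your state records that a variable has already been bound and discharged, so it can silently be rebound to a different element elsewhere in the tree. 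The standard repair is to let $h$ map into $d\cup\{\star\}$, where $\star$ means ``bound to an element that has been forgotten,'' and to declare a conflict at a merge whenever a variable is bound in both children to images that do not denote the same element of the parent's bag (in particular whenever either image is $\star$ or lies outside $d$); this keeps the state space at $(2k+4)^{|\vars{q}|}\cdot 2^{|q|}$ and preserves the EXPTIME bound.

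Two smaller points. First, Definition~\ref{def:ftar} requires the automaton to reject trees $E$ with $\decode{E}=\bot$ (invalid encodings that re-create an existing fact); the paper handles this by intersecting with a validity-checking bNTA, and your construction needs the same step. Second, your root acceptance condition ``$h$ has collapsed to the empty map'' both rejects legitimate matches whose witness lies in the root bag and, without the $\star$ marker, cannot actually certify that every variable was ever assigned; with the repair above the right condition is simply $S=\mathrm{atoms}(q)$. With these fixes your direct construction works and is arguably more self-contained than the paper's reduction to proof trees, at the cost of redoing the bookkeeping that \cite{chaudhuri1992equivalence} already provides.
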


  \begin{proof}
    
Let $q$ be a UCQ and $k$ be an integer.

This proof relies on the notion of proof trees introduced in \cite{chaudhuri1992equivalence}.
The proof trees are intuitively tree encodings of an \deft{unfolding}, or
\deft{expansion tree}, of a Datalog
query $P$ (refer to Definition~\ref{def:datalog} for the definition of Datalog).
An \deft{expansion tree} of $P$ 
is a ranked tree (not binary in general) defined as follows:
the node labels are pairs of a fact $F$ from an intentional predicate of
$\sigma_\int$ and an instantiation of the body of a rule $r \in P$ (i.e., the variables are
mapped to elements of the instance in a way that satisfies the body of $r$)
such that the corresponding instantiation of the head of $r$ is $F$.

Such a tree is \deft{well-formed} if
for any node $n$ labeled by $(F,x)$
there is a bijection $f$ between the children of $n$ and
the intensional facts of the instantiation $x$ such that for any node $n$,
$f(n)$ is exactly the head fact of $n$. (In particular, if the same intensional
fact is used multiple times in the rule, then there are as many children as
there are occurrences of this fact).
We will require that in the rules of the query $P$, every body contains either
$0$ or $2$ intensional facts, so that expansion trees are full binary trees.

From an expansion tree, it is possible to derive a \deft{proof tree}, which is a
$\Sigma(P)$-tree for some finite set $\Sigma(P)$ (for fixed $P$), as follows:
the alphabet $\Sigma(P)$ is the set of pairs of tuples over some fixed set of $2
\card{P}$ values and of a rule of $P$, and the intuition of a $\Sigma(P)$-tree,
just like for our notion of $k$-facts, is that
sharing an element between one node and its parent encodes that it is the same
element, but elements shared between, e.g., siblings, are not necessarily the
same element. Note that proof trees, as expansion trees, are full binary trees.
In this proof we use proof trees to mean this,
as~\cite{chaudhuri1992equivalence}, and we do not mean the notion of proof tree
used for Datalog provenance in Definition~\ref{def:datalog}.

Having described how to encode an expansion tree to a proof tree, we describe
the \deft{decoding} $\decode{T}'$ of a $\Sigma(P)$-tree $T$: first, apply a
process analogous to our own notion of decoding, to obtain an expansion tree
$T'$; second, consider the extensional facts that appear in the instantiation of
the bodies in the labels of $T'$, and define $\decode{T}'$ to be the instance
formed of those facts. Of course, if any of these processes fails, or if the
intermediate expansion tree is not well-formed, we abort and set $\decode{T}' =
\bot$.

Our goal is now to define a Datalog query $P$ such that there is a surjective
homomorphism from
$\Sigma(P)$ to $\alphab{k}{\sigma}$. Fix $\sigma_\int$ to have intensional relations
$P_0, \ldots, P_{k+1}$ of arity $0, \ldots, k+1$.
(We technically disallowed
  predicates of arity 0 in our definition of instances, but there is clearly no
problem in this context.)
For every tuples of variables
$\mathbf{x}$, $\mathbf{y}$, $\mathbf{z_1}$, $\mathbf{z_2}$ taken from a set of
$3k + \aritysg{\sigma}$ variables denoted by $S_X$ (where $\aritysg{\sigma}$ is the arity of
$\sigma$), with the condition $\mathbf{y} \subseteq \mathbf{x}$,
for every relation $R$ of $\sigma$, and $0 \leq i,j_1,j_2 \leq k+1$, create the rules in $P$:
\[
          P_i(\mathbf{x}) \leftarrow R(\mathbf{y}) P_{j_1}(\mathbf{z_1}) P_{j_2}(\mathbf{z_2})
\]
and
\[
          P(\mathbf{x}) \leftarrow R(\mathbf{y}) 
\]
Finally, we create the rules
\[
          P_i(\mathbf{x}) \leftarrow  P_{j_1}(\mathbf{z_1}) P_{j_2}(\mathbf{z_2})
\]

In terms of size, each rule of the query $P$ contains a number of variables polynomial in $k$
and $\sigma$, and the overall size of the query is exponential in a polynomial
of $k$ and $\sigma$. Last, the size of $\Sigma(P)$ is in
$O(\card{P}\cdot\card{P}^{a(P)})$, where $a(P)$ is the maximal arity of the intentional relations.
Let $\beta$ be a set of values of cardinality equal to $2k+2$.
Then, $\Sigma(P)$ is equal to the pairs $P(\mathbf{a}),r$ where $r$ is a rule.
We define the following homomorphism $h$ from  $\Sigma(P)$ to
$\alphab{k}{\sigma}$.
Let $(P_i(\mathbf{a}), r)$ be a element of $\Sigma(P)$. If $r$ does not have an
extensional fact then $h(P_i(\mathbf{a}),r)$ is equal to $(\mathbf{a},\emptyset)$.
Otherwise, the atom $R(\mathbf{y})$ occurs in the body of $r$, let $\nu$ be the
valuation from the variables of $r$ defined according to the head atom
$P_i(\mathbf{a})$ (as we imposed $\mathbf{y} \subseteq \mathbf{x}$ above) such
that $\nu(\mathbf{x})$ is equal to $\mathbf{a}$: $h(P(\mathbf{a}),r)$ is equal to $(\mathbf{a},      R(\nu(\mathbf{y}))$.
$h$ is thus defined from $\Sigma(P)$-trees to the $\alphab{k}{\sigma}$, and it is
clearly surjective. Furthermore, it is clear that this application extends to a
surjective mapping $h'$ from $\Sigma(P)$-trees to $\alphab{k}{\sigma}$-trees, with the
property that whenever $\decode{h'(T)}$ is defined then $T$ is well-formed and
$\decode{h'(T)}$ and
$\decode{T}'$ are isomorphic.

We now explain how we construct our automaton for the query $q$. Let us first
assume that $q$ is a conjunctive query (CQ). We consider the Datalog query $P$ that we constructed
above. From the proof of Proposition~5.10 of
\cite{chaudhuri1992equivalence}, we deduce that we can construct, in time
polynomial in its size, a bNTA $A_P$ on
$\Sigma(P)$  whose number of states is in
is in $O(\card{\Sigma(P)} \cdot 2^{\card{q}+V_q* V_P})$,
where $V_P$ (resp., $V_q$) is the maximal number of variables in a rule of $P$
(resp., in $q$)
such that $A_P$ recognizes the language of the well-formed $\Sigma(P)$-trees $T$
such that $\decode{T}'$ satisfies $q$. For our query $P$, the size of $A_P$ is
therefore exponential in a polynomial of $k$, $\sigma$ and $\card{q}$.

Because $h'$ is an surjective homomorphism from $\Sigma(P)$-trees to
$\alphab{k}{\sigma}$-trees and $A_P$ is on $\Sigma(P)$
 with the Property 1.4.3 of \cite{tata} that shows that bNTA are closed by
 homomorphism, we compute in polynomial time in $A_P$
a bNTA $A_P'$ on $\alphab{k}{\sigma}$ that has
size exponential in a polynomial of $\sigma$, $\card{q}$ and $k$. We intersect
it with a bNTA (clearly constructible in EXPTIME) that checks whether a
$\alphab{k}{\sigma}$-tree is a valid encoding, and rejects otherwise. This yields the final automaton    $A$.

We now check that $A$ tests the query $q$. Let $T$ be a
$\alphab{k}{\sigma}$-tree. If $\decode{T}$ satisfies $q$, then it is well-defined,
Let $T'$ be a preimage of $T$ by $h'$. By our condition on $h'$, $\decode{T'}'$
is well-defined and isomorphic to $\decode{T}$, so (as $q$ features no constants
and is thus preserved by isomorphisms) it satisfies $q$, and
therefore $T'$ was accepted by $A_P$, so $T$ is accepted by $A$. Conversely, if
$A$ accepts $T$, then let $T'$ be a preimage of $T$ by $h$ such that $A'$
accepts $T'$.
As $\decode{T}$ is well-defined,
$T'$ is well-defined and $\decode{T'}'$ and
$\decode{T}$ are isomorphic; but as $T'$ is accepted by $A_P$, we must have
$\decode{T'}' \models q$, so $\decode{T} \models q$.

The result can be extended to an UCQ $q$ by applying the result to every CQ and
taking the union of the resulting automata (whose size is the sum of the input
automata).

  \end{proof}
\end{toappendix}

\mysec{General Semirings}{sec:semirings}
In this section we connect our previous results
to the existing definitions of \deft{semiring provenance} on arbitrary
relational instances~\cite{green2007provenance}:

\begin{definition}
  A \deft{commutative semiring} $(K, \oplus, \otimes, 0_K, 1_K)$ is a set $K$ with
binary operations $\oplus$ and~$\otimes$ and distinguished elements
$0_K$ and~$1_K$, such that $(K, \oplus)$ and $(K,\otimes)$ are
commutative monoids with identity
element $0_K$ and $1_K$,
$\otimes$ distributes over $\oplus$, and $0_K \otimes a = 0_K$ for all $a
\in K$.
\end{definition}

Provenance for semiring $K$ is defined on instances where each fact
is annotated with an element of~$K$. The provenance of a query on such an
instance is an element of $K$ obtained by combining fact annotations
following the semantics of the query, intuitively describing how the
query output depends on the annotations (see exact definitions in~\cite{green2007provenance}). This general setting has many specific
applications:

\begin{example}
  For any variable set~$X$, the monotone Boolean functions over~$X$ form a
  semiring
  $(\posbool{X},\lor,\land,\false,\true)$.
  We write them as
  propositional formulae, but two equivalent formulae
  (e.g., $x \vee x$ and $x$) denote the same $\posbool{X}$ object.
  On instances where each fact is annotated by its own variable in $X$,
  the $\posbool{X}$-provenance of a query~$q$ is a monotone Boolean function on~$X$
  describing which subinstances satisfy~$q$. As we will see,
  this
  is what we defined in Section~\ref{sec:encodings}, using circuits as
  compact representations.

  The \deft{natural numbers} $\mathbb{N}$ with the usual $+$ and $\times$
  form a
  semiring. On instances where facts are annotated with an element of~$\NN$
  representing a multiplicity,
  the provenance of a query describes its number of matches under the bag
  semantics.

  The \deft{security
  semiring}~\cite{foster2008annotated}
  $\mathbb{S}$ is defined on the ordered set $1_{\mathbb{S}} < C < S < T
  < 0_{\mathbb{S}}$ (respectively: \emph{always available},
  \emph{confidential},       \emph{secret}, \emph{top secret},
  \emph{never available}) as $(\{1, C, S, T, 0\}, \min, \max, 0, 1)$. The provenance of a query for $\mathbb{S}$ denotes the  minimal level of security clearance required to see that it holds.
  The \deft{fuzzy semiring} \cite{amsterdamer2011limitations}
  is $([0, 1], \max, \min, 0, 1)$. The provenance of a query for this semiring is the minimal fuzziness value that has to be tolerated
for facts so that the query is satisfied.

The \deft{tropical semiring} \cite{deutch2014circuits}
is $(\mathbb{N} \sqcup \{\infty\}, \min, +, \infty, 0)$. Fact annotations
are costs, and the tropical provenance of a
query is the minimal cost of the facts required to satisfy it, with multiple uses of a fact being charged multiple times.

  For any set of variables $X$, the \deft{polynomial semiring} $\NN[X]$ is the
  semiring of polynomials with variables in $X$ and coefficients in $\NN$, with
  the usual sum and product over polynomials, and with $0, 1 \in \NN$.
\end{example}

\begin{toappendix}
  
\begin{definition}
  \label{def:datalog}
  A \deft{Datalog query} $P$ over the signature $\sigma$ consists of a signature
  $\sigma_\int$ of \deft{intensional predicates} with a special $0$-ary relation
  $\goal$ and a finite set of \deft{rules} of the form $R(\mathbf{x}) \leftarrow
  R_1(\mathbf{y_1}),\ldots,R_k(\mathbf{y_k})$ where $R \in \sigma_\int$, $R_i
  \in \sigma \sqcup \sigma_\int$ for $1 \leq i \leq k$, and each variable in the
  tuple $\mathbf{x}$ also occurs in some tuple $\mathbf{y_i}$. The left-hand
  (resp., right-hand) side of a Datalog rule is called the \deft{head} (resp.,
  \deft{body}) of the rule.

  A \deft{proof tree} $T$ of a Datalog query $P$ over an instance $I$ is a
  (non-binary) ordered tree
  with nodes annotated by facts over $\sigma \cup \sigma_\int$ on elements of
  $\dom(I)$ and internal nodes annotated by rules, such that the fact of the root of $T$ is $\goal$,     and, for every
  internal node $n$ in $T$ with children $n_1, \ldots, n_m$, the indicated rule
  $R(\mathbf{x}) \leftarrow \Psi(\mathbf{y})$ on $n$ in $P$ is such that there is a
  homomorphism $h$ mapping $R(\mathbf{x})$ to the fact of $n$ and mapping
  $\Psi(\mathbf{y})$ to the facts of the $n_i$. (Note that this definition
  implies that internal nodes are necessarily annotated by a fact
  of~$\sigma_\int$.) We write $I \models P$ if $P$ has a proof tree on~$I$.
\end{definition}

The \deft{semiring provenance} of a Datalog query on an instance is defined as
follows:

\begin{definition}
  \label{def:dlprov}
  Given a semiring $K$ and an instance $I$ where each fact $F$ carries an
  annotation $\annotf(F) \in K$,
  the \deft{provenance} of a Datalog query $P$ on
  $I$ is the following~\cite{green2007provenance}:
  \[
    \bigoplus_{T\text{ proof tree of }P} \bigotimes_{n\text{ leaf of }T}
    \annotf(n).
  \]
  Note that this expression may not always be defined
  depending on the query, instance, and semiring. In particular, the number of
  terms in the sum may be infinite, so that the result cannot necessarily be
  represented in the semiring.
\end{definition}

We now use this to define Datalog queries associated to conjunctive queries
(CQs) and unions of CQs (UCQs), which
will be useful for provenance.

  \begin{definition}
    \label{def:ucqtodl}
    We assume that CQs and UCQs contain no equality atoms. The Datalog query
    $P_q$ \deft{associated} to a CQ $q$ has only one rule, $\goal
    \leftarrow q$. The Datalog query $P_q$ associated to a UCQ $q = \bigvee_i
    q_i$ has rules $\goal \leftarrow q_1$, ..., $\goal
    \leftarrow q_n$.
    
    Observe that in this case the provenance of $P_q$ in the
    sense of Definition~\ref{def:dlprov} is always defined, no matter the
    semiring, as the number of possible derivation trees is clearly finite.
  \end{definition}

\end{toappendix}

Semiring provenance does not support
negation well \cite{geerts2010database,
amsterdamer2011limitations} and is
therefore only defined for \emph{monotone} queries: 
a query $q$ is \deft{monotone} if, for any instances $I
\subseteq I'$, if $I \models q$ then $I' \models q$. Provenance
circuits for semiring provenance are
\deft{monotone} circuits~\cite{deutch2014circuits}:
they do not feature NOT-gates.
We can show that, adapting the
constructions of Section~\ref{sec:provenance} to work with a notion
of \emph{monotone} bNTAs, Theorem~\ref{thm:provenance-encodings} applied to
monotone queries yields a \deft{monotone} provenance circuit:

\begin{theoremrep}{thm:provenance-encodings-mono}
  For any fixed $k \in \NN$ and 
  monotone
  GSO query~$q$, for any
  $\sigma$-instance~$I$ such that $\width(I) \leq k$, one can construct
  in time $O(\card{I})$
  a monotone provenance
  circuit of~$q$ on~$I$
  whose treewidth
  only depends on~$k$ and $q$ (not on $I$).
\end{theoremrep}

\begin{toappendix}
  We first define our notion of monotonicity for $\boolp{\Gamma}$-bNTAs. Intuitively, it
implies that if a $\boolp{\Gamma}$-tree is accepted by the automaton, it will
not be rejected when changing annotations from $\false$ to $\true$.
Formally:

\begin{definition}
\label{def:monotone}
  We consider the partial order $<$ on $\boolp{\Gamma}$ defined by $(\tau, 0) <
  (\tau, 1)$ for all $\tau \in \Gamma$.
  We say that a $\boolp{\Gamma}$-bNTA $A = (Q, F,
\iota, \delta)$ is
\deft{monotone} if for every $\tau \leq \tau'$ in $\boolp{\Gamma}$, we
have $\iota(\tau) \subseteq
\iota(\tau')$ and $\delta(q_1, q_2, \tau) \subseteq \delta(q_1, q_2, \tau')$ for every
$q_1, q_2 \in Q$.
\end{definition}

It is easy to see that the provenance of a \emph{monotone}
$\boolp{\Gamma}$-bNTA~$A$ on any tree~$T$ is a
\deft{monotone} function in the following sense: for any valuations $\nu$ and $\nu'$, if $\nu(g) = \true$ implies $\nu'(g)
= \true$ for all $g \in C_\inp$ (which we write $\nu \leq \nu'$), then
$(\prov{A}{T})(\nu) = \true$ implies $(\prov{A}{T})(\nu') = \true$.
Indeed, for any monotone
$\Gamma$-tree $T$ and valuations $\nu \leq
\nu'$, $\accepts{A}{\nu(T)}$ implies $\accepts{A}{\nu'(T)}$.

We already know that the results of Section~\ref{sec:provenance} also hold for
monotone bNTAs and monotone provenance circuits (see
Proposition~\ref{prp:provenance-circuits-mono}). Hence, if we know that the
monotone GSO sentence~$q$ of interest can be tested (Definition~\ref{def:ftar})
by a \emph{monotone} bNTA, i.e., if the analogue of Theorem~\ref{thm:courcelle}
holds for monotone queries and bNTAs, then clearly the reduction from treelike
instances to trees (Theorem~\ref{thm:provenance-encodings}) generalizes. 
\todo{$E \leq E'$ doesn't seem defined?}
We first prove an auxiliary lemma:

\begin{lemma}
    \label{lem:monencode}
    For every $\boolp{\alphab{k}{\sigma}}$-trees $E$ and $E'$, if $E \leq E'$
    then $\decode{\teval{E}}
    \subseteq \decode{\teval{E'}}$.
  \end{lemma}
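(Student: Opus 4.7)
The plan is to decode $\teval{E}$ and $\teval{E'}$ in parallel, using the same choices of fresh domain elements at every node, and then read off the containment from $E \leq E'$. Concretely, since $E \leq E'$ in the pointwise order on annotated trees, $E$ and $E'$ have the same skeleton and, at each node $n$, their labels agree on the $\alphab{k}{\sigma}$-component: if $\lbl{n}_E = (\tau_n, b_n)$ and $\lbl{n}_{E'} = (\tau_n', b_n')$, then $\tau_n = \tau_n'$ and $b_n \leq b_n'$. Writing $\tau_n = (d_n, s_n)$, the operation $\tevalf$ preserves the domain component $d_n$ at every node, and either keeps or neuters the structure component $s_n$ depending on the annotation.

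Now I apply the top-down decoding procedure of Definition~\ref{def:encoding} to both $\teval{E}$ and $\teval{E'}$ simultaneously. At each node, the fresh elements to pick, as well as the overlap pattern with the parent's domain, depend only on the $d_n$'s, which coincide in both trees. Hence I can make identical choices of fresh elements, thereby producing a common bijection $f$ from the positions $a_1, \dots, a_{2k+2}$ at each node to the actual domain elements of the decoded instances. At each node~$n$, the fact (if any) contributed to $\decode{\teval{E}}$ is $s_n$ if $b_n = 1$ and nothing if $b_n = 0$; similarly the fact contributed to $\decode{\teval{E'}}$ is $s_n$ if $b_n' = 1$ and nothing otherwise. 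Since $b_n \leq b_n'$, every fact contributed at node~$n$ to $\decode{\teval{E}}$ is also contributed at node~$n$ to $\decode{\teval{E'}}$, establishing $\decode{\teval{E}} \subseteq \decode{\teval{E'}}$.

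The one subtle point is the abort-on-duplicate rule that makes $\decode{\cdot}$ return $\bot$ on invalid encodings. Observe that any duplicate fact arising in $\teval{E}$ arises from two nodes both carrying the same $s_n$ with annotation $1$; by $E \leq E'$ these same two nodes still carry annotation~$1$ in $E'$, so the same duplicate arises in $\teval{E'}$, and $\decode{\teval{E}} = \bot$ forces $\decode{\teval{E'}} = \bot$. In the application of this lemma within the proof of Theorem~\ref{thm:provenance-encodings-mono}, both $E$ and $E'$ are obtained as annotated versions of a fixed valid tree encoding $E_I$ produced by Lemma~\ref{lem:encode}, in which distinct non-empty nodes carry distinct facts; so no duplicates can ever arise and both decodings are well-defined instances, and the set-theoretic containment established above is the desired conclusion. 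The main obstacle is really just this bookkeeping around $\bot$ and the observation that the decoding can be carried out coherently on both trees at once; everything else is immediate from the definitions of $\leq$ and $\tevalf$.
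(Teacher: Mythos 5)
Your proof is correct and follows essentially the same route as the paper's (much terser) argument: decode both trees in parallel, note that the domain components coincide so the same fresh elements are chosen, and conclude that every fact contributed by $E$ is also contributed by $E'$ since annotations only increase. Your extra care around the abort-to-$\bot$ case is actually more thorough than the paper, which silently ignores it.
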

  
  \begin{proof}
    We follow the decoding process and notice that, as the domains of the
    $\alphab{k}{\sigma}$ node labels in $E$ and $E'$ are the same, the same fresh
    elements are used throughout, so the only difference between
    $\decode{\teval{E}}$ and
    $\decode{\teval{E'}}$ is about the annotation of the created facts; and we notice
    that whenever $E \leq E'$ then every fact created in $E$ is also created in $E'$.
  \end{proof}

We now prove the analogue of Theorem~\ref{thm:courcelle} for monotone queries
and automata. We immediately generalize the notion of an automaton testing a
query (Definition~\ref{def:ftar}) to $\boolp{\alphab{k}{\sigma}}$-bNTAs and
trees, with $\decode{T}$ for a $\boolp{\alphab{k}{\sigma}}$-tree~$T$ defined as
above.

\begin{lemma}
  \label{lem:monexists}
  Let $k \in \mathbb{N}$, $q$ be a monotone query, and $A$ be a
  $\boolp{\alphab{k}{\sigma}}$-bNTA
  that tests $q$ for treewidth $k$. One can compute in linear time from $A$ a
  $\boolp{\alphab{k}{\sigma}}$-bNTA $A'$ that tests $q$ for treewidth $k$
  and is monotone for the partial order on $\boolp{\alphab{k}{\sigma}}$.
\end{lemma}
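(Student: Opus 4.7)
The plan is to define $A'$ by ``closing downward'' the transitions of $A$ with respect to the partial order on $\boolp{\alphab{k}{\sigma}}$. Concretely, if $A = (Q, F, \iota, \delta)$, I set $A' = (Q, F, \iota', \delta')$ where for every $\tau \in \boolp{\alphab{k}{\sigma}}$ and every $q_1, q_2 \in Q$,
\[
  \iota'(\tau) \defeq \bigcup_{\tau'' \leq \tau} \iota(\tau''),
  \qquad
  \delta'(q_1, q_2, \tau) \defeq \bigcup_{\tau'' \leq \tau} \delta(q_1, q_2, \tau'').
\]
Monotonicity of $A'$ is then immediate from the construction: if $\tau_1 \leq \tau_2$ then the union defining the transitions of $A'$ at $\tau_1$ is taken over a subset of the labels considered at $\tau_2$. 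Since $\boolp{\alphab{k}{\sigma}}$ has size bounded by a constant (depending only on $k$ and $\sigma$), the construction runs in linear time in $|A|$.

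The core argument is then to show that $A'$ still tests $q$. One direction is easy: every transition of $A$ remains a transition of $A'$, so any accepting run of $A$ on a tree $T$ is also an accepting run of $A'$ on $T$; hence if $\decode{T} \models q$ (equivalently, $A$ accepts $T$) then $A'$ accepts $T$. For the converse direction, suppose $A'$ accepts some $T$ via a run $\rho$. By definition of $\iota'$ and $\delta'$, at every node $n$ of $T$ the transition used by $\rho$ is actually a transition of $A$ applied to some label $\tau_n \leq \lbl{n}$. Build $T''$ with the same skeleton as $T$ but with label $\tau_n$ at each node $n$; then $\rho$ is an accepting run of $A$ on $T''$, so $\decode{T''} \models q$. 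Since $T'' \leq T$ pointwise, Lemma~\ref{lem:monencode} gives $\decode{T''} \subseteq \decode{T}$, and by monotonicity of $q$ we conclude $\decode{T} \models q$.

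The main delicate point, which is where the hypotheses all come together, is this last step: I use monotonicity of $q$ to lift the acceptance from the ``witness'' tree $T''$ back to $T$, and I use Lemma~\ref{lem:monencode} to ensure that going down in the pointwise order on annotated trees genuinely corresponds to going down in the subinstance order on what the trees decode to. Note that it is harmless that $T''$ might no longer be an encoding of $\decode{T}$ itself: we only need $\decode{T''}$ to be a well-defined subinstance of $\decode{T}$, and Lemma~\ref{lem:monencode} provides exactly this (in particular, removing facts cannot create new decoding conflicts, so validity of the encoding is preserved).
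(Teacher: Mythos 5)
Your proof is correct and follows essentially the same route as the paper's: close $\iota$ and $\delta$ downward under the partial order on $\boolp{\alphab{k}{\sigma}}$, observe that the forward direction is free since $A$'s transitions are preserved, and for the converse relabel the accepted tree down to a witness tree on which the same run is a run of $A$, then conclude via Lemma~\ref{lem:monencode} and monotonicity of $q$. The only cosmetic difference is that you invoke Lemma~\ref{lem:monencode} by name where the paper asserts the inclusion $\decode{T'} \subseteq \decode{T}$ inline, which is if anything slightly more explicit.
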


\begin{proof}
  Fix $k$, $q$, and $A = (Q, F, \iota, \delta)$ the
  $\boolp{\alphab{k}{\sigma}}$-bNTA.
  We build the
  bNTA $A' = (Q, F, \iota', \delta')$ by setting,
  for all $(\tau, i) \in \boolp{\alphab{k}{\sigma}}$,
  $\iota'((\tau, i)) \defeq \bigcup_{0 \leq j \leq i} \iota((\tau, j))$
  and, for all $q_1, q_2 \in Q$, we pose:
  $\delta'(q_1, q_2, (\tau, i)) \defeq \bigcup_{0 \leq j \leq i} \delta(q_1, q_2, (\tau, j))$.

  Clearly $A'$ is monotone by construction for $\boolp{\alphab{k}{\sigma}}$. Besides, for
  any $\boolp{\alphab{k}{\sigma}}$-tree $T$, if $A$ accepts $T$ then $A'$ accepts $T$, so to prove
  the correctness of~$A'$ it suffices to prove the converse implication.

  Let us consider such a $T$, and consider an accepting run $\rho$ of~$A'$ on $T$.
  We build a new tree $T'$ whose skeleton is that of~$T$ and where for any
  leaf (resp.\ internal node)
  $n' \in T'$ with corresponding node $n \in T$ with
  $\lbl{n}=(\tau,j)$, we set $\lbl{n'}$ in $T'$ to be
  $(\tau, i)$ for some $i$ such that $\rho(n) \in \iota((\tau, i))$
  (resp.\ $\rho(n)
  \in \delta(\rho(\LC(n)), \rho(\RC(n)), (\tau, i))$), the existence of such an $i$
  being guaranteed by the definition of~$\iota'$ (resp.\ $\delta'$).

  We now observe that, by construction, $\rho$ is a run of~$A$ on $T'$, and it is
  still accepting, so that $T'$ is accepted by $A$. Hence, $\decode{T'}
  \models q$. But now we observe that, once again by construction, for every
  node $n'$ of~$T'$ with label $\tau'$ and with corresponding node $n$ in $T$
  with label $\tau$, it holds that $\tau' \leq \tau$. Hence we have $T'
  \leq T$, for which we can easily prove that $\decode{T'} \subseteq \decode{T}$, and
  thus, by monotonicity of~$q$, we must have $\decode{T} \models P$. Thus, $A$
  accepts $T$, proving the desired result.
\end{proof}

  From our earlier explanations, this proves
  Theorem~\ref{thm:provenance-encodings-mono}.
\end{toappendix}

Hence, for monotone GSO queries for which~\cite{green2007provenance} defines a
notion of semiring provenance (e.g., those that can be encoded to \deft{Datalog}, a
recursive query language that subsumes UCQs), our provenance
$\prov{q}{I}$ is easily seen to match 
the provenance of~\cite{green2007provenance}, specialized to the semiring
$\posbool{X}$ of monotone Boolean functions.
Indeed, both provenances obey the same intrinsic definition: they are the
function that maps to~$1$ exactly the valuations corresponding to subinstances
accepted by the query. Hence, we can understand
Theorem~\ref{thm:provenance-encodings-mono} as a tractability result for
$\posbool{X}$-provenance (represented as a circuit) on treelike instances.

Of course, the definitions of~\cite{green2007provenance}
go beyond $\posbool{X}$ and extend to arbitrary commutative semirings. We now
turn to this more general question.

\myparagraph{}{$\NN[X]$-provenance for UCQs}

First, we note that,
as shown by~\cite{green2007provenance}, the provenance of Datalog
queries for \emph{any}
semiring~$K$ can be computed in the semiring~$\NN[X]$, on instances where each
fact is annotated by its own variable in~$X$. Indeed, the provenance can then be
\emph{specialized} to~$K$, and the actual fact
annotations in~$K$, once known, can be used to replace the variables in the result,
thanks to a \emph{commutation with homomorphisms} property. Hence, \emph{we
restrict to $\NN[X]$-provenance} and to instances of this form, 
which covers all the examples above.

Second, in our setting of treelike instances, we evaluate queries using tree
automata, which are compiled from logical formulae with no prescribed execution
plan. For the semiring $\NN[X]$, this is hard to connect to the general definitions of provenance
in~\cite{green2007provenance}, which are mainly designed for
positive relational
algebra operators or Datalog queries. Hence, to generalize our constructions to
$\NN[X]$-provenance, \emph{we now restrict our query language to UCQs},
assuming without loss of generality that they contain no equality
atoms, 
We comment at the end of this section on the difficulties arising for richer
query languages.

We formally define the \deft{$\NN[X]$-provenance} of UCQs on relational instances by encoding them
straightforwardly to Datalog and using the Datalog provenance definition
of~\cite{green2007provenance}. The resulting provenance can be rephrased as
follows:

\begin{definition}
  \label{def:cqnprov}
  The \deft{$\Nx$-provenance} of a UCQ $q= \bigvee_{i=1}^n \exists
  \mathbf{x}_i \, q_i(\mathbf{x}_i)$ (where $q_i$ is a conjunction of atoms
  with free variables $\mathbf{x}_i$)
  on an instance $I$ is defined as:\\
  {\null\hfill $\provn{q}{I} \defeq \nsum_{i=1}^n\nsum_{\substack{f : \mathbf{x}_i \to
      \dom(I) \mathrm{~such~that~} I
  \models q_i(f(\mathbf{x}_i))}}
\nprod_{A(\mathbf{x}_i) \in q_i} A(f(\mathbf{x}_i))$.\hfill\null}\\
  In other words, we sum over each disjunct, and over each match of the
  disjunct; for each match, we
  take the product, over the atoms of the disjunct, of their image fact in~$I$,
  identifying each fact to the one variable in~$X$ that annotates it. 
\end{definition}

\begin{toappendix}
\begin{proposition}
  For any UCQ $q$,
  $\provn{q}{I}$, is the
  $\Nx$-provenance in the sense of~\cite{green2007provenance} of the associated
  Datalog query $P_q$ on $I$.
\end{proposition}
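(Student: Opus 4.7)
The plan is to unfold the definition of Datalog semiring provenance (Definition~\ref{def:dlprov}) for the specific Datalog query $P_q$ associated to a UCQ~$q = \bigvee_{i=1}^n q_i$ (Definition~\ref{def:ucqtodl}), and check that it coincides term by term with the expression in Definition~\ref{def:cqnprov}. Throughout, we work on an instance $I$ annotated by $\Nx$ where each fact $F$ carries its own variable as annotation $\annotf(F)$.

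First I would analyze the shape of the proof trees of $P_q$ on $I$. The only intensional predicate is the $0$-ary relation $\goal$, and every rule of $P_q$ has the form $\goal \leftarrow q_i$, whose body contains only extensional atoms. By Definition~\ref{def:datalog}, a proof tree $T$ therefore has its root labeled by $\goal$ and decorated with a chosen rule $\goal \leftarrow q_i$; the homomorphism condition supplies a map $h$ from $q_i$ to $\dom(I)$ such that each child $n_j$ of the root is a leaf labeled by the image of an atom $A(\mathbf{x}_i) \in q_i$ under $h$. Since all of these images are extensional facts (no intensional predicate occurs in the body), the proof tree is exactly of depth one and fully determined by the pair $(i, h)$, where $h$ must be a homomorphism such that $I \models q_i(h(\mathbf{x}_i))$. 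Conversely, every such pair $(i, h)$ gives rise to a unique proof tree, so we obtain a bijection between proof trees of $P_q$ on $I$ and pairs $(i, f)$ ranging as in Definition~\ref{def:cqnprov}.

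Next I would translate this bijection through the provenance formula of Definition~\ref{def:dlprov}. For the proof tree associated to $(i, f)$, the multiset of leaves is exactly $\{f(A(\mathbf{x}_i)) : A(\mathbf{x}_i) \in q_i\}$, counted with multiplicity according to how many times an atom appears in $q_i$. The product $\nprod_{n\text{ leaf of }T} \annotf(n)$ therefore equals $\nprod_{A(\mathbf{x}_i) \in q_i} A(f(\mathbf{x}_i))$, where we identify each fact with the variable annotating it, exactly as in Definition~\ref{def:cqnprov}. Summing over all proof trees amounts, via the bijection, to summing first over $i \in \{1, \dots, n\}$ and then over all homomorphisms $f : \mathbf{x}_i \to \dom(I)$ witnessing $I \models q_i(f(\mathbf{x}_i))$, matching the outer double sum of Definition~\ref{def:cqnprov}.

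The proof is essentially a verification, so there is no genuine obstacle; the only point requiring care is correctly accounting for multiplicities when the same extensional atom appears several times in some $q_i$, which yields repeated leaves in the proof tree and hence the correct exponent in the product, matching the fact that Definition~\ref{def:cqnprov} indexes the product by \emph{atoms} of $q_i$ rather than by the distinct facts they are mapped to.
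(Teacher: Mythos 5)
Your proof is correct and follows essentially the same route as the paper's: both identify the proof trees of $P_q$ as depth-one trees in bijection with pairs of a disjunct and a variable assignment (with injectivity because every variable occurs in some atom), and then match the sum over proof trees and the product over leaves term by term with Definition~\ref{def:cqnprov}. Your write-up is merely more explicit about the shape of the proof trees and the handling of repeated atoms.
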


\begin{proof}
  The proof trees of each CQ within $q$ have a fixed structure, the only
  unspecified part being
  the assignment of variables. It is then clear that each variable assignment
  gives a proof tree, and this mapping is injective because all variables in the
  assignment occur in the proof tree. So for each CQ, we are summing on the same thing, and
  each term of the sum is the leaves of the proof tree, which is what we
  imposed. Further, the set of proof trees of $q$ is the union of the set
  of proof
  trees of each CQ, which means we can separate the sum for each CQ.
\end{proof}
\end{toappendix}

We know that $\provn{q}{I}$ enjoys all the usual properties of
provenance: it can be specialized to $\posbool{X}$, yielding back the
previous definition; it can be evaluated in the $\mathbb{N}$ semiring to
count the number of matches of a query; etc.

\begin{example}
  Consider the
  instance
  $I=\{F_1 \defeq R(a, a), \allowbreak F_2 \defeq R(b, c), \allowbreak F_3
  \defeq R(c, b)\}$
  and the CQ $q: \exists x y \, R(x, y) R(y, x)$.
  We have
  $\provn{q}{I} = F_1^2 + 2 F_2 F_3$
  and
  $\prov{q}{I} = F_1 \vee (F_2 \wedge F_3)$.
  Unlike $\posbool{X}$-provenance, $\NN[X]$-provenance 
  can describe that multiple atoms of the query map to the same fact, and that
  the same subinstance is obtained with two different query matches.
  Evaluating in the semiring $\mathbb{N}$ with facts annotated by~$1$,
  $q$ has $1^2+2\times 1\times 1=3$ matches.
\end{example}

\myparagraph{}{Provenance circuits for trees}
Guided by this definition of $\NN[X]$-provenance, we generalize the construction
of Section~\ref{sec:provenance} of provenance on trees to a more expressive provenance
construction, before we extend it to treelike instances as in
Section~\ref{sec:encodings}.

Instead of considering $\boolp{\Gamma}$-trees, we consider
$\natp{\Gamma}{p}$-trees for $p\in\mathbb{N}$, whose label set is $\Gamma \times \{0, \ldots,
p\}$ rather than $\Gamma \times \{0, 1\}$. Intuitively, rather than
uncertainty about whether facts are present or missing, we represent
uncertainty about the \emph{number of available copies} of facts, as UCQ
matches may include the same fact multiple times. We impose
on~$\boolp{\Gamma}$ the partial order $<$ defined by $(\tau, i) < (\tau, j)$
for all $\tau \in \Gamma$ and $i < j$ in $\{0, \ldots, p\}$, and call a
$\natp{\Gamma}{p}$-bNTA $A = (Q, F,
\iota, \delta)$
\deft{monotone} if for every $\tau < \tau'$ in $\natp{\Gamma}{p}$, we
have $\iota(\tau) \subseteq
\iota(\tau')$ and $\delta(q_1, q_2, \tau) \subseteq \delta(q_1, q_2, \tau')$ for every
$q_1, q_2 \in Q$. We write
$\pval{T}{p}$ for the set of all \emph{$p$-valuations} $\nu: V \to \{0,
\ldots, p\}$ of a $\Gamma$-tree~$T$. We write $\naccepts{A}{T}$ for a
$\natp{\Gamma}{p}$-tree~$T$ and $\natp{\Gamma}{p}$-bNTA~$A$ to denote the
number of accepting runs of~$A$ on~$T$.
We can now define:

\begin{definition}
  \label{def:cuprov}
  The \emph{$\Nx$-provenance} 
  of a
  $\natp{\Gamma}{p}$-bNTA $A$ on a $\Gamma$-tree~$T$
  is\\
  \hspace*{2em}$\provn{A}{T} \defeq
  \nsum_{\nu \in \pval{T}{p}}
  \naccepts{A}{\nu(T)}
  \nprod_{n \in T} n^{\nu(n)}$\\
  where each node $n \in T$ is identified with its own variable in~$X$.
  Intuitively, we sum over all valuations~$\nu$ of~$T$ to $\{0, \ldots, p\}$, and
  take the product of the tree nodes to the power of their valuation in $\nu$,
  with the number of accepting runs of~$A$ on~$\nu(T)$ as coefficient; in particular, the term
  for $\nu$ is $0$ if $A$ rejects $\nu(T)$.
\end{definition}

This definition specializes in $\posbool{X}$ to our earlier definition of
$\prov{A}{T}$, but extends it with the two features of $\NN[X]$:
multiple copies of the same nodes (represented as $n^{\nu(n)}$)
and multiple derivations (represented as $\naccepts{A}{\nu(T)}$). To construct
this general provenance, we need \emph{arithmetic
circuits}:

\begin{definition}
  A \deft{$K$-circuit} for semiring $(K, \oplus, \otimes, 0_K, 1_K)$ is a
  circuit with $\oplus$- and $\otimes$-gates instead of OR- and AND-gates (and
  no analogue of NOT-gates), whose
  input gates stand for elements of $K$. As
  before, the constants $0_K$ and $1_K$ can be written as $\oplus$- and
  $\otimes$-gates with no inputs. The element of $K$ \deft{captured} by a
  $K$-circuit is the element captured by its distinguished gate, under the recursive
  definition that $\oplus$- and $\otimes$-gates capture the sum and product of
  the elements captured by their operands, and input gates capture their own
  value.
\end{definition}

We now show an efficient construction for such provenance circuits, generalizing
the monotone analogue of Proposition~\ref{prp:provenance-circuits}. The proof
technique is to replace AND- and OR-gates by $\otimes$- and
$\oplus$-gates, and to consider possible annotations in $\{0, \ldots, p\}$
instead of $\{0, 1\}$. The correctness is proved by induction via a general
identity relating the
provenance on a tree to that of its left and right subtrees.

\begin{theorem}\label{thm:provenance-ncircuits}
  For any fixed $p\in\NN$, 
  for a $\natp{\Gamma}{p}$-bNTA~$A$ and a
  $\Gamma$-tree $T$,
  a $\Nx$-circuit capturing $\provn{A}{T}$ can be constructed
  in time $O(\card{A} \cdot \card{T})$.
\end{theorem}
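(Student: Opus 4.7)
The plan is to generalize the monotone construction of Proposition~\ref{prp:provenance-circuits-mono} by replacing AND/OR gates with $\otimes$/$\oplus$ gates and by replacing the Boolean split between $\false$ and $\true$ annotations with a split over the $p+1$ possible annotations in $\{0,\ldots,p\}$. Concretely, for each node $n \in T$ I would first precompute in the circuit the $p+1$ monomials $n^0, n^1, \ldots, n^p$ (where $n^0$ is a $1$-gate, $n^1$ is the input gate for $n$, and $n^i$ is an $\otimes$-gate of $n^{i-1}$ and $n^1$); this costs $O(p)$ gates per node, hence $O(p \cdot \card{T})$ overall, which is linear in $\card{T}$ for fixed $p$.

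Next, I would introduce, for every node $n \in T$ and every state $q \in Q$, a gate $g^q_n$ intended to capture the ``partial'' $\Nx$-provenance
\[
  \Pi^q_n \;\defeq\; \nsum_{\nu \in \pval{T_n}{p}} \naccepts{A_q}{\nu(T_n)} \; \nprod_{m \in T_n} m^{\nu(m)},
\]
where $T_n$ is the subtree of~$T$ rooted at $n$ and $A_q$ is $A$ with $q$ as the sole accepting state. For a leaf $n$, I set $g^q_n$ to be the $\oplus$-gate of the $n^i$ for those $i \in \{0,\ldots,p\}$ such that $q \in \iota(\lbl{n}, i)$ (an empty $\oplus$ being~$0_K$, as required). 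For an internal node $n$ with children $\LC(n)$ and $\RC(n)$, I would create, for every triple $(q_\LCf, q_\RCf, i) \in Q^2 \times \{0, \ldots, p\}$ such that $q \in \delta(q_\LCf, q_\RCf, (\lbl{n}, i))$, an intermediate $\otimes$-gate of $n^i$, $g^{q_\LCf}_{\LC(n)}$, and $g^{q_\RCf}_{\RC(n)}$, and take $g^q_n$ to be the $\oplus$-gate of all these intermediates. The output gate is then the $\oplus$-gate of $g^q_r$ for $q \in F$, where $r$ is the root of $T$.

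The main work is to verify by induction on $n$ that the gate $g^q_n$ thus constructed captures $\Pi^q_n$. The base case on leaves is immediate, as $\naccepts{A_q}{\nu(T_n)}$ is the indicator of $q \in \iota(\lbl{n}, \nu(n))$. The inductive step hinges on the factoring identity
\begin{align*}
\Pi^q_n
&= \nsum_{i=0}^{p} \; n^i \nsum_{\substack{q_\LCf, q_\RCf \in Q \\ q \in \delta(q_\LCf, q_\RCf, (\lbl{n}, i))}} \nsum_{\substack{\nu_\LCf \in \pval{T_{\LC(n)}}{p}\\\nu_\RCf \in \pval{T_{\RC(n)}}{p}}} \naccepts{A_{q_\LCf}}{\nu_\LCf(T_{\LC(n)})} \, \naccepts{A_{q_\RCf}}{\nu_\RCf(T_{\RC(n)})} \nprod_{m \in T_{\LC(n)}} m^{\nu_\LCf(m)} \nprod_{m \in T_{\RC(n)}} m^{\nu_\RCf(m)} \\
&= \nsum_{i=0}^p \; n^i \nsum_{\substack{q_\LCf, q_\RCf \in Q \\ q \in \delta(q_\LCf, q_\RCf, (\lbl{n}, i))}} \Pi^{q_\LCf}_{\LC(n)} \otimes \Pi^{q_\RCf}_{\RC(n)},
\end{align*}
which comes from splitting a valuation of $T_n$ into its value $i$ at $n$ and its restrictions to the two subtrees, and from the fact that a run of $A_q$ on $\nu(T_n)$ is exactly the choice of a transition $q \in \delta(q_\LCf, q_\RCf, (\lbl{n}, i))$ together with runs of $A_{q_\LCf}$ and $A_{q_\RCf}$ on the two subtrees, so the number of accepting runs multiplies. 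Plugging in the inductive hypothesis and comparing with the definition of $g^q_n$ gives $g^q_n = \Pi^q_n$. Applying this at the root and summing over $q \in F$ yields $\provn{A}{T}$ at the output gate.

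The main obstacle I expect is just setting up the factoring identity cleanly and checking that the expansion of $\naccepts{A_q}{\nu(T_n)}$ as a sum of products over transitions is indeed the one that matches the circuit construction; once the identity is written down, the induction is routine. For the complexity bound, each internal node of~$T$ contributes at most $\card{Q}^2 (p+1)$ intermediate $\otimes$-gates and at most $\card{Q}$ outer $\oplus$-gates; together with the $O(p)$ power gates per node, the total number of gates and wires is $O(\card{A} \cdot \card{T})$ for fixed $p$, as required.
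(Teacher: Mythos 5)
Your proposal is correct and follows essentially the same route as the paper: one gate per node $n$ and state $q$ capturing the partial provenance $\provn{A_q}{T_n}$ of the subtree rooted at~$n$, precomputed power gates for $n^0,\ldots,n^p$, and the same recursive factoring identity (the ``$\all$'' variant of the paper's Lemma~\ref{lem:proveqn}) driving the induction. The differences are cosmetic: the paper actually proves a more general statement restricted to valuations of total weight~$l$ (Theorem~\ref{thm:provenance-ncircuits2}, needed later for Theorem~\ref{thm:provenance-nencodings}) and obtains this theorem as the $l=\all$ special case, and it realizes the powers $n^i$ via $i$ unary copy gates rather than your chain $n^i = n^{i-1}\otimes n^1$, whose $i=2$ step needs such a copy gate anyway because the wire set of a circuit is a set rather than a multiset.
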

  
\begin{toappendix}

  To prove Theorem~\ref{thm:provenance-nencodings}, we need
  a generalization of $\NN[X]$-provenance of automata on trees.
  Indeed, while Definition~\ref{def:cuprov} is natural for trees, using it to
  define the provenance of queries on treelike instances would lead to a subtle
  problem.
  The reason is that this provenance describes \emph{all} valuations of the tree
  for which the automaton accepts (up to the maximal multiplicity~$p$), and not
  the \emph{minimal} ones. For UCQs, this would intuitively mean that the
  resulting
  $\NN[X]$-provenance would reflect \emph{all} subinstances satisfying the
  query,
  not the minimal ones. This does not match Definition~\ref{def:cqnprov} and is
  undesirable: for instance, the specialization of such a provenance to $\NN$
  would have nothing to do with the number of query matches. The reason why this
  problem did not occur before is because both choices of definition collapse in
  the $\posbool{X}$ setting of the previous sections; so this problem is
  specific
  to the setting of general semirings such as~$\NN[X]$, which are not
  necessarily
  \emph{absorptive}~\cite{deutch2014circuits}.

However, in the case of UCQs, we can introduce a generalization of
Definition~\ref{def:cuprov}, for which we can show the analogue of
Theorem~\ref{thm:provenance-ncircuits}, and that will give us the right
provenance once lifted to treelike instances as in Section~\ref{sec:encodings}.
  For
  a $\Gamma$-tree~$T$ and $p, l \in \NN$, we
  introduce for $l\in\mathbb{N}$ the set of $p$-valuations that sum to $l$:
  $\pvalr{T}{p}{l} \defeq \{\nu \in \pval{T}{p} \mid \sum_{n \in T} \nu(n) =
  l\}$. We take $\pvalr{T}{p}{\all}$ to be $\pval{T}{p}$. We can now generalize
  Definition~\ref{def:cuprov} as follows:

  \begin{definition}
    \label{def:cuprovl}
    Let $l\in\NN\cup\{\all\}$.
    The \emph{$\Nx$-$l$-provenance}
    of a
    $\natp{\Gamma}{p}$-bNTA $A$ on a $\Gamma$-tree~$T$
    is:
    \[
    \provnr{A}{T}{l} \defeq
    \nsum_{\nu \in \pvalr{T}{p}{l}}
    \naccepts{A}{\nu(T)}
  \nprod_{n \in T} n^{\nu(n)}.\]

    Note that $\provn{A}{T} = \provnr{A}{T}{\all}$, so this definition indeed
    generalizes Definition~\ref{def:cuprov}.
  \end{definition}

We first prove the key lemma about the propagation of provenance throughout
encodings, which will be used in the inductive step of our correctness proof of
provenance circuits:

  \begin{lemmarep}{lem:proveqn}
  For any $l, p \in \NN$, $l \leq p$,
  for any non-singleton $\Gamma$-tree $T=(V,\LC,\RC,\lblf)$, letting $T_{\LCf}$ and $T_{\RCf}$ be its left
  and right subtrees and $n_\r$ be its root node, for any
  $\natp{\Gamma}{p}$-bNTA $A=(Q,F,\iota,\delta)$, writing $A_q$ for all $q \in Q$ the bNTA obtained from $A$ by
  making $q$ the only final state, we have:
  \[
    \provnr{A}{T}{l} =
    \nsum_{\substack{
    l_1 + l_2 + l' = l\\
    q_{\LCf}, q_{\RCf} \in Q\\
    q \in \delta(q_{\LCf}, q_{\RCf}, (\lblf(n_\r), l'))\\
    }}
    \provnr{A_{q_{\LCf}}}{T_{\LCf}}{l_1}
    \ntimes
    \provnr{A_{q_{\RCf}}}{T_{\RCf}}{l_2}
    \ntimes
    n_\r^{l'}
  \]
\end{lemmarep}

\begin{proof}
  We first observe the following identity, for any
  $\nu \in \pvalr{T}{p}{l}$ and any $q\in Q$, by definition of automaton runs:
  \[
    \naccepts{A_q}{\nu(T)} =
    \sum_{\substack{q_{\LCf}, q_{\RCf} \in Q\\
    q \in \delta(q_{\LCf}, q_{\RCf}, (\lblf(n_\r), \nu(n_\r)))}}
    \naccepts{A_{q_{\LCf}}}{\nu(T_{\LCf})}
    \cdot
    \naccepts{A_{q_{\RCf}}}{\nu(T_{\RCf})}
  \]

  We then observe that $\pvalr{T}{p}{l}$ can be decomposed as
  \[\bigsqcup_{l_1 + l_2 + l' = l} \pvalr{T_{\LCf}}{p}{l_1} \times
  \pvalr{T_{\RCf}}{p}{l_2} \times \{n_\r\mapsto
l'\}\] as a
  valuation of $T$ summing to $l$ can be chosen as a valuation of its left and right subtree
  and of $n_\r$ by assigning the possible weights.
  We also observe that the product over $n \in T$ can be split in a
  product on $n_\r$, on $n \in T_{\LCf}$ and on $n \in T_{\RCf}$. We can thus
  rewrite as follows:
  \[
    \provnr{A}{T}{l} =
    \nsum_{\substack{\nu_{\LCf} \in \pvalr{T_{\LCf}}{p}{l_1}\\
    \nu_{\RCf} \in \pvalr{T_{\RCf}}{p}{l_2}\\
    l_1 + l_2 + l' = l}}
    \nsum_{\substack{q_{\LCf}, q_{\RCf} \in Q\\
    q \in \Delta}}
    m_{\LCf} \cdot m_{\RCf}
    \left(
      \nprod_{n \in T_{\LCf}} n^{\nu_{\LCf}(n)}
    \right)
    \left(
      \nprod_{n \in T_{\RCf}} n^{\nu_{\RCf}(n)}
    \right)
    n_\r^{l'}
  \]
  where we abbreviated
  $m_{\LCf} \defeq \naccepts{A_{q_{\LCf}}}{\nu(T_{\LCf})}$,
  $m_{\RCf} \defeq \naccepts{A_{q_{\RCf}}}{\nu(T_{\RCf})}$, and
  $\Delta \defeq \delta(q_{\LCf}, q_{\RCf}, (\lblf(n_\r), l'))$.

  Reordering sums and performing factorizations, we obtain:
  \[
    \provnr{A}{T}{l} =
    \nsum_{\substack{q_{\LCf}, q_{\RCf} \in Q\\
    l_1 + l_2 + l' = l\\
    q \in \Delta}}
    \left(
    \nsum_{\nu_{\LCf} \in \pvalr{T_{\LCf}}{p}{l_1}}
    m_{\LCf}
      \nprod_{n \in T_{\LCf}} n^{\nu_{\LCf}(n)}
    \right)
    \left(
    \nsum_{\nu_{\RCf} \in \pvalr{T_{\RCf}}{p}{l_2}}
    m_{\RCf}
      \nprod_{n \in T_{\RCf}} n^{\nu_{\RCf}(n)}
    \right)
    n_\r^{l'}.
  \]

  Plugging back the definition of provenance yields the desired claim.
\end{proof}

We then prove the following variant of
Theorem~\ref{thm:provenance-ncircuits}:

\begin{theorem}
  \label{thm:provenance-ncircuits2}
  For any fixed $p\in\NN$ and $l \in \NN\cup\{\all\}$, a
  \emph{$\Nx$-$l$-provenance circuit}
  for a $\natp{\Gamma}{p}$-bNTA~$A$ and a
  $\Gamma$-tree $T$ (i.e., a $\Nx$-circuit capturing $\provnr{A}{T}{l}$) can be constructed in time $O(\card{A} \cdot \card{T})$.
\end{theorem}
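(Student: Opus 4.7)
The plan is to generalize the construction of Proposition~\ref{prp:provenance-circuits} (or rather its monotone version, Proposition~\ref{prp:provenance-circuits-mono}), replacing Boolean $\wedge$/$\vee$ gates by arithmetic $\otimes$/$\oplus$ gates of $\Nx$, and --- in the case where $l\in\NN$ --- by indexing the gates associated to each tree node and each state by an additional parameter $l'\in\{0,\ldots,l\}$ that tracks the running sum of valuation weights. For a $\natp{\Gamma}{p}$-bNTA $A=(Q,F,\iota,\delta)$ and $\Gamma$-tree $T=(V,\LC,\RC,\lblf)$, I identify each node $n\in V$ with its own variable and use it as an input gate of the circuit; powers $n^{l''}$ will just be $\otimes$-chains of $l''$ copies of this input gate (a constant-size subcircuit, since $l''\leq p$ is bounded).

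For each $n\in T$, each $q\in Q$, and each $l'\in\{0,\ldots,l\}$, I build a gate $g^{q,l'}_n$ intended to capture $\provnr{A_q}{T_n}{l'}$. At a leaf $n$, I define $g^{q,l'}_n$ to be $n^{l'}$ when $q\in\iota((\lblf(n),l'))$ and a constant $0$-gate otherwise, matching Definition~\ref{def:cuprovl} for the unique $p$-valuation $\nu(n)=l'$ of a singleton tree. At an internal node $n$, I set $g^{q,l'}_n$ to be an $\oplus$-gate whose summands are $\otimes$-gates of the form $g^{q_L,l_1}_{\LC(n)}\otimes g^{q_R,l_2}_{\RC(n)}\otimes n^{l''}$, ranging over all tuples $(q_L,q_R,l_1,l_2,l'')$ with $l_1+l_2+l''=l'$ and $q\in\delta(q_L,q_R,(\lblf(n),l''))$. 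The output gate is then the $\oplus$ of all $g^{q,l}_r$ with $q\in F$, where $r$ is the root of $T$. For the case $l=\all$, I drop the $l'$ index and build a single gate $g^q_n$ per state per node: at leaves, $g^q_n=\bigoplus_{l''\in\{0,\ldots,p\},\, q\in\iota((\lblf(n),l''))} n^{l''}$, and at internal nodes I take $\bigoplus g^{q_L}_{\LC(n)}\otimes g^{q_R}_{\RC(n)}\otimes n^{l''}$ over the same indexing as before.

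Complexity is immediate: in either case, the number of gates created at each node is bounded by a function of $|A|$, $p$, and $l$, so the total circuit size is $O(|A|\cdot|T|)$ with constants depending on the fixed parameters $p$ and $l$. Correctness is proved by bottom-up induction on $n\in T$, showing that $g^{q,l'}_n$ (resp.\ $g^q_n$) computes $\provnr{A_q}{T_n}{l'}$ (resp.\ $\provn{A_q}{T_n}$); the base case follows directly from Definition~\ref{def:cuprovl} on a singleton tree, and the inductive step matches exactly the recursive identity established in Lemma~\ref{lem:proveqn}, once the induction hypothesis is applied to the two subtrees. The main technical care lies in verifying that the summands I wire into each internal $\oplus$-gate are in bijection with the tuples appearing in Lemma~\ref{lem:proveqn}, and that the factor $n^{l''}$ is correctly reintroduced at each level so that each node contributes its annotation exactly once to each monomial --- this is where the $\all$ case is slightly easier, since there is no constraint $l_1+l_2+l''=l'$ to enforce.
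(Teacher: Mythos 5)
Your construction is correct and is essentially identical to the paper's own proof: the same gates $g^{q,l'}_n$ indexed by node, state, and partial weight, the same leaf and internal-node wiring driven by Lemma~\ref{lem:proveqn}, the same simplification for $l=\all$, and the same complexity accounting. (The paper's text even contains a small slip, calling the output gate an $\ntimes$-gate of the $g^{q,l_0}_{n_\r}$ for $q\in F$ where a $\nplus$-gate is meant, as its own correctness argument confirms; your version states this correctly.)
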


\begin{proof}
  We modify the proof of Proposition~\ref{prp:provenance-circuits-mono}.

  We fix $l_0$ to be the $l$ provided as input. We will first assume
  $l_0\in\NN$, we explain at the end of the proof how to handle the
  (simpler) case $l_0=\all$.

  For every node~$n$ of the tree~$T$, we create one input gate $g^{\i}_n$ in $C$
  (identified to $n$), and for $j \in \{0, \ldots, p\}$, we create a gate
  $g^{\i, j}_n$ which is a $\ntimes$-gate of $j$ copies of the input gate
  $g^{\i}_n$. (By ``copies'' we mean $\ntimes$- or $\nplus$-gates whose sole
  input is $g^{\i}_n$, this being a technical necessity as $K$-circuits are
  defined as graphs and not multigraphs.) In particular, $g^{\i,0}_n$ is always
  a $1$-gate.

  We create one gate $g^{q,l}_n$ for $n \in T$, $q \in Q$, and $0 \leq l \leq
  l_0$.

  For leaf nodes $n$, for $q \in Q$, we set $g^{q,l}_n$ to be $g^{\i,l}_n$ if
  $q \in \iota(\lblf(n), l)$ and a $0$-gate otherwise.
  
  For internal nodes $n$, for every pair $q_{\LCf}, q_{\RCf} \in Q$ (that
  appears as input states of a transition of~$\delta$) and $0 \leq l_1, l_2 \leq
  l_0$ such that $l_1 + l_2 \leq l_0$, we create the gate
  $g^{q_{\LCf}, l_1, q_{\RCf}, l_2}_n$ as an $\ntimes$-gate of
  $g^{q_{\LCf},l_1}_{\LC(n)}$
  and $g^{q_{\RCf},l_2}_{\RC(n)}$, and, for $0 \leq l' \leq l_0$ such that $l_1 + l_2
  + l' \leq l_0$, we create one gate
  $g^{q_{\LCf}, l_1, q_{\RCf},l_2,l'}_n$ as the $\ntimes$-gate of $g^{q_{\LCf}, l_1,
  q_{\RCf}, l_2}_n$ and $g^{\i,l'}$.
  For $0 \leq l \leq l_0$, we set $g^{q,l}_n$ to be a~$\nplus$-gate of all the $g^{q_{\LCf},
  l_1,q_{\RCf},l_2,l'}_n$ such that $q \in \delta(q_{\LCf}, q_{\RCf}, (\lblf(n),
  l'))$ and $l_1 + l_2 + l' = l$.

  We define the distinguished gate $g_0$ as an $\ntimes$-gate of the
  $g^{q,l_0}_{n_\r}$ where $n_\r$ is the root of~$T$. The construction is again in
  $O(\card{A} \cdot \card{T})$ for fixed $l$ and~$p$.

  \medskip

  To prove correctness, we show by induction that the element captured by $g^{q,l}_n$ is
  $\provnr{A_q}{T_n}{l}$ where $A_q$ is $A$ with $q$ as the only final state,
  and $T_n$ is $T$ rooted at $n$.

  As a general property, note that for any node $n$, the value captured by
  $g^{\i,j}_n$ for $0 \leq j \leq p$ is $n^j$.

  For a leaf node $n$, $\provnr{A_q}{T_n}{l} = n^l$ if $q \in \iota(\lblf(n),
  l)$ and $0$ otherwise, which is the value captured by $g^{q,l}_n$.
  
  For an internal node $n$, the claim follows immediately by
  Lemma~\ref{lem:proveqn}, applying the induction hypothesis to
  $g^{q_\LC,l_1}_{\LC(n)}$ and $g^{q_\RC,l_2}_{\RC(n)}$.

  We conclude because clearly we have $\provnr{A}{T}{l_0} = \nsum_{q \in F}
  \provnr{A_q}{T}{l_0}$, so the value captured by $g_0$ is indeed correct.

  \medskip

  Now, if $l_0=\all$, we do the same construction, but we only need a single
  node $g_n^q$ for $n\in T$ and $q\in Q$ instead of $l_0+1$ nodes
  $g_n^{q,l}$. For leaf nodes, $g_n^q$ is the $\nplus$-node of the
  $g_n^{\i,l}$; for internal nodes,
  $g_n^{q}$ is simply the $\nplus$-gate of all
  $g_n^{q_\LC,q_\RC,l}$ gates with
  $q\in\delta(q_\LC,q_\RC,(\lambda(n),l))$,
  each of them being an $\ntimes$-gate of the
  $g_n^{q_\LC,q_\RC}$ gate and the $g^{\i,l}$ gate. Finally,
  $g_n^{q_\LC,q_\RC}$ is the $\ntimes$-gate of $g_{\LC(n)}^{q_\LC}$ and
  $g_{\RC(n)}^{q_\RC}$. Correctness is shown using a variant of
  Lemma~\ref{lem:proveqn} on $\provnr{A}{T}{\all}$ which replaces $l_1 + l_2 +
  l' = l$ in the sum subscript by $0 \leq l' \leq p$.
\end{proof}

\end{toappendix}

\myparagraph{}{Provenance circuit for instances}
Moving back to provenance for UCQs on bounded-treewidth 
instances, we
obtain a linear-time provenance construction:

\begin{theoremrep}{thm:provenance-nencodings}
  For any fixed $k\in \NN$ and UCQ $q$, for any $\sigma$-instance $I$ such that
  $\width(I) \leq k$, one can construct a $\Nx$-circuit  that captures $\provn{q}{I}$ 
  in time~$O(\card{I})$.
\end{theoremrep}

The proof technique is to construct for each disjunct $q'$ of $q$ a
$\natp{\Gamma}{p}$-bNTA~$A_{q'}$, where $\Gamma \defeq \alphab{k}{\sigma}$ is the alphabet for tree encodings
of width~$k$,
and $p$ is the maximum number
of atoms in a disjunct of~$q$. We want $A_{q'}$ to test $q'$ on tree encodings over $\Gamma$,
\emph{while preserving
multiplicities}: this is done by enumerating all possible self-homomorphisms of
$q'$, changing $\sigma$ to make the multiplicity of atoms part of the
relation name, encoding the resulting queries to automata as
usual~\cite{Courcelle90} and going back to the original~$\sigma$. We then
apply a variant of Theorem~\ref{thm:provenance-ncircuits} to
construct a $\Nx$-circuit capturing the provenance of $A_{q'}$ on a tree encoding of
$I$ but for valuations that sum to the number of atoms of~$q'$; this restricts to
bag-subinstances corresponding exactly to matches of $q'$.
We obtain a $\Nx$-circuit that
captures $\provn{q}{I}$ by combining the circuits for each disjunct,
the distinguished gate of the overall circuit being a $\nplus$-gate of that of each circuit.

\begin{toappendix}
  We first give some preliminary definitions. We need to introduce
\deft{bag-instances}, to materialize the possibility that a fact
is used multiple times in a UCQ:

\begin{definition}
  \label{def:multiset}
  A \deft{multiset} is a function $M$ from a finite \deft{support}
  $\supp(M)$ to $\NN$.
  We define the relation $M \subseteq M'$ if $\supp(M) \subseteq \supp(M')$ and
  for all $s \in \supp(M)$
  we have
  $M(s) \leq M'(s)$. We write $x \in M$ to mean
  that
  $M(x) > 0$.
  Given a set $S$ and multiset $M$, we write $M \sqsubseteq S$ to mean that
  $\supp(M) \subseteq S$, and for $p \in \NN$ we write $M \sqsubseteq^p S$ to
  mean that $M \sqsubseteq S$ and $M(a) \leq p$ for all $a \in \supp(M)$.

  A \deft{bag-instance} $J$ is a multiset of facts on $\dom(J)$. Where necessary
  to avoid confusion,
  we call the ordinary instances \deft{set-instances}.
  For two bag-instances $J$ and $J'$, we say that $J$ is a
  \deft{bag-subinstance} of $J'$ if $J \subseteq J'$ holds
  (as multisets). We say that $J$ is a \deft{bag-subinstance}
  of a set-instance $I$ if $J \sqsubseteq I$, and a \deft{$p$-bag-subinstance}
  of $I$ if $J \sqsubseteq^p I$.
  In other words, set-instances are understood as bag-instances where facts
  have an arbitrarily large multiplicity (and \emph{not} multiplicity equal to~$1$).
  The \deft{truncation to $p$} of a bag-instance $J$ is
  $\trunc{J}{p}(F) \defeq \min(J(F), p)$ for all $F \in \supp(J)$.

  A \deft{bag-homomorphism} $h$ from a bag-instance $J$ to a bag-instance $J'$ is
  a mapping from $\supp(J)$ to $\supp(J')$ with the following condition:
  for each $F \in \supp(J')$, letting $F_1, \ldots, F_n$ be the facts of
  $\supp(J)$ such that $h(F_i) = F$ for $1 \leq i \leq n$, we have
  $\sum_{i=1}^n J(F_i) \leq J'(F)$.
\end{definition}

We accordingly define \deft{bag-queries} as queries on such bag-instances.
Intuitively, bag-queries are like regular Boolean queries on instances, except
that they can ``see'' the multiplicity of facts. This is crucial to talk about
the required multiplicity of facts in matches, which we need to talk about the
$\Nx$-provenance of UCQs.

  \begin{definition}
    \label{def:bagquery}
    A \deft{bag-query}~$q$ is a query on bag-instances.
    The bag-query $q'$ \deft{associated} to a CQ $\exists \mathbf{x} \,
    q(\mathbf{x})$ is defined as follows. A \deft{match} of $q$ in a
    bag-instance $J$ is a bag-homomorphism from $q$ (seen as a bag-instance of
    facts over $\mathbf{x}$) to $J$. We say that $J \models q'$ if $q$ has a
    match in~$J$.

    The bag-query associated to a UCQ $q = \bigvee_{i=1}^n \exists
    \mathbf{x}_i \, q_i(\mathbf{x}_i)$ is the disjunction of the bag-queries for
    each CQ in the disjunction.

    Alternatively it is easily seen that $J \models q'$, for $q'$ the bag-query
    associated to a UCQ $q$, iff $J$ contains a bag of facts that can be used as
    the leaves of a derivation tree for the Datalog query $P_q$ associated to~$q$.
  \end{definition}

We notice that the bag-query associated to a UCQ~$q$ is \deft{bounded}, namely,
the fact that it holds or not cannot depend on the multiplicity of facts beyond
a certain maximal value (the maximal number of atoms in a disjunct of the UCQ):

\begin{definition}
  \label{def:bounded}
  A bag-query $q$ is \deft{bounded} by $p \in \NN$ if, for
  any bag-instance $J$, if $J \models q$, then the truncation
  $\trunc{J}{p}$ of $J$ is such that $\trunc{J}{p} \models q$. A bag-query is
  \deft{bounded} if it is bounded by some $p \in \NN$.
\end{definition}

We now extend our definitions of tree encodings and automaton compilation to
bag-queries. First, tree encodings simply generalize to bag-instances as tree
encodings annotated with the multiplicities of facts, that is,
$\natp{\alphab{k}{\sigma}}{p}$-trees:

  \begin{definition}
    Let $k, p \in \NN$ and let $J$ be a bag-instance such
    that $J(F) \leq p$ for all $F \in J$. Let $I \defeq \supp(J)$ be the
    underlying instance of $J$, and let $T_I$ be its tree encoding (a
    $\alphab{k}{\sigma}$-tree). We define the \deft{$(k, p)$-tree-encoding} $T_J$ of $J$
    as the tree with same skeleton as $T_I$ where any node $n$ encoding a fact
    $F$ of $I$ is given the label $(\lbl{n}, J(F))$ and other nodes are given
    the label $(\lbl{n}, 1)$.
    We accordingly define $\decode{\cdot}$ on
    $\natp{\alphab{k}{\sigma}}{p}$-trees to yield bag-instances in
    the expected way, again returning $\bot$ whenever two nodes code the same
    fact (rather than summing up their multiplicity).
  \end{definition}

  We then define what it means for a $\natp{\alphab{k}{\sigma}}{p}$-bNTA to
  \deft{test} a bag-query $q$. Note that the definition implies that the
  automaton cannot ``see'' multiplicities beyond $p$, so we require that the
  query be $p$-bounded so that the limitation does not matter.

  \begin{definition}
    For $q$ a bag-query and $k, p \in \NN$, a $\natp{\alphab{k}{\sigma}}{p}$-bNTA
    $A$ \deft{tests} $q$ for treewidth $k$ if $q$ is bounded by~$p$ and for every
    $\natp{\alphab{k}{\sigma}}{p}$-tree~$E$, we have $\accepts{A}{E}$ iff $\decode{E}
    \models q$.
  \end{definition}

  We then provide a general definition and prove a lemma about constructing the
  union of bNTAs:

  \begin{definition}
    Let $\Gamma$ be a finite label set and let $A_i = (Q_i, F_i, \iota_i,
    \delta_i)$ be a family of $\Gamma$-bNTAs. Assume without loss of generality
    that the $Q_i$ have been renamed so that they are pairwise disjoint.
    The \deft{union bNTA} is the $\Gamma$-bNTA $A_\sqcup = (Q_\sqcup,
  F_\sqcup, \iota_\sqcup, \delta_\sqcup)$ defined by $Q_\sqcup \defeq \bigsqcup_i
  Q_i$, $F_\sqcup \defeq \bigsqcup_i F_i$, for every $\tau \in \Gamma$
  $\iota_\sqcup(\tau) \defeq \bigsqcup_i \iota_\sqcup(\tau)$, and
  $\delta_\sqcup$ is only defined for $q_1, q_2 \in Q_i$ for some $Q_i$, in
  which case it is defined as $\delta_\sqcup(q_1, q_2, \tau) \defeq
  \delta_i(q_1, q_2, \tau)$.
  \end{definition}

  \begin{lemma}
    \label{lem:union}
    For any family of $\Gamma$-bNTAs $A_i$, letting $A_\sqcup$ be the union bNTA
    of the $A_i$, for any $\Gamma$-tree $T$, we have $\accepts{A_\sqcup}{T}$ iff
    $\accepts{A_i}{T}$ for some $A_i$, and more precisely we have
    $\naccepts{A_\sqcup}{T} = \sum_i \naccepts{A_i}{T}$.
  \end{lemma}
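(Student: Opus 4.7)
The plan is to prove the stronger run-counting statement $\naccepts{A_\sqcup}{T} = \sum_i \naccepts{A_i}{T}$, from which the acceptance equivalence will follow easily given that $F_\sqcup = \bigsqcup_i F_i$. The main observation driving the proof is that, because the state sets $Q_i$ are pairwise disjoint and $\delta_\sqcup$ is defined only on pairs $(q_1,q_2)$ that lie in a common $Q_i$, every run of $A_\sqcup$ must be ``pure'' in the sense that all its states come from a single $Q_i$.

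First I would establish this purity by a straightforward bottom-up induction on the subtrees of $T$: for any run $\rho$ of $A_\sqcup$ on a subtree $T'$, there exists $i$ such that $\rho(n) \in Q_i$ for every node $n \in T'$. The base case is a leaf $n$, where $\rho(n) \in \iota_\sqcup(\lblf(n)) = \bigsqcup_i \iota_i(\lblf(n))$ forces $\rho(n) \in Q_i$ for exactly one $i$. For the inductive step at an internal node $n$ with children $n_L, n_R$, the induction hypothesis gives indices $i_L, i_R$ for the two child subtrees, and the fact that $\delta_\sqcup(\rho(n_L), \rho(n_R), \lblf(n))$ is nonempty forces $i_L = i_R =: i$, after which $\rho(n) \in \delta_i(\rho(n_L), \rho(n_R), \lblf(n)) \subseteq Q_i$.

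Next I would use purity to define a bijection between the set of runs of $A_\sqcup$ on $T$ and the disjoint union $\bigsqcup_i \{\text{runs of } A_i \text{ on } T\}$. A pure run of $A_\sqcup$ valued in $Q_i$ is, by inspection of the definitions, exactly a run of $A_i$ (since $\iota_\sqcup$ and $\delta_\sqcup$ agree with $\iota_i$ and $\delta_i$ on $Q_i$-valued configurations), and conversely every run of any $A_i$ is a pure run of $A_\sqcup$ valued in $Q_i$. This bijection preserves the property of being accepting: a pure run valued in $Q_i$ has root state in $F_\sqcup$ iff it has root state in $F_\sqcup \cap Q_i = F_i$. Counting on both sides then yields $\naccepts{A_\sqcup}{T} = \sum_i \naccepts{A_i}{T}$, and the acceptance equivalence follows since $\accepts{A_\sqcup}{T}$ iff this sum is positive iff some $\naccepts{A_i}{T}$ is positive iff some $A_i$ accepts $T$.

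The only subtle point is the inductive step ensuring that the two children of an internal node end up in the \emph{same} $Q_i$; this is where disjointness of the $Q_i$ and the restricted domain of $\delta_\sqcup$ are both essential, and it is the reason the lemma would fail for a naive union construction that permits mixed transitions. Everything else is routine bookkeeping from the definition of union bNTA.
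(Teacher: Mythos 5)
Your proof is correct and takes essentially the same approach as the paper, which simply asserts that the runs of $A_\sqcup$ on $T$ are exactly the disjoint union of the runs of the $A_i$ on $T$; your purity induction and bijection argument just spell out why that assertion holds.
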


  \begin{proof}
    The claim about acceptance and the number of runs is straightforward by
    noticing that the runs of $A_\sqcup$ on $T$ are exactly the disjoint union of the
    runs of the $A_i$ on $T$.
  \end{proof}

  Our last preliminary result is to show that every bag-query corresponding to a
  UCQ can be encoded to an automaton that tests it.

\begin{propositionrep}{prp:ucqbagftar}
  Let $q$ be a UCQ. There is $p \in \NN$ such that, for any $k \in \NN$, we can
  compute a $\natp{\alphab{k}{\sigma}}{p}$-bNTA $A$ that tests $q$ for treewidth
  $k$.
\end{propositionrep}

\begin{proof}
  We introduce some notation. We call \cqneq the language of CQs which can
  feature atoms of the form $x \neq y$, and \ucqneq the language of UCQs except
  the disjuncts are in \cqneq.
  We write $\vars{q}$ for the variables of a query $q$ of \cqneq.
  The bag-query associated to a query in \cqneq or \ucqneq is defined as for the
  corresponding query with no inequalities, but imposing the inequalities on
  matches. Formally, a match of a \cqneq query $q'$ is a match $h$ of $q'$ such
  that $h(x) \neq h(y)$ for any two variables $x$ and $y$ such that $x \neq y$
  occurs in $q$. (Note that the multiplicity of inequality atoms is irrelevant.)

  We first note that, writing the UCQ $q$ as the disjunction of CQs $q_i$, if we
  can show the claim for each $q_i$, then the result
  clearly follows from $q$ by computing one bNTA $A_i$ for each $q_i$ that
  tests $q_i$ for treewidth $k$ and uses $p = \max_i p_i$, where $p_i$ is the
  multiplicity for which the result was shown for each $q_i$ (clearly if the
  claim holds for a value of $p$ then it must hold for larger values by ignoring
  larger multiplicities). We then construct the union bNTA $A_\sqcup$ of these bNTAs
  to obtain a bNTA that tests $q$ (Lemma~\ref{lem:union}).

  We see a CQ $q$ as an existentially quantified multiset of atoms (the same
  atom, i.e., the same relation name applied to the same variables in the same
  order, can occur multiple times; in other words we distinguish, e.g., $\exists
  x R(x)$ and $\exists x R(x) R(x)$). Let $\vars{q}$ be the set of the variables
  of $q$ (which are all existentially quantified as $q$ is Boolean). We call
  $\calE_q$ the set of all equivalence classes on $\vars{q}$ (which is of course
  finite), and for ${\sim} \in
  \calE_q$ we let $\quot{q}{\sim}$ be the query in \cqneq obtained by choosing one
  representative variable in $\vars{q}$ for each equivalence class of $\sim$ and
  mapping every $x \in \vars{q}$ to the representative variable for the class of
  $x$ (dropping in the result the useless existential quantifications on
  variables that do not occur anymore), and adding disequalities $x \neq y$
  between each pair of the remaining variables.

  We rewrite a CQ $q$ to the UCQ $q' \defeq \bigvee_{\sim \in \calE_q}
  \quot{q}{\sim}$. We claim that for every bag-instance $I$, if $I \models q$
  then $I \models q'$, which justifies that for an instance $I''$, considering the subinstances of
  $I''$, $W_K(q, I'') = W_K(q', I'')$. For the forward implication, assuming that $I \models q$,
  letting $m$ be the witnessing match, we consider the $\sim_m$ relation defined
  by $x \sim_m y$ iff $m(x) = m(y)$, and it is easily seen that
  $I \models \quot{q}{\sim_m}$. For the backward implication, if $I \models
  \quot{q}{\sim}$ for some $\sim \in \calE_q$, it is immediate that $I \models
  q$ with the straightforward match. Hence, using again Lemma~\ref{lem:union}, it suffices to show the result
  for queries in \cqneq which include inequality axioms between all their
  variables. We call those \deft{forced queries}.

  We now show that the claim holds for forced queries. To see this, considering such a
  query $q$ on signature $\sigma$, letting $p$ be the sum of the multiplicities
  of all atoms in $q$ (i.e., the number of atoms in the original CQ $q$), let $\sigma_p$ be the signature obtained from
  $\sigma$ by creating a relation $R^i$ for $1 \leq i \leq p$, with arity
  $\arity{R}$, for every relation $R$ of $\sigma$, and let $q'$ be the rewriting
  of $q$ obtained by replacing every atom $R(\mathbf{a})$ with multiplicity $m$
  by the disjunction $\bigvee_{m \leq j \leq p} R^j(\mathbf{a})$ (and keeping the
  inequalities), rewritten to a \ucqneq. We now see $q'$
  as a \ucqneq in the usual sense (without multiplicities). We now claim that for any
  bag-instance $I$ on $\sigma$ where facts have multiplicity $\leq p$, letting
  $I'$ be the set-instance obtained by replacing every fact $F =
  R(\mathbf{a})$ of $I$ with multiplicity $m = I(F)$ by the fact
  $R^m(\mathbf{a})$, $I \models q$ iff $I' \models q'$. To see
  why, observe that, as $q$ is a forced query, if $q$ has a match $m$ then every
  atom $A$ of $q$ must be mapped by $m$ to a fact of $I$ (written $m(A)$) and
  this mapping must be injective (because $m$ is), so that the necessary and
  sufficient condition is that $I(m(A)) \geq p_A$ (where $p_A$ is the
  multiplicity of $A$ in $q$) for every atom $A$ of $q$; and this is equivalent
  to $I' \models q'$.

  Now, $q'$ is a \ucqneq so it can be tested \emph{in the sense of
  Definition~\ref{def:tests}} (as it is expressible in GSO, so we can
  apply Theorem~\ref{thm:courcelle}); fix $k \in
  \mathbb{N}^*$ and let $A_{q'} = (Q, F, \iota, \delta)$ be a
  $\alphab{k}{\sigma_p}$-bNTA that tests $q'$ for width $k$. We
  build a $\natp{\alphab{k}{\sigma}}{p}$-bNTA $A_q = (Q, F, \iota', \delta')$ by
  relabeling $A_{q'}$ in the following way. Recall the definition of
  $\alphab{k}{\sigma}$ (Definition~\ref{def:kfact}).
  For every $((d, f), i) \in \natp{\alphab{k}{\sigma}}{p}$, set $f'$ to
  be either $f$ if $f = \emptyset$ and $f' = R^i(\mathbf{a})$ if $f =
  R(\mathbf{a})$, and set
  $\iota'(((d, f), i))$ to be $\iota((d, f'))$
  and $\delta'(q_{\LCf}, q_{\RCf}, ((d, f), i))$ to be $\delta(q_{\LCf},
  q_{\RCf}, (d, f'))$
  for every $q_{\LCf}, q_{\RCf} \in Q$.

  We now claim that $A_{q}$ tests $q$ for treewidth $k$. To see why, it
  suffices to observe that for any $\natp{\alphab{k}{\sigma}}{p}$-tree $T$, letting $T'$ be the
  $\alphab{k}{\sigma_p}$-tree obtained in the straightforward manner, then $A_q$
  accepts $T$ iff $A_{q'}$ accepts $T'$, which is immediate by construction.
  Now indeed, as we know that $A_{q'}$
  accepts $T'$ iff $\decode{T'} \models q'$ (as $A_{q'}$ tests $q'$), and (as
  immediately $\decode{T'}$ is the $\sigma_p$-instance corresponding to
  $\decode{T}$ as $I'$ corresponds to $I$ above) that $\decode{T'} \models q'$
  iff $\decode{T} \models q$, we have the desired equivalence.

  The only thing left is to observe that $A_q$ does not only correctly test $q$ on instances
  where each fact has multiplicity $\leq p$, but correctly tests $q$ on all
  bag-instances. But this is straightforward: 
  as $q$ matches at most $p$ fact occurrences in the instance
  $I$, we have $I \models q$ iff $\trunc{I}{p} \models q$. This
  concludes the proof.
\end{proof}

We are now ready to prove Theorem~\ref{thm:provenance-nencodings}:
\end{toappendix}

\begin{proof}
  We show the proof for CQs, and then extend to UCQs.

  Let $k \in \NN$, $q: \exists \mathbf{x}\, q'(\mathbf{x})$ be the CQ. We rewrite
  $q$ to $q'':\exists \mathbf{x}\, q'(\mathbf{x})\land \bigwedge_{x \in \mathbf{x}}
  P_x(x)$ for fresh unary predicates $P_x$. We apply
  Proposition~\ref{prp:ucqbagftar} to compile $q''$ to a
  $\natp{\alphab{k}{\sigma}}{p}$-bNTA $A$, where $p$ is the number of atoms of
  $q''$, such that $A$ tests $q''$ for treewidth $k$. We can clearly design a
  $\natp{\alphab{k}{\sigma}}{p}$-bNTA $A'$ that checks on a
  $\natp{\alphab{k}{\sigma}}{p}$-tree whether, for all $x \in \mathbf{x}$, the
  input tree contains exactly one $P_x$-fact: this can be done with state space
  $2^{\mathbf{x}}$. We intersect $A$ and $A'$ to obtain a bNTA that recognizes
  all $\natp{\alphab{k}{\sigma}}{p}$-trees that satisfy the bag-query
  associated to $q''$ and have exactly one $P_x$-fact for all $x \in
  \mathbf{x}$, and determinize this bNTA to obtain an equivalent automaton $A''$ which
  is \deft{deterministic}: if it has an accepting run then it has exactly one
  accepting run.

  Let $I$ be the input instance, and $I'$ be the instance where we added one
  fact $P_x(a)$ for all $x \in \mathbf{x}$ and $a \in \dom(I)$: we call those
  the \deft{additional facts}. We can clearly
  compute $I'$ from $I$ in linear time, and the treewidth is unchanged. Let
  $T_{I'}$ be a tree encoding of $I'$, that is, a $\alphab{k}{\sigma}$-tree.

  We claim that we can construct, from $A''$, a bNTA $A'''$ such that, for any
  valuation $\nu$ of $T_{I'}$ that gives multiplicity~$1$ to the additional facts, the
  number of accepting runs of $A'''$ on $\nu(T_{I'})$ is the number of valuations
  $\nu'$ from the additional facts to $\{0, 1\}$ such that $A''$ accepts
  $\nu''(T_{I'})$, where $\nu''$ follows $\nu'$ on nodes encoding additional
  facts and follows~$\nu$ otherwise. We proceed as follows: first, duplicate the states
  of $A''$ so that every state $q$ is replicated to two states $q$ and $q'$, $q$
  and $q'$ being treated exactly the same way in terms of transitions in
  $\delta$ and in terms of being final (this preserves determinism). Now, ensure
  that for any two states $q_1$ and $q_2$ and labels $(\tau, 0)$ and $(\tau, 1)$
  in $\natp{\alphab{k}{\sigma}}{p}$ that encode a present or absent additional
  fact, $\delta(q_1, q_2, (\tau, 0))$ and $\delta(q_1, q_2, (\tau, 1))$ are
  disjoint (as $A''$ is deterministic, those are single facts, so if they are
  the same fact, replace one of them by its equivalent copy). Now, modify the
  transitions of the automaton so that, for any states $q_1$ and $q_2$ and
  $\tau$ encoding an additional fact, $\delta(q_1, q_2, (\tau, 1))$ is
  $\delta(q_1, q_2, (\tau, 0)) \cup \delta(q_1, q_2, (\tau, 1))$. It is now
  clear that the resulting automaton $A'''$ satisfies the desired property: for any
  valuation $\nu$ as above, there is a bijection between the accepting runs of
  $A'''$ and the valuations $\nu'$ as above such that $\nu''(T_{I'})$
  is accepted by $A''$.

  We now apply Theorem~\ref{thm:provenance-ncircuits2} with $l$ the number
  of facts in the CQ~$q''$ to obtain a $\Nx$-circuit that captures the
  $\Nx$-$l$-provenance of $A'''$ on $T_{I'}$;
  and fix to $1$ all inputs except those coding a
  fact of $I$ (i.e., nodes coding additional facts are set to $1$) and
  rename the remaining inputs to match the facts of $I$.
  Let $l'$ be the number of facts in the original CQ~$q$. Then the
  circuit captures:
  \[
    \nsum_{\substack{J\sqsubseteq I\\J\models
    q\\\sum_{F \in \supp(J)} J(F)=l'}}\left|\{\:f:\mathbf{x}\to\dom(I)\mid
    J \models q'(f(\mathbf{x}))\:\}\right|
    \nprod_{F\in J} F^{J(F)}.
  \]
  
  Now, $J \models q'(f(\mathbf{x}))$ just means $q'(f(\mathbf{x}))
  \subseteq J$ (as bag-instances), and as $q'(f(\mathbf{x}))$ and $J$ have same
  total multiplicity, this actually means $J = q'(f(\mathbf{x}))$. Hence, the
  above is equal to:\[
  \nsum_{\substack{f : \mathbf{x} \to \dom(I) \\ I
  \models q'(f(\mathbf{x}))}}
  \nprod_{A(\mathbf{x}) \in q'} A(f(\mathbf{x}))
  .
  \]
  and this is exactly $\provn{q}{I}$.

  For UCQs, observe that the provenance we need to compute
  (Definition~\ref{def:cqnprov}) is simply the sum of the provenance for
  each CQ. So we can just independently build a circuit for each CQ and
  combine the circuits into one (merging the input gates), while choosing
  as distinguished gate a $\nplus$-gate of each distinguished gate.
\end{proof}

Remember that an $\Nx$-circuit can then be specialized to a circuit for an
arbitrary
semiring (in particular, if the semiring has no variable,
the circuit can be
used for evaluation); thus, this provides provenance for~$q$ on~$I$
for any semiring.

\myparagraph{}{Going beyond UCQs}
To compute $\NN[X]$-provenance beyond UCQs (e.g., for monotone GSO
queries or their
intersection with Datalog), the main issue is fact multiplicity: multiple
uses of facts are easy to describe for UCQs (Definition~\ref{def:cqnprov}), but for
more expressive languages we do not know how to define
them and connect them to automata.

In
fact, we can build a query~$P$, in guarded
Datalog~\cite{Gradel:2000:EEM:1765236.1765272}, such that the
smallest number of occurrences of a fact in a derivation tree for $P$ cannot be
bounded independently from the instance. We thus cannot rewrite $P$
to a fixed finite bNTA testing multiplicities on all input
instances. However, as guarded Datalog is monotone and
GSO-expressible, we can compute the $\posbool{X}$-provenance
of~$P$
with Theorem~\ref{thm:provenance-encodings}, hinting at a difference between
$\posbool{X}$ and $\NN[X]$-provenance computation for queries beyond UCQs.

\begin{toappendix}
Note that the proof of Theorem~\ref{thm:provenance-nencodings} implicitly relies
on the fact that UCQs are bounded in the sense of Definition~\ref{def:bounded}, and we cannot hope to rewrite a query to a
$\natp{\alphab{k}{\sigma}}{p}$-bNTA that sensibly tests it if it is not bounded.
However, we can show:

\begin{proposition}
  There is a guarded monadic Datalog query $P$ whose associated bag-query
  $q_P$ is not bounded.
\end{proposition}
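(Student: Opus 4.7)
The plan is to exhibit a specific guarded monadic Datalog query $P$ together with an infinite family of instances $(I_n)_{n\in\NN}$ such that every derivation tree of $P$ on~$I_n$ must use one fixed extensional fact at least $2^n$ times as a leaf; this rules out boundedness by any~$p$.

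Concretely, I would take the extensional signature $\sigma \defeq \{A, G, T\}$ with $A, G$ unary and $T$ ternary, a single unary intensional predicate~$P$, and the rules
\begin{align*}
P(x) &\leftarrow A(x),\\
P(x) &\leftarrow T(x, y, z),\, P(y),\, P(z),\\
\goal &\leftarrow G(x),\, P(x).
\end{align*}
Each rule is guarded (by $A(x)$, $T(x,y,z)$, and $G(x)$ respectively) and has only a unary (or $0$-ary) intensional head, so $P$ is a guarded monadic Datalog query. For $n \geq 1$, I would define the instance
\[
I_n \defeq \{A(r_0),\, G(r_n)\} \cup \{T(r_{i+1}, r_i, r_i) \mid 0 \leq i < n\}
\]
on domain $\{r_0, \ldots, r_n\}$.

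The key step is to prove by induction on $i$ that any derivation tree for $P(r_i)$ with leaves drawn from $\supp(I_n) = I_n$ has at least $2^i$ leaves labeled by $A(r_0)$. For $i = 0$, no $T$-fact of $I_n$ has $r_0$ as its first argument, so only the rule $P(x)\leftarrow A(x)$ can derive $P(r_0)$, contributing one copy of $A(r_0) = 2^0$ copies. For the inductive step, observe that $A(r_{i+1}) \notin I_n$ and that $T(r_{i+1}, r_i, r_i)$ is the unique $T$-fact of $I_n$ whose first argument is $r_{i+1}$; hence every derivation of $P(r_{i+1})$ must fire the recursive rule with this $T$-fact, combining two subderivations of $P(r_i)$, each contributing at least $2^i$ copies of $A(r_0)$ by induction hypothesis, for a total of $2^{i+1}$.

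From this it follows that $I_n \models q_P$, witnessed by the canonical doubling derivation tree (whose leaf multiset is trivially contained in $I_n$ viewed as a set-instance with arbitrarily large multiplicities), whereas for any $p < 2^n$ the truncation $\trunc{I_n}{p}$ assigns multiplicity $p$ to $A(r_0)$, so the leaf multiset of no derivation tree can fit inside $\trunc{I_n}{p}$ and thus $\trunc{I_n}{p} \not\models q_P$. Since for each prescribed $p$ we can pick $n$ with $2^n > p$, the query $q_P$ is not bounded by $p$, for every $p\in\NN$. The main obstacle is the inductive lemma: I need to make sure that no alternative derivation trees can ``bypass'' the doubling, which relies precisely on the rigidity of~$I_n$ (unique $T$-fact at each level and absence of $A(r_i)$ for $i \geq 1$) to pin down the shape of every derivation.
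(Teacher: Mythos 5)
Your proof is correct and follows the same overall strategy as the paper's: exhibit a recursive guarded monadic Datalog program together with a rigid instance family on which every proof tree is forced to reuse a single extensional fact an unbounded number of times, and then contradict boundedness via truncation. The only difference is the concrete witness: the paper uses a linear chain (rules $S(y) \leftarrow S(x), R(x,a,y), A(a)$ and $\goal \leftarrow S(x), T(x)$ over a path of length $n$, so the unique proof tree has $n-1$ leaves labeled $A(a)$), whereas your doubling recursion forces $2^n$ occurrences of $A(r_0)$; both are equally valid, and your inductive lemma ruling out alternative derivations is exactly the point that needs checking. The one cosmetic remark is that the definition of boundedness quantifies over \emph{bag}-instances, so to be fully formal you should apply the truncation to a genuine bag-instance --- e.g., the leaf multiset of the canonical proof tree, as the paper does, or $I_n$ with all multiplicities set to $2^n$ --- rather than to $I_n$ ``with arbitrarily large multiplicities''; this changes nothing in the argument.
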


\begin{proof}
  Consider the Datalog query~$P$ consisting of the rules $S(y) \leftarrow S(x),R(x, a, y),A(a)$,
  $\goal \leftarrow S(x),T(x)$. For all $n \in \NN$, consider the
  instance $I_n = \{R(a_1, a, a_2), \allowbreak R(a_2, a, a_3), \ldots,
    R(a_{n-1}, a, a_n), \allowbreak
  S(a_1), \allowbreak T(a_n), \allowbreak A(a)\}$. It is easily verified that the only proof tree of $P$
  on $I_n$ has $n-1$ leaves with the fact $A(a)$. Hence, assuming that the
  bag-query $q_P$ captured by $P$ is bounded by $p$, 
  considering the bag-instance~$J$ formed of the leaves of the sole proof tree
  of $P$ on $I_{p+2}$, it is not the case that $\trunc{J}{p} \models q_P$,
  contradicting boundedness.
\end{proof}
\end{toappendix}

\mysec{Applications}{sec:applications}
In Section~\ref{sec:semirings} we have shown a $\Nx$-provenance circuit
construction for UCQs on treelike instances. This construction can be
specialized to any provenance semiring, yielding various applications:
counting query results by evaluating in~$\NN$, computing the cost of a
query in the tropical semiring, etc. By
contrast, Section~\ref{sec:encodings} presented a provenance construction for
arbitrary GSO queries, but only for a Boolean representation of
provenance, which does not capture multiplicities of facts or
derivations. The results of both sections are thus incomparable. In this
section we show applications of our constructions to two
important problems: \deft{probability evaluation}, determining the
probability that a query holds on an uncertain instance, and
\deft{counting}, counting the number of answers to a given query. These
results are consequences of the construction of
Section~\ref{sec:encodings}.

\begin{toappendix}
  \subsection{Preliminaries}

All proofs about probability evaluation in this section will use the notion of
\deft{cc-instances}, which we now introduce.

In this appendix, all Boolean circuits are non-monotone
(i.e., they allow NOT-gates)
and arity-two (Definition~\ref{def:aritytwo}),
 unless stated otherwise. We will first define the formalism of cc-instances,
then state a result about the construction of circuits (the analogue of
provenance circuits) for them, using Theorem~\ref{thm:provenance-encodings},
and finally explain how probability evaluation is performed using that result
using message-passing. We conclude by presenting the similar formalism of
\deft{pc-instances}, and stating tractability results for them implied by the
results on pcc-instances.

\paragraph{cc-instances.}
We define the formalism of \deft{cc-instances}:

\begin{definition}
  A \emph{cc-instance} is a triple $J = (I, C,
  \phi)$ of a relational $\sigma$-instance $I$, a (non-monotone arity-two) Boolean circuit $C$,
  and a mapping~$\phi$ from the facts of~$I$ to gates of\/ $C$.
  The \deft{inputs} $J_{\inp}$ of~$J$ are $C_\inp$.
  For every valuation $\nu$ of $J_\inp$,
  the possible world~$\nu(J)$ is the subinstance of~$I$ that contains the facts $F$ of $I$ such
  that $\nu(C)(\phi(F)) = \true$, and, as for c-instances, $\sems{J}$ is the set
  of possible worlds of~$J$.

  A pcc-instance is a 4-tuple $J = (I, C, \phi, \pfun)$ such that $J' = (I, C, \phi)$ is
  a cc-instance (and $J_\inp \defeq J'_\inp$) and $\pfun: J_\inp \to [0,1]$
  gives a probability to each input. As for pc-instances, the probability
  distribution $\sems{J}$ has universe $\sems{J'}$ and probability
  measure $\Pr_{J}(I') = \sum_{\substack{\nu \mid \nu(J) = I'}} \Pr_{J}(\nu)$ with
  the product distribution:
  \[\Pr_{J}(\nu)=\prod_{\substack{g \in J_\inp\\\nu(g) = \true}} \pfun(g) \prod_{\substack{g \in
  J_\inp\\\nu(g) = \false}} (1 - \pfun(g)).\]
\end{definition}

We define relational encodings and treewidth for cc-instances:

\begin{definition}
  \label{def:relenc}
  Let $\sigma_{\circuit}$ be the signature
  of the relational encoding of Boolean circuits
  (Definition~\ref{def:circuittw}). Let $\sigma$ be a signature
  and $\sigma^+$ be the signature with one relation $R^+$ of arity
  $\arity{R} + 1$ for every relation $R$ of $\sigma$.
  The \deft{relational encoding} $I_J$
  of a cc-instance $J = (I, C, \phi)$ over signature~$\sigma$,
  is the $(\sigma_\circuit \sqcup \sigma^+)$-instance containing both
  the $\sigma_\circuit$-instance $I_C$ encoding $C$
  and one fact $R^+(\mathbf{a},\phi(F))$ for every fact $F = R(\mathbf{a})$
  in~$I$.

  A \deft{tree decomposition} of a cc-instance $J$ is a tree decomposition of
  $I_J$. Tree decompositions of pcc-instances are defined as a tree
  decomposition of the corresponding cc-instance (the probabilities are
  ignored).
\end{definition}

\paragraph{Circuits for cc-instances.}
We claim the following result about cc-instances, intuitively corresponding to
the provenance circuits of Section~\ref{sec:encodings} for them (combined with
their circuit annotation):

\begin{theorem}
  \label{thm:main}
  For any fixed integer $k$
  and GSO sentence $q$, one can
  compute in linear time, from a cc-instance $J$ with
  $\width(J) \leq k$, a Boolean circuit $C$ on $J_\inp$
  such that for every valuation~$\nu$ of $J_{\inp}$, $\nu(C) = \true$
  iff $\nu(J) \models q$, with $\width(C)$ depending only on $k$ and $q$.
\end{theorem}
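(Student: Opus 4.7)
The plan is to reduce to Theorem~\ref{thm:provenance-encodings} by composing the provenance circuit for the relational part $I$ of $J = (I, C, \phi)$ with the annotating circuit $C$ itself. A tree decomposition of width $\leq k$ of the relational encoding $I_J$ of $J$ simultaneously yields (a) a tree decomposition of width $\leq k$ of $I$, since $I$ embeds in $I_J$; (b) a tree decomposition of width $\leq k$ of $C$; and (c) the property that every fact $F$ of $I$ co-occurs in some bag with its annotating gate $\phi(F)$, because the $R^+(\mathbf{a}, \phi(F))$ fact forces them to share a bag.

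Next, I would apply Theorem~\ref{thm:provenance-encodings} to $I$ and the GSO sentence $q$ to obtain a provenance circuit $P$ for $q$ on $I$, whose input gates are the facts of $I$ and whose treewidth is bounded by some constant $w_0$ depending only on $k$ and $q$. I would then form the output circuit $C'$ by taking the disjoint union of $P$ and $C$, identifying each input gate of $P$ labelled by a fact $F$ with the gate $\phi(F)$ of $C$, and designating the output gate of $P$ as the output of $C'$. The inputs of $C'$ are exactly those of $C$, namely $J_\inp$. Correctness is then immediate: for any valuation $\nu$ of $J_\inp$, the gate identified with $\phi(F)$ evaluates to $\nu(C)(\phi(F))$, which by definition of the cc-instance semantics is $\true$ iff $F \in \nu(J)$; feeding these values to $P$ yields $\true$ iff $\nu(J) \models q$, by the provenance property of $P$.

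The main obstacle is to bound $\width(C')$ by a quantity depending only on $k$ and $q$. The plan is to construct a tree decomposition of $C'$ directly from the tree decomposition $T_J$ of $I_J$: for each bag $b$ of $T_J$, I would keep the gates of $C$ already present in $b$ and add the provenance gates of $P$ that the proof of Theorem~\ref{thm:provenance-encodings} associates to the facts of $I$ appearing in $b$. This works because that proof creates only a number of gates per fact that depends on $k$ and $q$ only, and crucially it places the input gate for $F$ precisely in bags of the tree decomposition of $I$ that contain $F$; since such bags in $T_J$ also contain $\phi(F)$, the identification step does not inflate bag sizes beyond a bound depending only on $k$ and $q$. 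The linear-time guarantee then follows from Bodlaender's algorithm~\cite{bodlaender1996linear} for computing $T_J$, from the linear-time bound of Theorem~\ref{thm:provenance-encodings} for building $P$, and from the local nature of the gluing step.
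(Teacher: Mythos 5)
Your proposal is correct and follows essentially the same route as the paper: the paper likewise builds the provenance circuit of Theorem~\ref{thm:provenance-encodings} on the tree encoding of $I$, ``stitches'' it to $C$ by identifying each input gate for a fact $F$ with $\phi(F)$, and bounds the width by summing two same-skeleton tree decompositions derived from the single decomposition of $I_J$ (its Lemmas~\ref{lem:uncertainte}, \ref{lem:stitchppty}, \ref{lem:sumppty} and~\ref{lem:canstitch} formalize exactly your gluing and width arguments). The only presentational difference is that the paper aligns both decompositions on the skeleton of the tree encoding $E$ rather than decorating the bags of $T_J$ directly, which is the cleaner way to make your last step precise.
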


We now prove this result, explaining later what it implies
in terms of probability evaluation. We first introduce the notion of
\deft{cc-encoding}. Recall the definition of tree decompositions of circuits
(Definition~\ref{def:circuittw}):

\begin{definition}
  \label{def:kccenc}
A \deft{cc-encoding} of width $k$ is a tuple $E' = (E, C,
T,\chi)$ of
a $\alphab{k}{\sigma}$-tree $E$ of width~$k$,
a Boolean circuit $C$,
a tree decomposition $T$ of\/~$C$ of width~$k$
with same skeleton as $E$,
and a mapping $\chi:T\rightarrow C$ selecting a \deft{selected gate}
such that $\chi(b)\in\dom(b)$ for all $b \in T$.
The \deft{inputs} $E'_\inp$ of $E'$ are $C_\inp$.

Given a valuation $\nu$ of\/ $C_\inp$, we extend it to an evaluation of $C$, and
see it as a Boolean valuation of
$E$ by setting $\nu(n) \defeq \nu(\chi(b))$ for the bag $b$ of
$T$ corresponding to~$n$ in~$E$, and write $\nu(E')$ the resulting
$\boolp{\alphab{k}{\sigma}}$-tree.
\end{definition}

First, we explain how we can compute a cc-encoding of our cc-instance $J = (I, C, \phi)$ by
``splitting'' its tree decomposition $T$ in a tree decomposition of\/ $C$ and a
$\alphab{k}{\sigma}$-tree $E$ of $I$ with same skeleton, with $\chi$ keeping track of the
gate of\/ $C$ to which each node $n \in E$ was mapped by $\phi$. Formally:

\begin{lemmarep}{lem:uncertainte}
  Recall the definition of $\tevalf$ (Definition~\ref{def:neuter}).
  Given a cc-instance $J = (I, C, \phi)$
  and a tree decomposition $T$ of $J$ of width $k$,
  one can compute
  a cc-encoding $E' = (E, C', T',\chi)$
  of width $k$, with $C = C'$, such that for any valuation
  $\nu$ of\/ $C_\inp$, $\teval{\nu(E')}$ is an encoding of $\nu(J)$.
  The computation is in $O(\card T + \card C)$.
\end{lemmarep}

\begin{proof}
  We process the tree decomposition $T$ of~$J$ to construct $E$
  and $T'$. We adapt the encoding construction described in
  Lemma~\ref{lem:encode}.

  Whenever we process a bag $b \in T$, the mapping precomputed with $J$ (see
  Lemma~\ref{lem:encode}) is used to obtain all facts $F$ of $I$ for
  which $b$ is the topmost node where
  domain $\dom(F) \subseteq \dom(b)$ and $\phi(F)\in\dom(b)$.

  For every such fact $F$, we create one bag $b'$ in $T'$ labeled with
  all elements of $\dom(b)$ that are gates of $G$, and one node $n$ in $E$ which is the encoding of $F$  (considering only the
  domain $\dom(b) \cap \dom(I)$) as for a normal relational instance. Set the
  selected gate $\chi(b')\defeq\phi(F)$ (which is in $\dom(b')$ by the
  condition according to which we chose to consider fact $F$).

  Because $T$ was a tree decomposition of $J$, it is immediate that the
  resulting tree $T'$ is indeed a tree decomposition of width $k$ of\/ $C$ and
  that $E$ is a tree encoding of width $k$ of $I$. By construction $T'$ and
  $E$ have same skeleton, and clearly the process is in linear time in $\card{T}
  + \card{J}$. We let $E' = (E, C', T', \chi)$.

  It remains to check the last condition. Consider a Boolean valuation $\nu$ of the
  inputs of\/ $C$. Consider the instance $\nu(J)$ and its tree
  decomposition derived from $T$. It is clear that when one computes a tree
  encoding of $\nu(J)$ following $T$, one obtains an encoding $E''$ which is exactly
  $E$ except that the facts have been removed from the nodes which used to
  encode in $E$ a fact that was removed from $\nu(J)$. Hence, $E''$ is exactly
  $\teval{\nu(E')}$. This concludes the proof.
\end{proof}

Second, we show the lemmas that will allow us to ``glue together'' the circuit
$C$ of the cc-instance, which annotates the cc-encoding, with a provenance
circuit for an automaton on the tree encoding.

\begin{definition}
Let $C = (G, W, g_0, \mu)$ and $C' = (G', W', g_0', \mu')$ be circuits such that $G \cap
G' = C'_\inp$ (we say that $C$ and $C'$ are
\deft{stitchable}). The \deft{stitching} of\/ $C$ and $C'$, denoted $C
\circ C'$, is the circuit $(G \cup G', W \cup W', g_0', \mu'')$ where $\mu''(g)$ is
defined according to $\mu$ for $g \in G$ and according to $\mu'$ otherwise. In particular,
$(C \circ C')_\inp = C_\inp$.
\end{definition}

In the following lemmas about stitching, for clarity, we distinguish valuations
$\nu$ to the inputs of a circuit and the evaluation on all circuit gates, which
we write $\nu(C)$ where $C$ is the circuit.

The fundamental property of stitching is:

\begin{lemma}
  \label{lem:stitchppty}
  For any stitchable circuits $C$ and $C'$, for any gate $g$ of\/ $C'$ and
  valuation $\nu$ of\/ $C_\inp$, letting $\nu'$ be the
  restriction of $\nu(C)$ to $C'_\inp$, we have:
  $\nu'(C')(g) = \nu(C \circ C')(g)$.
\end{lemma}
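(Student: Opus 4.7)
The plan is to prove the identity by structural induction on the gate $g$ of $C'$, exploiting the fact that the stitching preserves gate types and incoming wires according to which circuit a gate belongs to.

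First I would handle the base case $g \in C'_\inp$. Since $C$ and $C'$ are stitchable we have $g \in G$, so by definition of the stitching $\mu''(g) = \mu(g)$, and the incoming wires of $g$ in $C \circ C'$ are exactly its incoming wires in $C$ (wires in $W'$ have $g$ as their source, not target, since $g \in C'_\inp$). A straightforward sub-induction on the gates of $C$ reachable from $g$ through reverse wires then gives $\nu(C \circ C')(g) = \nu(C)(g)$. On the other side, $\nu'(C')(g) = \nu'(g) = \nu(C)(g)$ by definition of $\nu'$ as the restriction of $\nu(C)$ to $C'_\inp$. So both sides agree.

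Next I would handle the inductive case $g \in G' \setminus C'_\inp$. Here $g \notin G$ (otherwise $g \in G \cap G' = C'_\inp$), so by definition $\mu''(g) = \mu'(g)$, and the incoming wires to $g$ in $C \circ C'$ are precisely those in $W'$ (wires in $W$ cannot have $g$ as target since $g \notin G$). Thus the value $\nu(C \circ C')(g)$ is computed from the values $\nu(C \circ C')(g'')$ at the predecessors $g''$ of $g$ in $C'$ using $\mu'(g)$, and by induction hypothesis each $\nu(C \circ C')(g'')$ equals $\nu'(C')(g'')$. This is exactly the recursion defining $\nu'(C')(g)$, so the two agree.

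The argument is essentially bookkeeping, so there is no real obstacle; the only subtlety to be careful about is that input gates of $C'$ cease to be input gates in $C \circ C'$ (they inherit their type and wires from $C$), and conversely the output gate $g_0$ of $C$ loses its distinguished role. One should check at the start that $C$ and $C'$ being stitchable indeed makes the construction well-defined as a circuit, i.e., that $\mu''$ is unambiguous on $G \cap G'$ because gates in $C'_\inp$ have fan-in $0$ in $W'$ so giving them the type from $\mu$ is consistent. Once this is set up, induction delivers the statement immediately and, applied at $g = g_0'$, yields $\nu'(C') = \nu(C \circ C')$ as a corollary.
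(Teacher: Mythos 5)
Your proof is correct and follows essentially the same route as the paper's: observe that gates of $C$ (in particular the input gates of $C'$, which lie in $C$ by stitchability) evaluate identically in $C$ and in $C\circ C'$, then propagate the equality through $C'$ by induction on its gates. You simply spell out the bookkeeping about wire targets and gate types that the paper leaves implicit.
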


\begin{proof}
  Fix $C$, $C'$, $g$, and $\nu$. As $C$ and $C \circ C'$ share the same
  inputs, $\nu$ is a valuation for both of them. Now, first note that
  for any gate $g$ of\/ $C$, $\nu(C)(g) = \nu(C \circ C')(g)$. Hence, in
  particular, for any input gate $g$ of\/ $C'$, as it is a gate of\/ $C$ because
  $C$ and $C'$ are stitchable, we have $\nu(C \circ C')(g) = \nu(C)(g) =
  \nu'(g)$. As this equality holds for any input gate $g$ of\/ $C'$, it
  inductively holds for any gate of $C'$, which proves the result.
\end{proof}

We show that a tree decomposition for $C \circ C''$ can be
obtained from two tree decompositions $T$ and $T''$ for $C$ and $C''$ that have
same skeleton, as the \deft{sum} $T + T''$ with same skeleton where each
bag $b''$ of $T + T'$ is the union of the corresponding bags $b$ and $b'$ in $T$
and $T'$. Namely:

\begin{definition}
  Given two tree decompositions $T$ and~$T'$ with same skeleton, the \deft{sum} of
  $T$ and $T'$ (written $T + T'$) is the tree decomposition $T$ with same
  skeleton where every bag $b''$ is the union of the corresponding bags $b$ and
  $b'$ in $T$ and~$T'$.
\end{definition}

The following is immediate:

\begin{lemma}
  \label{lem:sumppty}
  Given two tree decompositions with same skeleton $T$ and $T'$ of fixed width $k$ and $k'$
  for a Boolean circuit $C$ and a Boolean circuit $C'$,
  $T + T'$ can be computed in linear time in $T$ and $T'$ and has width
  $\leq k + k' + 1$.
\end{lemma}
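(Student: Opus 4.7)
The plan is to proceed by direct verification from the definition. Since $T$ and $T'$ share the same skeleton, the sum $T + T'$ is well-defined: its bags are in one-to-one correspondence with the bags of $T$ (equivalently, of $T'$), and each bag is obtained as the set union of its two counterparts in $T$ and $T'$. The lemma makes two quantitative claims, one about runtime and one about width, and both follow essentially from counting.

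For the computational bound, I would traverse both trees in parallel and, for each pair of corresponding bags $(b, b')$, compute $b \cup b'$. Since $k$ and $k'$ are fixed constants, each bag has constant size, so this union can be computed in constant time (either by simple traversal or by maintaining sorted representations), and the total runtime is $O(\card{T} + \card{T'})$.

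For the width bound, I would observe that by the definition of width, each bag of $T$ has at most $k + 1$ elements and each bag of $T'$ has at most $k' + 1$ elements. Thus the corresponding bag in $T + T'$ has size at most $(k + 1) + (k' + 1) = k + k' + 2$, giving width at most $k + k' + 1$.

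Since the lemma as stated only asserts size and time bounds (and does not claim, for instance, that $T + T'$ is a tree decomposition of any particular structure such as $C \circ C'$), there is no genuine obstacle here and the proof is essentially a bookkeeping exercise. The real content—verifying that $T + T'$ also satisfies the tree-decomposition axioms (coverage of all facts, and connectedness for shared gates between $C$ and $C'$, which requires the two decompositions to be compatible on shared gates)—is deferred to wherever this lemma is applied.
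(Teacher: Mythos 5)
Your proof is correct and matches the paper's intent: the paper states this lemma without proof, deeming it immediate, and your bag-size count $(k+1)+(k'+1)=k+k'+2$ and constant-time-per-bag union are exactly the expected justification. Your closing observation is also accurate — the verification that $T+T'$ actually satisfies the tree-decomposition axioms for the stitched circuit is indeed handled separately (in Lemma~\ref{lem:canstitch}), not here.
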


We now show:

\begin{lemma}
  \label{lem:canstitch}
  Let $C$ and $C'$ be stitchable circuits with tree decompositions $T$ and
  $T'$ with same skeleton (with witnessing bijection $\psi$). Assume that for
  any $g \in C'_\inp$ and bag $b$ of $T'$ with $g \in \dom(b)$, we have
  $g \in \dom(\psi^{-1}(b))$. Then $T + T'$ is a tree decomposition of\/ $C
  \circ C'$.
\end{lemma}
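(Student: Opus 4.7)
The plan is to verify the two defining conditions of a tree decomposition for $T + T'$ with respect to the circuit $C \circ C'$. Recall that $C \circ C'$ has gate set $G \cup G'$ and wire set $W \cup W'$, and that each bag of $T + T'$ is the union, along the skeleton bijection $\psi$, of the corresponding bags of $T$ and $T'$.

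First I would handle the coverage condition: every wire of $C \circ C'$ lies either in $W$ or in $W'$. In the first case, some bag of $T$ contains both endpoints by the hypothesis on $T$, so the corresponding bag of $T + T'$ (which is a superset) also does; the second case is symmetric using $T'$. So coverage is essentially free.

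The substantive step is the connectedness condition: for every gate $g \in G \cup G'$, the bags of $T + T'$ containing $g$ must induce a connected subtree. I would split into three cases. For $g \in G \setminus G'$, the bags of $T + T'$ containing $g$ correspond under $\psi$ exactly to the bags of $T$ containing $g$, which are connected in $T$ and hence connected in $T + T'$ since the two trees share a skeleton. The case $g \in G' \setminus G$ is symmetric via $T'$. The delicate case is $g \in G \cap G' = C'_\inp$, where a priori a bag of $T + T'$ can come to contain $g$ either from its $T$-side or from its $T'$-side. Here the lemma's hypothesis is used in its full strength: it asserts that any bag $b$ of $T'$ with $g \in \dom(b)$ has $g \in \dom(\psi^{-1}(b))$, so the $T'$-side can never contribute a bag not already contributed by the $T$-side. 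Consequently the bags of $T + T'$ containing $g$ are precisely those whose $T$-counterpart contains $g$, which is a connected subtree of $T$ and hence of $T + T'$.

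The main (and only real) obstacle is this third case for shared inputs; the hypothesis is tailored exactly to defuse it. Once that is observed, the verification reduces to lifting connectedness and coverage from $T$ and $T'$ through the skeleton bijection, which is routine from the definition of $T + T'$.
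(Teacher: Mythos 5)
Your proof is correct and follows essentially the same route as the paper's: coverage of wires/facts is inherited bag-by-bag from $T$ or $T'$, and connectedness is handled by case analysis on whether $g$ lies in $G \cap G' = C'_\inp$, with the lemma's hypothesis used exactly to show that the $T'$-occurrences of a shared input gate are subsumed by its $T$-occurrences. No gaps.
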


\begin{proof}
  We consider $I_{C \circ C'}$ and show that $T + T'$ is a tree
  decomposition of it:
  \begin{compactitem}
    \item Let $g$ be a gate of\/ $C \circ C'$.
      If $g$ is not a gate of $C \cap C'$, then its occurrences in $T +
      T'$ are only its occurrences in $T$ or in $T'$, so that they form a
      connected subtree of $T + T'$ as they did in $T$ or $T'$. If it is a
      gate of $C \cap C'$, then it is an input gate of $C'$ because $C$
      and $C'$ are stitchable, and by the hypothesis, its occurrences in $T'$
      are a subset of its occurrences in $T$, so its occurrences in $T +
      T'$ are its occurrences in $T$, and they also form a connected subtree.
    \item Let $\mathbf{g}$ be a tuple occurring in a fact of $I_{C \circ
      C'}$.
      Clearly $\mathbf{g}$ occurs either in $I_{C}$ or in $I_{C'}$, so that
      it is covered by the bag $b_{\mathbf{g}}$ that covers all elements of
      $\mathbf{g}$ in $T$ or in $T'$. \qedhere
  \end{compactitem}
\end{proof}

Last, we conclude the proof of Theorem~\ref{thm:main}:

\begin{proof}
  Let $k \in \NN$, $q$ be the GSO sentence, and let $A$ be a
  $\alphab{k}{\sigma}$-bNTA that tests $q$ for treewidth $k$ according to
  Theorem~\ref{thm:courcelle}, which we lift to a $\boolp{\alphab{k}{\sigma}}$
  in the same way as in the proof of Theorem~\ref{thm:provenance-encodings}.

  Construct in linear time in the input cc-instance $J = (I, C, \phi)$ a tree decomposition of
  $J$ of width $\leq k$, and a cc-encoding $E' = (E, C, T, \chi)$ of width $k$
  of $J$, according to Lemma~\ref{lem:uncertainte}, satisfying the conditions of
  that lemma. Now, use Theorem~\ref{thm:provenance-encodings} to compute an
  arity-two
  provenance circuit $C'$ of $A$ on $E$ and with a tree decomposition $T'$ whose
  width is constant in~$I$. We further observe from the proof of the proposition
  that $T'$ that has same skeleton as $E$ (and $T$), and that for any node $n \in
  E$, the input gate for this node is in the bag corresponding to $n$ in $T'$.
  
  We then observe that $C'$ and $C$ are stitchable circuits, and that their tree
  decompositions $T'$ and $T$ have same skeleton and satisfy the conditions of
  Lemma~\ref{lem:canstitch}. We deduce from this lemma and
  Lemma~\ref{lem:sumppty} that we can construct in linear time the stitching
  $C'' \defeq C
  \circ C'$ and a tree decomposition of it, whose width does not depend on $J$.
  We now show that $C''$ satisfies the desired property, namely, $\nu(C'')$ is
  $1$ iff $\nu(J) \models q$. For any valuation $\nu$
  of $J_\inp$, we have $\nu(C \circ C') = \nu'(C')$, by
  Lemma~\ref{lem:stitchppty}, where $\nu'$ is the valuation of $C'_\inp$
  obtained from $\nu(C)$. It is clear by definition of $\nu(J)$ that a fact $F$ is
  present in $\nu(J)$ iff $\phi(F)$ is true in $\nu(C)$. We conclude using the
  fact that $\nu'$ is a provenance circuit: $\nu'(C')$ holds iff $\{F \in I \mid
  \phi(F) \text{ true in } \nu(C)\} \models q$.
\end{proof}

\paragraph{Probability evaluation.}
We now describe the consequences of Theorem~\ref{thm:main} in terms of
probability evaluation. Here is what we want to show:

\begin{corollaryrep}{cor:prob}
  The problem of computing the probability of a fixed GSO sentence
  on bounded-treewidth pcc-instances
  can be solved in ra-linear time data complexity.
\end{corollaryrep}

To prove this corollary, we need the following definition and key result:

\begin{definition}
  Let $C = (G, W, g_0, \mu)$ be an (arity-two non-monotone) Boolean circuit and $\pfun$ be a
  \deft{probabilistic valuation} of $C$ associating each $g \in C_\inp$ to a probability
  distribution $\pfun_g$ on $\{\false, \true\}$, that is, one rational
  $v_{\false} = \pfun_g(\false)$ and one rational $v_{\true} = \pfun_g(\true)$
  such that $v_{\false} + v_{\true} = 1$.
  The \deft{probability evaluation problem} for $C$ and $\pfun$  is to compute the
  probability distribution of $g_0$ under the product distribution for the inputs
  (i.e., assuming independence), that is, $\Pr_{g_0}$ mapping $v \in \{\false,
  \true\}$ to
  \[\textstyle\sum_{\substack{\nu \in \val{C_\inp}\\
  \nu(g_0) = v}} \prod_{g \in C_\inp} \pfun_{g}(\nu(g))\]
  where $\val{C_\inp}$ denotes the set of Boolean valuations of~$C_\inp$.
\end{definition}

\begin{theoremrep}{thm:probeval}
  Given a tree decomposition $T$ of width~$k$ of an arity-two Boolean circuit $C$,
  and given a probabilistic valuation $\pfun$ of $C$,
  the probability evaluation problem for $C$ and $\pfun$ can be
  solved in time \emph{ra-linear} in
  $2^k \card{T}+\card\pfun+\card{C}$.
\end{theoremrep}

With the above theorem, we can prove Corollary~\ref{cor:prob} as follows:

\begin{proof}
Let $J=(I,C,\phi,\pfun)$ be a pcc-instance of treewidth~$k$ and $q$ a query. We use Theorem~\ref{thm:main}
to construct in linear time a Boolean circuit $C'$ of treewidth~$k'$ dependent only
on~$k$ and~$q$, with distinguished gate~$g$. We build from $C'$ a tree
decomposition of width~$k'$ in linear time.
The probability that $q$ is
true in $J$ is $\Pr_g(\true)$.
We conclude as Theorem~\ref{thm:probeval} states that this can be computed in ra-linear
time in $\card{C'}+\card{\pfun}$ for fixed $k'$.
\end{proof}

We now prove Theorem~\ref{thm:probeval}.

\begin{proof}
Fix $T=(B,\LC,\RC,\dom)$ a tree decomposition of a
Boolean circuit $C=(G,W,g_0,\mu)$ (so that for any $b \in B$, $\dom(b)$ is a set
of gates of $G$). We define $E\defeq\LC\cup\RC$ and, for $g\in
G$, $V(g)$ the value set of $g$.
For $e=(b_1,b_2)\in E$, we define
$\dom(e)\defeq\dom(b_1)\cap\dom(b_2)$, the shared elements between a bag and its
parent.
We assume an arbitrary order $<$ over $G$ and
see $\dom(b)$ as a tuple by ordering elements of $\dom(b)$
with~$<$ (this ordering taking constant time as the size of bags is bounded by a
constant). If $\dom(b)=(g_1,\dots, g_m)$, we note
$V(b)=\{\false, \true\}^m$ (and similarly, for $e \in E$, $V(e)$ is
the product over $\dom(e)$).
For every $g\in G$, let $\beta(g)\in B$ be an arbitrary bag
containing~$g$ and all gates that are inputs of $g$, that is, all gates $g'$
such that $(g', g, i) \in W$ for some $W$: such a bag exists by definition of
the tree decomposition of circuits (there is a fact in $I_C$ regrouping $g$ and
the $g'$) and we can precompute such a function in linear time by a traversal of $T$.
In particular, if $g$ is an input gate, then $\beta(g)$ is an arbitrary bag containing   just $g$.

We associate to every bag $b\in B$ (resp., every
edge $e\in E$) a \emph{potential function} $\Phi^b:V(b)\to\mathbb{Q}^+$
(resp., $\Phi^e:V(e)\to\mathbb{Q}^+$), where $\mathbb{Q}^+$ denotes the
nonnegative rational numbers, initialized to the constant~$1$
function.
We will store for each bag and each edge the full table of
values of $\Phi^e$, i.e., at most $2^k$ values, each of which
has size bounded by $\card\pfun$.

The functions $\pfun_g$ for $g\in C_\inp$ are mappings from
$V(g)$ to $\mathbb{R}^+$. For a bag $b\in B$ with $g\in\dom(b)$, we
define $\pfun_g^b$ as the function that maps every tuple $\mathbf{d} \in V(b)$
to $\pfun_g(d')$
where $d'$ is the value assigned to $g$ in $\mathbf{d}$.

For $g$ a non-input gate, let $\kappa(g)$ be the tuple formed of $g$ and all gates with a
wire to~$g$, ordered by~$<$. Let $f \defeq \mu(g)$ be the function of $g$, in
$\{\neg, \vee, \wedge, 0, 1\}$.
We see~$f$
as a subrelation $R_g$ of~$V(\kappa(t))$ (the table of values of the function,
with columns reordered by applying~$<$ on $g$), that is, a set of $(\arity{f}+1)$-tuples
which represents the graph of the function.

We update the potential function by the following steps, where the product of two        functions $f$ and $f'$
which have same domain $D$ denotes pointwise multiplication, that is,
$(f \times f')(x) = f(x) \times f'(x)$ for all $x \in D$:
\begin{enumerate}
  \item  For every $g\in C_\inp$, we set
    $\Phi^{\beta(g)}\defeq\Phi^{\beta(g)}\times\pfun_g^{\beta(g)}$.

  \item For every $g\in G\backslash C_\inp$, we set
    $\Phi^{\beta(g)}(\mathbf{d})\defeq0$ if the projection of $\mathbf{d}$ onto          $\kappa(g)$ is not
    in~$R_g$, we leave $\Phi^{\beta(g)}(t)$ unchanged otherwise.
\end{enumerate}

Note that we have now initialized the potential functions in a way which exactly
corresponds to that of~\cite{huang1996inference}, for a straightforward
interpretation of our circuit with probabilistic inputs as a special case of a
belief network where all non-root nodes are deterministic (i.e., have a
conditional distribution with values in $\{0, 1\}$).

We now apply as is the \textsc{Global Propagation} steps described in Section~5.3
of~\cite{huang1996inference}: if we choose the root of the tree decomposition as
the root cluster $\mathbf{X}$, this consists in propagating potentials from the
leaves of the tree decomposition up to the root, then from the root down to the
leaves of the tree. This process is linear in $\card{T}$ and, at every bag
of~$T$, requires a number of arithmetic operations linear in~$2^k$.

As shown in~\cite{lauritzen1988local,huang1996inference}, at the end of
the process, the desired probability distribution~$\Pr_g$ for gate~$g$ can be obtained by
marginalizing $\Phi^{\beta(g)}$:
\[
  \Pr_g(d')=\sum_{\is{\substack{\mathbf{d}\in V(\beta(g))\\d_k=d'}}}\Phi^{\beta(g)}(\mathbf{d})
\] where $k$ is the position of $g$ in $\dom(\beta(g))$.

The whole process is linear in $\card{T}\times
2^k+\card{C}+\card\pfun$
under fixed-cost arithmetic; under real-cost arithmetic, belief
propagation requires multiplying and summing linearly many times
$O(\card{T}\times
2^k)$ probability values, each of size bounded by
$\card\pfun$, which is
polynomial-time in $\card{T}$, $2^k$, $\card\pfun$.
\end{proof}

\paragraph{Consequences for pc-instances.}
We define existing the formalism of
\deft{(p)c-instances}~\cite{suciu2011probabilistic}, which is analogous to
(p)cc-instances, but annotates facts with propositional formulae rather than
circuits:

\begin{definition}[\cite{huang2009maybms,green2006models}]
  \label{def:ctable}
  A \deft{c-instance} $J$ is a relational instance where each
  tuple is labeled with a propositional formula of variables (or
  \deft{events}) from a fixed set $X$.
  For a valuation $\nu$ of $X$ mapping each variable to $\{\false, \true\}$,
  the possible world $\nu(J)$ is obtained by retaining exactly the tuples whose
  annotation evaluates to $\true$ under $\nu$; $\sems{J}$ is the
  set
  of all these possible worlds.
  Observe that different valuations may yield the same possible world.

  A \deft{pc-instance} $J = (J', \pfun)$ is defined as a c-instance
  $J'$ and a \deft{probabilistic valuation}
  $\pfun: X \to [0, 1]$ for the variables used in~$J'$.
  Like
    all probabilities in this paper, the values of $\pfun$ are
  rationals.
  The probability distribution $\sems{J}$ defined by~$J$ has universe
  $\sems{J'}$ and probability measure
  $\Pr_J(I) \defeq \sum_{\substack{\nu \mid \nu(J') = I}} \Pr_J(\nu)$ with
  the product distribution on valuations:
  \[\textstyle\Pr_J(\nu)\defeq\prod_{\substack{x \in X\\\nu(x) = \true}}
    \pfun(x) \:\prod_{\substack{x \in
  X\\\nu(x) = \false}} (1 - \pfun(x)).\]
\end{definition}

We define a notion of treewidth for them:

\begin{definition}\label{def:tw-pc-instance}
  Let $\sigma^{\o} = \sigma \cup \{\mathrm{Occ}, \mathrm{Cooc}\}$, where
  $\mathrm{Occ}$ and $\mathrm{Cooc}$ have arity two.
  From a pc-instance $J$, we define the \deft{relational encoding} $I_J$
  of $J$ as the $\sigma^{\o}$-instance where each event $e$ of $J$ is encoded to
  a fresh $a_e \in \dom(J)$, and where we add a fact
  $\mathrm{Occ}(a, a_e)$ in $I_J$ whenever $a \in \dom(J)$ is used in a fact
  annotated by a formula involving $e$, and $\mathrm{Cooc}(a_e, a_f)$
  whenever events $e$ and $f$ co-occur in the formula of some fact.

  The \deft{treewidth} $\width(J)$ of a (p)c-instance $J$ is $\width(I_J)$.
\end{definition}

This notion of treewidth, through event
\mbox{(co-)occurrences}, can be connected to
treewidth for (p)cc-instances, to ensure tractability of query
evaluation on (p)c-instances of bounded treewidth in that sense.
A technicality is that we must first rewrite annotations of the
bounded-treewidth (p)c-instance to bound their size by a constant; but we can
show:

\begin{propositionrep}{prp:c2cc2}
  For any fixed $k$, given a (p)c-instance $J$ of width $\leq k$,
  we can compute
  in linear time a (p)cc-instance~$J$ which is equivalent (has the same possible
  worlds with the same probabilities) and has treewidth depending only on $k$.
\end{propositionrep}

\begin{proof}
  We first justify that we can compute in linear time from $J$
  (p)c-instance $J'$ with the same events
  such that for any valuation $\nu$, we have $\nu(J) = \nu(J')$ (and $\Pr_J(\nu)
  = \Pr_{J'}(\nu)$), and the annotations of $J'$ have size depending only on
  $k$.

  Indeed, we observe that by our assumption that $\width(J) \leq k$, for any
  formula $F$ in an annotation, the number of distinct events occurring in $F$
  is at most $k$. Indeed, there is a $\mathrm{Cooc}$ clique between these
  events in $I_J$, so that as $\width(I_J) \leq k$ (by Lemma~1
  of~\cite{gavril1974intersection}) there must be less than $k$ of them.

  Now, we observe that any formula in $J$ can be rewritten, in linear time in
  this formula for fixed $k$, to an equivalent formula whose size depends only
  on $k$. Indeed, for every valuation of the input events, which means at most
  $2^k$ valuations by the above, we can evaluate the formula in linear time;
  then we can rewrite the formula to the disjunction of all valuations that
  satisfy it, each valuation being tested as the conjunction of the right
  events and negation of events. So this overall process produces in linear time
  an equivalent (p)c-instance $J'$ where the annotation size depends only on
  $k$. So we can assume without loss of generality that the size of the
  annotations of $J$ is bounded by a constant.

  Consider now the (p)c-instance $J$, its relational encoding $I_{J}$, and a tree
  decomposition $T$ of $I_{J}$. We build a tree decomposition~$T'$ of a
  relational encoding $I_{J}$ of a cc-instance $J' = (I, C, \phi)$ designed to be
  equivalent to $J$. Start by adding to $C$ the input gates, which correspond to
  the events of $J$.

  Now, consider each fact $F = R(\mathbf{a})$ of $J$. Let $\mathbf{e}$ be the
  set of events used in the annotation $A_F$ of $F$. Note that every pair of
  $S = \mathbf{a} \sqcup \mathbf{e}$ co-occurs in some fact of $I_J$: the elements of
  $\mathbf{a}$ co-occur within $F$, the elements of $\mathbf{e}$ co-occur in a
  $\mathrm{Cooc}$ fact, and any pair of elements from $\mathbf{a}$ and
  $\mathbf{e}$ co-occur in some $\mathrm{Occ}$ fact. Hence, by Lemma~1
  of~\cite{gavril1974intersection},
  there is a bag $b_F \in T$ such that $S \subseteq \dom(b)$.

  Let $C_F$ be a circuit representation of the Boolean function $A_F$ on $E$,
  whose size depends only on~$k$. Add $C_F$ to $C$, add $F$ to $I$, and
  set $\phi(F)$ to
  be the distinguished node of\/ $C_F$. We have thus built $J'$, which by
  construction is equivalent to $J$.

  We now build $T'$ by making it a copy of $T$. Now, for each fact $F$,
  considering its bag $b_F$, and $b_F'$ the corresponding bag in~$T'$, we add
  all elements of\/ $C_F$ to $b_F'$. This decomposition clearly covers all facts
  of $I_{J'}$, and event occurrences form subtrees because they do in $T$ and the
  elements that we added to $T'$ are always in a single bag only. Last, it is
  clear that the bag size depends only on $k$, as the size of the $C_F$ added to
  the bags depends only on~$k$, and at most $k$ of them are added to each bag (because
  there are at most $k$~elements per bag).

  We have not talked about probabilities, but clearly if $J$ is a pc-instance
  the probabilities of the inputs of the pcc-instance $J'$ should be defined
  analogously.
\end{proof}

We can now combine the above with Theorem~\ref{thm:main}, and deduce the
tractability of query evaluation on bounded-treewidth pc-instances.

\begin{theoremrep}{thm:pc}
  For bounded-treewidth pc-instances, the probability query evaluation problem
  for Boolean MSO queries can be solved in ra-linear time data complexity.
\end{theoremrep}

\begin{proof}
  The result is an immediate consequence of Proposition~\ref{prp:c2cc2}
  and Theorem~\ref{thm:main} as long as we show that,
  for any fixed $k \in \mathbb{N}^*$, and for every (p)c-instance $J$ of width $\leq
  k$, one can compute in linear time a (p)c-instance~$J'$ with the same events
  such that for any valuation $\nu$, we have $\nu(J) = \nu(J')$ (and $\Pr_J(\nu)
  = \Pr_{J'}(\nu)$), and the annotations of~$J'$ have size depending only
  on~$k$.

  Fix $k$ and $J$. We observe that by our assumption that $\width(J) \leq k$, for any
  formula $\Phi$ in an annotation, the number $p_\Phi$ of distinct events
  occurring in $\Phi$
  is at most $k$. Indeed, there is a $\mathrm{Cooc}$ clique between these
  events in $I_J$ and each of them is connected by the $\mathrm{Occ}$
  relation to domain elements of the fact~$F$ annotated by~$\Phi$ (there is at least one), so we have
  in total a $(p_\Phi+1)$-clique. By Lemma~1 of~\cite{gavril1974intersection},
  any tree decomposition must have one node containing all these $p_\Phi+1$
  elements, and therefore $p_\Phi\leq k$.

  Now, we observe that any formula in $J$ can be rewritten, in linear time in
  this formula for fixed $k$, to an equivalent formula whose size depends only
  on $k$. Indeed, for every valuation of the input events, which means at most
  $2^k$ valuations by the above, we can evaluate the formula in linear time;
  then we can rewrite the formula to the disjunction of all valuations that
  satisfy it, each valuation being tested as a conjunction of at most $k$
  literals.
  So this overall process produces in linear time
  an equivalent (p)c-instance where the annotation size depends only on $k$.
\end{proof}

\end{toappendix}

\myparagraph{}{Probabilistic XML}
We start with the problem of probabilistic query evaluation, beginning with the
setting of \emph{trees}. We use the framework of \deft{probabilistic
XML}, denoted $\prxml^\fie$, to represent
probabilistic trees as trees annotated by propositional formulas over
independent probabilistic events 
(see~\cite{kimelfeld2013probabilistic} for the formal
definitions), and consider the \emph{data complexity} of the \deft{query
evaluation} problem for a MSO query~$q$ on such trees (i.e., computing the
probability that $q$ holds).

This problem is intractable in
general, which is not surprising: it is harder than
determining the probability of a single propositional
annotation. However, for the less expressive \emph{local}
$\prxml$ model, $\prxml^{\muxind}$, query evaluation has tractable data
complexity~\cite{cohen2009running}; this model restricts edges to
be annotated by only one event literal that is only used on that edge (plus a
form of mutual exclusivity).

\begin{toappendix}
    We will first prove the result on scopes (Proposition~\ref{prp:scopes}) and
  then prove the result on local models (Theorem~\ref{thm:muxind}).

  \paragraph{XML and instances.}
We first describe XML documents and their connections to relational models.

\begin{definition}
  \label{def:xml}
  An \deft{XML document} with \deft{label set} $\Lambda$ (or
  \deft{$\Lambda$-document}) is an \emph{unranked}
  $\Lambda$-tree.
\end{definition}

  \begin{definition}
    A $\prxml$-tree $T$ is an \emph{unranked} $\Lambda$-tree, for a fixed
    alphabet $\Lambda$ of \deft{labels}, augmented with a set of Boolean events
    $\calE$ where each event $e_x$ has a probability $0 \leq p_x \leq 1$, and
    where each edge of the tree is labeled by a propositional formula over
    $\calE$.
    
    We see $T$ as defining a probability distribution over
    $\Lambda$-trees in the following fashion: for every valuation $\nu$ over
    $\calE$, the possible world $\nu(T)$ is obtained by removing all edges whose
    annotation evaluates to false under $\nu$, and all their descendent nodes
    and edges. The probability of a $\Lambda$-tree $T'$ according to~$T$ is the
    sum of the probability of all valuations $\nu$ such that $\nu(T) = T'$,
    where the probability of a valuation is defined assuming that the events in
    $\calE$ are drawn independently with their indicated probability.
  \end{definition}

  \begin{definition}
To perform \deft{query evaluation} on a $\prxml$ document is to
determine, for a fixed query over $\Lambda$-trees, given an input $\prxml$
document $T$, what is the total probability of its possible worlds that satisfy
$q$; we study its \deft{data complexity}, i.e., its complexity as a function of~$T$.
  \end{definition}

We always assume that the label set $\Lambda$ is fixed (not provided as
input).
As XML documents are unranked, it is often more convenient to
manipulate their binary left-child-right-sibling representation:

\begin{definition}
  The \deft{left-child-right-sibling} (LCRS) representation of an unranked rooted ordered
  $\Lambda$-tree $T$ is the following $\Lambda$-tree $T'$:
  a node $n$ whose children are the ordered sequence of siblings $n_1, \ldots,
  n_k$ is encoded as the node $n$ with $\LC(n) = n_1$, $\RC(n_1) = n_2$, ...,
  $\RC(n_{k-1}) = n_k$; we complete by nodes labeled $\bot \notin \Lambda$ to
  make the tree full.
\end{definition}

We now define how XML documents can be encoded to the relational setting.
\begin{definition}
  \label{def:xml2rel}
  Given a $\Lambda$-document $D$, let $\sigma_{\Lambda}$
  be the relational signature with two binary predicates $\FC$ and $\NS$ (for
  ``first child'' and ``next sibling''), and unary predicates $P_{\lambda}$ for
  every $\lambda \in \Lambda$. The \deft{relational encoding} $I_D$ of $D$ is
  the-$\sigma_{\Lambda}$ instance with $\dom(I_D) = \dom(D)$, such that:
  \begin{compactitem}
    \item for any consecutive siblings $(n, n')$,
      $\NS(n, n')$ holds;
    \item for every pair $(n, n')$ of a node $n \in D$ and its first child $n'
      \in D$ following sibling order, $\FC(n, n')$ holds;
    \item for every node $n \in D$, the fact $P_{\lbl{n}}(n)$ holds.
  \end{compactitem}
\end{definition}

\begin{lemmarep}{lem:relenclin}
  The relational encoding $I_D$ of an XML document $D$ has treewidth~$1$
  and can be computed in linear time.
\end{lemmarep}

\begin{proof}
  Immediate: the relational encoding is clearly computable in linear time
  and there is a
  width-1 tree decomposition of the relational encoding that has same
  skeleton as the LCRS representation of the XML document.
\end{proof}

Importantly, the language of MSO queries
on XML documents~\cite{neven2002query},
which we now define formally, can be easily translated to
queries on the relational encoding:

\begin{definition}
  An \deft{MSO query} on XML documents is a MSO formula where first-order
  variables refer to nodes and where
  atoms are $\lambda(x)$ ($x$ has label $\lambda$), $x
  \rightarrow y$ ($x$ is the parent of $y$), and $x < y$ ($x$ and $y$ are
  siblings and $x$ comes before $y$).
\end{definition}

\begin{lemmarep}{lem:xml2relq}
  For any MSO query $q$ on $\Lambda$-documents, one
  can compute in linear time an MSO query $q'$ on $\sigma_{\Lambda}$ such that
  for any $\Lambda$-document $D$, $D \models q$ iff $I_D
  \models q'$.
\end{lemmarep}

\begin{proof}
  We add a constant overhead to~$q$ by defining the predicates
  $\lambda(x)$ for $\lambda\in\Lambda$ as $P_\lambda(x)$,
  the predicate $x < y$ to be the transitive closure of $\NS$
  ($\neg (x = y) \wedge \forall S (x \in S \wedge (\forall z z' (z \in S \wedge
  \NS(z, z')) \Rightarrow z' \in S) \Rightarrow y \in S)$), and the predicate $x
  \rightarrow y$ to be $\exists z, \FC(x, z) \wedge (z = y \vee z < y)$. It is
  clear that the semantics of those atoms on $I_D$ match that of the
  corresponding atoms on $D$, so that a straightforward structural induction on
  the formula shows that $q'$ satisfies the desired properties.
\end{proof}

\begin{definition}
  Given label set $\Lambda$, we say that an XML document $D$ on $\Lambda \sqcup
  \{\bot,\det\}$ is a \deft{sparse representation} of an XML document $D'$ on
  $\Lambda$ if the root is labeled with an element of $\Lambda$, and the
  XML document obtained from $D$ by removing every $\bot$ node and their
  descendants, and replacing every $\det$ node
  by the collection of its children, in order, is exactly $D'$.

  We say that a $\sigma_{\Lambda\sqcup\{\det\}}$ instance $I$ is a \deft{weak relational
  encoding} of an XML document $D$ with label set~$\Lambda$ if there exists a
  sparse representation $D'$ of $D$ such that $I$ is the relational encoding of
  $D'$ except that $P_{\bot}$ facts are not written.
\end{definition}

\begin{propositionrep}{prp:rwweak}
  For any MSO query $q$ on XML documents with (fixed) label set $\Lambda$, one
  can compute in linear time an MSO query $q'$ on $\sigma_{\Lambda}$
  such that for any XML document $D$ on label set $\Lambda$, if $D \models q$
  then $I \models q'$ for any weak relational encoding $I$ of~$D$; and conversely
  if $D \not\models q$ then $I \not\models q'$ for any weak relational encoding
  $I$ of $D$.
\end{propositionrep}

\begin{proof}
  We show that,
  for any MSO query $q$ on XML documents with (fixed) label set $\Lambda$, one
  can compute in linear time an MSO query $q'$ on documents with label in
  $\Lambda \sqcup \{\bot,\det\}$ such that for any XML document $D$ on label set
  $\Lambda$, if $D \models q$ then $D' \models q'$ for any sparse representation
  $D'$ of $D$; and conversely if $D \not\models q$ then $D' \not\models
  q'$ for any sparse representation $D'$ of $D$. The result then
  follows by Lemma~\ref{lem:xml2relq}.

  We call \emph{regular} the nodes with label in $\Lambda$. Consider a
  document $D$ and sparse representation $D'$ of $D$ with a mapping $f$ from
  $D$ to~$D'$ witnessing that $D'$ is a sparse representation of $D$. Let us
  consider a node $n \in D$ with children $n_1, \ldots, n_k$ in order, and
  determine what is the relationship between $f(n)$ and the $f(n_i)$ in $D'$.

  It is straightforward to observe that $f(n)$ is regular and the $f(n_i)$ are
  topmost regular descendants of $f(n)$ in $D'$; and for $i < j$, there is some
  node $n'$ in $D'$ (intuitively, their lowest common ancestor, which is a
  descendant of $f(n)$, possibly $f(n)$ itself) such that $n'$ is both an
  ancestor of $f(n_i)$ and $f(n_j)$, $n'$ is a descendant of $f(n)$, and $n'$
  has two children $n_1'$ and $n_2'$ such that $f(n_i)$ is a descendant of
  $n_1'$ (maybe they are equal), $f(n_j)$ is a descendant of $n_2'$ (maybe they
  are equal), and $n_1' < n_2'$ in $D'$. Note that $n'$, $n_1'$ and $n_2'$ are
  not necessarily regular nodes of~$D$ but can be $\det$ nodes. In
  addition, no $\bot$ node can be traversed in any of the
  ancestor--descendant chains discussed in this paragraph.

  It is now clear that we can have MSO predicates $\rightarrow'$ and $<'$ in $D'$
  following these informal definitions (and not depending on $D$ or $D'$),
  defined from predicates $\rightarrow$, $<$ and $\lambda(\cdot)$ on $D'$, such
  that for every $D$ and sparse encoding $D'$ of $D$, for every nodes $n, n' \in
  D$, we have $n \rightarrow n'$ in $D$ iff $f(n) \rightarrow f(n')$ in $D'$
  (which should only hold between regular nodes, so nodes in the image of~$f$),
  and likewise for $<$. Last, it is clear that the predicates $\lambda(\cdot)$
  of $D$ can be encoded directly to the same predicates in $D'$.
\end{proof}

\paragraph{Probabilistic XML.}
We formally introduce probabilistic XML. We start by $\prxml^{\fie}$, i.e.,
$\prxml$ with events.

\begin{definition}
  \label{def:prxml}
  A $\prxml^{\fie}$ probabilistic XML document $D = (D', \pfun)$ is a
  $(\Lambda \sqcup \{\fie\})$-document $D'$
  where edges from $\fie$ nodes to
  their children are labeled with a propositional formula over some set of
  Boolean events $X$, and a probabilistic valuation~$\pfun$ mapping each $e \in
  X$ used in $D$ to an independent probability $\pfun(e) \in [0, 1]$ of being true.

  The semantics $\sems{D}$ of
  $D$ is obtained by extending $\pfun$ to a probability distribution on
  valuations $\nu$ of $X$ as usual, and defining $\nu(D)$ for
  $\nu$ to be $D'$ where all $\fie$ nodes are replaced by the
  collection of their children with edge annotation $\Phi$ such that $\nu(\Phi)
  = \true$ (the others, and their descendants, are
  discarded).
  We require the root to have label in~$\Lambda$.
\end{definition}

We will prove Proposition~\ref{prp:scopes} via
an encoding of $\prxml^\fie$ to pc-instances:

\begin{definition}
  \label{def:xmlfie2rel}
  The \deft{pc-encoding} of a $\prxml^{\fie}$ document $D = (D',
  \pfun)$ in $\Lambda\sqcup\{\fie\}$ is the pc-instance $J_D = (J'_D,
  \pfun')$ with same events, $\pfun' = \pfun$,
      and where the c-instance $J'_D$ is the relational encoding of
      $D'$
      with the
      following annotations. $\NS$- and~$\FC$-facts are annotated with
      $\true$.
      $P_\lambda(n)$-facts are annotated with the annotation
      $\Phi$ of the edge from the parent of~$n$ to~$n$, if $\Phi$ exists,
      with $\true$ otherwise.
\end{definition}

\begin{propositionrep}{prp:relencw}
  For any MSO query $q$ on $\Lambda$-documents, one
  can compute in linear time an MSO query $q'$ on $\sigma_{\Lambda}$ such that
  for any $\prxml^{\fie}$ XML document $D$, for any
  valuation $\nu$ of $D$, letting $\nu'$ be the corresponding valuation of
  $J_D$, we have that $\nu(D) \models q$ iff $\nu'(J_D) \models q'$.
\end{propositionrep}

\begin{proof}
  We prove that for any valuation $\nu$ of $D$, letting $\nu'$ be the corresponding valuation
  of $J_D$, we have that $\nu'(J_D)$ is a weak encoding of $\nu(D)$ (we
see $P_\fie$ facts in $\nu'(J_D)$ as if they were $P_{\det}$ facts). The
  result then follows by
  Proposition~\ref{prp:rwweak}.

  We first show that for any valuation $\nu$ of $D$ and corresponding valuation
  $\nu'$ of $J_D$, for every $\lambda \in \Lambda$, $n$ is a node of
  $\nu'(J_D)$ that is retained in the XML document $\nu'(J_D)$ is a
  sparse representation of
  iff $n$ is a node which is
  retained in $\nu(D)$, with same labels.
  Indeed, for the forward implication, observe that any
  fact $P_{\lambda}(n)$ is created for node $n$ with label $\lambda$ in
  $n$, and it is retained if and only if all its regular ancestors are
  retained and the annotation of its parent edge in $\nu(D)$ evaluates to
  $\true$;
  conversely, if $n$ has label $\lambda$ in $D$ then a fact $P_{\lambda}(n)$
  was created in $I$ and if $n$ is retained in $\nu(D)$ then all the
  conditions on edges in the chain from $n$ to the root
  evaluate to~$\true$ so $P_{\lambda}(n)$ does hold and $n$ is retained
  in $\nu'(J_D)$.

  We further know that by construction relations $\FC$ and $\NS$ correspond to the
  first-child and next-sibling relations in $D$ no matter the valuation.

  So we deduce that $J_D$ is the relational encoding of the XML document
  obtained from $D$ by replacing
  all nodes not kept in $\nu(D)$ by $\bot$ nodes, and removing all edge
  annotations.
\end{proof}

Observe that in this definition of pc-encoding, it is \emph{not} the
case that the possible worlds of $J_D$ are the relational encodings of the
possible worlds of $D$. For instance, the $\fie$ nodes are retained as is, and
$\FC$- and $\NS$-facts are always retained even if the corresponding nodes are
dropped. The following example shows that it would not be reasonable to ensure
such a strong property:
\begin{example}
  \label{exa:badclique}
  Consider an $\fie$ node with $k$ children $n_1, \ldots, n_k$, all
  annotated with independent events with probability $1/2$. In a straightforward attempt to encode this  node and its
  descendants to a pc-instance $J$ (or even to a pcc-instance~$J$), we would create one domain element   $e_i$ for
  each of the $n_i$. But then we would need to account for the fact that, as any
  pair $n_i, n_j$ may be retained individually, the fact $\NS(e_i, e_j)$ would
  need to occur in a possible world of $J$, and thus would also occur in $J$. So
  this na\"ive attempt to ensure that the possible worlds of $J$ are exactly the
  relational encodings of the possible worlds of $D$ leads to a pcc-instance of
  quadratic size and linear treewidth.
\end{example}

\paragraph{Tractability for $\prxml^{\fie}$.}
Of course we cannot hope that the pc-encoding of a $\prxml^{\fie}$
document always has constant treewidth for it is known that for
$\prxml^{\fie}$, evaluating MSO queries is almost always
$\#P$-hard~(\cite{kimelfeld2008query}, Theorem~5.2).
A first notion of tractability for a $\prxml^{\fie}$ document~$D$ is the
treewidth (following Definition~\ref{def:tw-pc-instance})
of the pc-encoding of $D$. Indeed, 
Proposition~\ref{prp:relencw} and Theorem~\ref{thm:pc} imply the following:
\begin{corollary}
  \label{cor:corstruct}
  For $\prxml^{\fie}$ documents with bounded-treewidth pc-encoding, the
  MSO probabilistic query evaluation problem
  can be solved in ra-linear time data complexity.
\end{corollary}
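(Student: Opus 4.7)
The plan is to chain together two results that are already in place: Proposition~\ref{prp:relencw}, which translates MSO queries on $\prxml^{\fie}$ documents into equivalent MSO queries on the pc-encoding under corresponding valuations, and Theorem~\ref{thm:pc}, which gives ra-linear-time probability evaluation for MSO queries on bounded-treewidth pc-instances. Since the hypothesis of the corollary gives us exactly a bounded-treewidth pc-encoding, and since the probability distribution on valuations of $D$ agrees by construction with the probability distribution on valuations of $J_D$, the two results compose directly.

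More precisely, first I would fix the MSO query $q$ and apply Proposition~\ref{prp:relencw} to compute, once and for all, an MSO query $q'$ on $\sigma_{\Lambda}$ such that for every $\prxml^{\fie}$ document $D$ and every valuation $\nu$ of its events, $\nu(D) \models q$ iff $\nu'(J_D) \models q'$ where $\nu'$ is the valuation of $J_D$ corresponding to $\nu$. This step is independent of the input document, so it contributes no data complexity. Next, given the input $D$, I would compute its pc-encoding $J_D$ in linear time following Definition~\ref{def:xmlfie2rel}; by hypothesis, $\width(J_D)$ is bounded by a constant. Since $\pfun' = \pfun$ in the pc-encoding and the event sets coincide, the probability measure induced on possible worlds of $J_D$ agrees term by term with the one induced on possible worlds of $D$, so
\[
\Pr\nolimits_D[\nu(D)\models q] \;=\; \Pr\nolimits_{J_D}[\nu'(J_D)\models q'].
\]
Finally, I would invoke Theorem~\ref{thm:pc} applied to $q'$ and $J_D$ to compute the right-hand side in ra-linear time data complexity, which by the above equality yields the desired probability.

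There is essentially no obstacle, as the heavy lifting has already been done in the preceding results. The only thing worth double-checking is that the query rewriting from Proposition~\ref{prp:relencw} indeed preserves the per-valuation semantics (not just the unconditional satisfaction), so that we can transport the expectation under $\pfun$ from the $\prxml^\fie$ side to the pc-instance side; this is exactly what that proposition asserts. The bounded-treewidth assumption on the pc-encoding feeds directly into the hypothesis of Theorem~\ref{thm:pc}, so no additional structural analysis of $D$ itself (e.g., of its tree shape or its $\fie$ nodes) is needed.
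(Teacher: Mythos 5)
Your proposal is correct and matches the paper exactly: the paper states that Corollary~\ref{cor:corstruct} is implied by Proposition~\ref{prp:relencw} together with Theorem~\ref{thm:pc}, which is precisely the composition you carry out, including the (correct) observation that the per-valuation equivalence plus the identity of event sets and of $\pfun$ lets the probability transfer from $D$ to $J_D$.
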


The condition on event scopes is a simpler sufficient condition for
tractability. We give its formal definition:

\begin{definition}
  Consider a $\prxml^{\fie}$ document $D$ with event set $X$ and its
  LCRS representation $D'$. We say that an event $e \in X$
  \deft{occurs} in a node $n$ of $D'$ if $e$ occurs in the annotation of the edge
  from the parent of $n$ to~$n$. For every $e \in X$, let $D'_e$ be the smallest
  connected subtree of $D'$ that covers all nodes where $e$ occurs. The
  \deft{event scope} $S(n)$ of a node $n \in D'$ is $\{e \in X \mid n \in
  D'_e\}$.
  The \deft{event scope width} of $D$ is $w_{\s}(D) \defeq \max_{n \in D} \card{S(n)}$.
\end{definition}

We are now ready to prove the result on XML element scopes:

\begin{propositionrep}{prp:scopecorr}
  For any $\prxml^{\fie}$ document $D$, we have $\width(J_D) \leq w_{\s}(D) +
  1$.
\end{propositionrep}

\begin{proof}
  We show how to build a tree decomposition of the relational encoding of
  $J_D$ from the event scopes.
  Consider the tree decomposition $T$ of $I_D$ that is isomorphic to a
  LCRS encoding $D'$ of $D$: the root node of $D'$ is coded to an empty bag, and
  each node $n$ of the LCRS encoding with parent $n'$ is coded to $\{n', n\}$.

  We now add to $T$, for each bag $b$ corresponding to a node $n$, the events of
  $S(n)$. It is clear that $T$ is of the prescribed width and that the
  occurrences of all nodes and events are connected subtrees.

  We now argue that it is a tree decomposition of the relational encoding
  of~$J_D$, but this is easily
  seen: it covers all $\NS$- and $\FC$- facts represented in $J_D$, and covers all
  occurrences and co-occurrences by construction of the scopes.
\end{proof}

This implies Proposition~\ref{prp:scopes} because of
Corollary~\ref{cor:corstruct}.

\paragraph{Tractability of $\prxml^{\muxind}$.}

We now introduce the definitions and proofs for the local model,
$\prxml^{\muxind}$.

\begin{definition}
  A $\prxml^{\muxind}$ probabilistic document is an XML document $D$ over
  $\Lambda \sqcup \{\ind, \mux\}$, where edges from $\ind$ and $\mux$ nodes to
  their children are labeled with a probability in $[0, 1]$, the
  annotations of outgoing edges of every $\mux$ node summing to $\leq 1$.
  
  The
  semantics $\sems{D}$ of $D$ is obtained as follows: for every $\ind$ node,
  decide to keep or discard each child according to the indicated probability,
  and replace the node by the (possibly empty) collection of its kept children;
  for every $\mux$ node, choose one child node to keep according to the
  indicated probabilities (possibly keep no node if they sum to $< 1$), and
  replace the $\mux$ node by the chosen child (or remove it if no child was
  chosen). All probabilistic choices are performed independently.
  When a node is not kept, its descendants are also
  discarded. We require the root to have label in~$\Lambda$.
\end{definition}

Observe that in $\prxml^{\muxind}$ all probabilistic choices are ``local'', in a
similar fashion to the tuple-independent (TID) and BID probabilistic relational formalisms. As we
show later, this helps ensure the tractability of query evaluation.

We use Corollary~\ref{cor:corstruct} to show the tractability
of query evaluation on the $\prxml^{\muxind}$ local model, which was already
proven in~\cite{cohen2009running}.
We first rewrite input documents to a simpler form:

\begin{definition}
  Two $\prxml^{\muxind}$ documents $D_1$ and $D_2$ are equivalent if for
  every XML document $D$, $\Pr_{D_1}(D) = \Pr_{D_2}(D)$.
\end{definition}

\begin{definition}
  We say that a $\prxml^{\muxind}$ is in \deft{binary form} if it is a full
  binary tree, and the sum of the outgoing probabilities of every $\mux$ node is
  equal to $1$.
\end{definition}

The following definition is needed to ensure linear time execution for technical
reasons:

\begin{definition}
  A $\prxml^{\muxind}$ document is \deft{normalized} if for every $\mux$ nodes,
  the rational probabilities that annotate its child nodes all share the same
  denominator.
\end{definition}

\begin{lemmarep}{lem:canbinary}
  From any normalized $\prxml^{\muxind}$ document $D$, we can compute in linear time in
  $D$ an equivalent $\prxml^{\muxind}$ document $D'$ which is in binary form.
\end{lemmarep}

\begin{proof}
  In this proof, for brevity, we use $\det$ nodes to refer to $\ind$ nodes whose
  child edges are all annotated with probability~$1$.

  First, rewrite $\mux$ nodes whose outgoing probabilities sum up to $<1$ by
  adding a $\det$ child for them with the remaining probability. This operation
  is in linear time because the corresponding number has same denominator as
  other children of the $\mux$ node (as the document is normalized), and the
  numerator is smaller than the denominator.

  Next, use $\det$ nodes to rewrite the children of regular and $\ind$ nodes to
  a chain so that all regular and $\ind$ nodes have at most $2$ children. This
  only causes a constant-factor blowup of the document.

  Next, rewrite $\mux$ nodes with more than two children to a hierarchy of
  $\mux$ nodes in the obvious way: considering a $\mux$ node $n$ with $k$ children
  $n_1, \ldots, n_k$ and probabilities $p_1, \ldots, p_k$ summing to $1$, we
  replace $n$ by a hierarchy $n_1', \ldots, n_{k-1}'$ of $\mux$ nodes: the
  children of each $n_i'$ is $n_i$ with probability $\frac{p_i}{\sum_{j < i}
  p_j}$ and $n_{i+1}'$ with probability $1 - \frac{p_i}{\sum_{j < i} p_j}$;
  except for $n_{k-1}'$ whose children are $n_{k-1}$ and $n_k$ (with the same
  probabilities). This operation can be performed in linear time as the
  denominators of the fractions simplify (by the assumption that the document is
  normalized), and the sum operations work on operands and results which are
  smaller than the numerator.

  Now, replace $\mux$ nodes with $<2$ children by $\ind$ nodes (the
  probabilities are unchanged).

  Last, add $\det$ children to nodes so that the degree of every node is either
  $2$ or $0$.

  This process can be performed in linear time and that the
  resulting document is in binary form; equivalence has been maintained through
  all steps.
\end{proof}

Now, we can show:

\begin{propositionrep}{prp:muxindscope}
  For any $\prxml^{\muxind}$ document $D$ in binary form, one can compute in
  linear time an equivalent $\prxml^{\fie}$ document whose scopes have size
  $\leq 1$.
\end{propositionrep}

\begin{proof}
  For every $\ind$ node $n$ with two children $n_1$ and $n_2$ with probabilities
  $p_1$ and $p_2$, introduce two
  fresh events $e_n^{\ind,1}$ and $e_n^{\ind,2}$ with probabilities $p_1$ and $p_2$, and replace
  $n$ by a $\fie$ node so that its first and second outgoing edges are annotated
  with $e_n^{\ind,1}$ and $e_n^{\ind,2}$.

  Likewise, for every $\mux$ node $n$ with two children $n_1$ and $n_2$ with
  probabilities $p$ and $1-p$, introduce a fresh event $e_n^{\mux}$ with probability
  $p$ and replace $n$ by a $\fie$ node so that its first and second outgoing
  edges are annotated with $e_n^{\mux}$ and $\neg e_n^{\mux}$.

  It is immediate that the resulting document $D'$ is equivalent to $D$. Now,
  consider the scope of any node of this document. Only one event occurs in this
  node, and the only events that occur more than one time in the document occur
  exactly twice, on the edges of two direct sibling nodes, so they never
  occur in the scope of any other node. Hence all scopes in $D$ have size $\leq
  1$.
\end{proof}

From this, given that $\prxml^{\muxind}$ document can be normalized in
ra-linear time, we deduce the tractability of MSO query evaluation on
$\prxml^{\muxind}$, as claimed in the main text:
\medskip

\end{toappendix}

We can use the provenance circuits of Section~\ref{sec:encodings} to justify that
query evaluation is tractable for $\prxml^{\muxind}$ and capture the data
complexity tractability result of~\cite{cohen2009running}. We say that an
algorithm runs in \deft{ra-linear time} if it runs in linear time assuming that
arithmetic operations over rational numbers take constant time and rationals are
stored in constant space, and runs in polynomial time without this assumption.
We can show:

\begin{theoremrep}[\cite{cohen2009running}]{thm:muxind}
  MSO query evaluation on $\prxml^{\muxind}$ has ra-linear data complexity.
\end{theoremrep}

We can also show extensions of this result. For instance, on $\prxml^{\fie}$,
defining the \deft{scope} of event $e$ in a document $D$ as the smallest
subtree in the left-child-right-sibling encoding of $D$ covering
nodes whose parent edge mentions $e$, and the \deft{scope size} of a
node $n$ as the number of events with $n$ in their scope, we show:

\begin{proposition}\label{prp:scopes}
  For any fixed $k \in \NN$, MSO query evaluation on $\prxml^{\fie}$ documents with scopes
  assumed to have size $\leq k$ has ra-linear data complexity.
\end{proposition}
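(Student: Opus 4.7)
The plan is to reduce to the tractability result on treelike annotated relational instances developed earlier, by showing that a scope-size bound of $k$ forces the pc-encoding of a $\prxml^{\fie}$ document to have treewidth bounded by a function of $k$.

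First, I would encode the input document $D$ as a pc-instance $J_D$ following Definition~\ref{def:xmlfie2rel}: take the left-child-right-sibling binary encoding $D'$ of $D$, use unary predicates for labels together with binary predicates $\FC,\NS$, and annotate each $P_\lambda(n)$-fact with the propositional formula labeling the edge from the parent of $n$ in $D$ to $n$. By Proposition~\ref{prp:relencw}, the fixed MSO query $q$ on $D$ translates to an MSO query $q'$ on $J_D$ whose probability coincides with the probability of $q$ on $D$; the rewriting absorbs the cascading-deletion semantics of $\prxml^{\fie}$ by relativizing $q$ to nodes reachable from the root through retained ancestors.

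Next, I would construct a tree decomposition of $J_D$ of width $\leq k+2$ whose skeleton is that of $D'$: the bag of a node $n$ consists of $n$, the parent of $n$ in $D'$, and every event whose scope contains $n$ (at most $k$ by hypothesis). The $\FC$- and $\NS$-facts are covered because each relates a node to its parent, both of which sit in the child's bag. Node occurrences form connected subtrees trivially. Event occurrences form connected subtrees directly from the definition of the scope of $e$ as the \emph{smallest connected subtree} covering the nodes on whose parent-edge $e$ appears. Finally, the $\mathrm{Occ}$- and $\mathrm{Cooc}$-facts of Definition~\ref{def:tw-pc-instance} are covered because any event $e$ appearing in the annotation of the edge into $n$ has $n$ in its scope, hence sits in $n$'s bag together with $n$ and with any other event co-occurring in that annotation.

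I would then conclude with Corollary~\ref{cor:corstruct}: on bounded-treewidth pc-instances, MSO probabilistic query evaluation is ra-linear in data complexity, since one may apply Theorem~\ref{thm:provenance-encodings} to obtain a provenance circuit of bounded treewidth, stitch in the per-fact annotation circuits (each of constant size because at most $k$ events occur in any annotation), and run message passing on the resulting bounded-treewidth Boolean circuit. The main subtlety is step two --- verifying that the naive $n$-indexed bags really do yield a valid tree decomposition of the full relational encoding with events, co-occurrence, and occurrence predicates --- but the scope definition is tailored precisely to make this work; the remaining ingredients come off-the-shelf from the preceding sections.
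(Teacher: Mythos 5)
Your proposal is correct and follows essentially the same route as the paper: it builds the pc-encoding $J_D$, exhibits the tree decomposition with skeleton the LCRS tree whose bag at node $n$ is $\{n, \mathrm{parent}(n)\} \cup S(n)$ (this is exactly the paper's Proposition~\ref{prp:scopecorr}, which gives $\width(J_D) \leq w_{\s}(D)+1$), and then invokes Corollary~\ref{cor:corstruct} via Proposition~\ref{prp:relencw} and Theorem~\ref{thm:pc}. The coverage argument for the $\mathrm{Occ}$- and $\mathrm{Cooc}$-facts and the connectedness of event occurrences are justified exactly as in the paper, from the definition of scopes as smallest connected subtrees.
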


\myparagraph{}{BID instances}
\begin{toappendix}
  Following~\cite{barbara1992management,re2007materialized}, we define:
\begin{definition}
  A \deft{BID instance} $I$ is a relational instance with each relation
  partitioned into \deft{key} and \deft{value} positions. For each
  valuation of the key positions, all matching facts (that form a \deft{block})
  are mutually exclusive, each has a probability $>0$ and the
  probabilities of the block sum to $\leq 1$. The semantics is to keep,
  independently between blocks,
  one fact at random in each block, according to the indicated
  probabilities (or possibly no fact if probabilities sum to $<1$).
\end{definition}

To ensure ra-linear time complexity, we assume that
BID instances are given with facts regrouped per blocks; otherwise our bounds
are PTIME as we first need to sort
the facts.

\begin{definition}
  We define the \deft{treewidth} $\width(I)$ of a BID instance $I$ as that of the
  underlying relational instance, forgetting about the probabilities.
\end{definition}

We show the tractability of MSO query evaluation on
BID through Theorem~\ref{thm:main} and Corollary~\ref{cor:prob}, using the
following result:
\begin{lemmarep}{lem:bidwidths}
  For any fixed $k \in \mathbb{N}^*$, given a BID instance $J$ with  $\width(J) \leq
  k$,
  we can compute in
  ra-linear time
  an equivalent pcc-instance $J'$ where $\width(J')$ depends only on~$k$.
\end{lemmarep}

However, the proof of this result is non-trivial. By an encoding to pc-instances, it is
straightforward to show the result if we assume that the size of each
block is bounded by a constant. But otherwise, we need to build
a decision circuit for which value to pick for each key; we do so
in a tree-like fashion following a decomposition of the BID
instance.

\begin{proof}
  Fix $k$ and $J$. First, compute in linear time a tree decomposition $T$ of $J$ of
  width $\width(J) \leq k$.

  Without loss of generality, we can assume that probabilities within
each block of $J$ are rationals with the same denominator (if this is not
the case, we normalize these probabilities in ra-linear time).

  As in the proof of Lemma~\ref{lem:encode}, we can assume that every fact of
  $J$ has been assigned to a bag of $T$ where it is covered (i.e., $F =
  R(\mathbf{a})$ with $\mathbf{a} \subseteq \dom(b)$ for $b$ the covering bag).
  Actually, still in the spirit of the proof of Lemma~\ref{lem:encode}, we can modify the decomposition  $T$
  by copying nodes to create chains, so that we can assume that at most one fact
  is assigned to each bag. This preprocessing can be performed in linear time.
  For every fact $F$ of $J$ we let $\beta(F)$ be the bag of $T$ to which fact $F$
  was assigned.

  We compute the pcc-instance $J' = (J, C, \phi)$ by building $C$ and $\phi$
  and a tree decomposition $T'$ for $J'$ with same skeleton as $T$, which is
  initialized as a copy of $T$. We add the gates of\/ $C$ to $T'$ to turn it
  into a tree decomposition of $J'$.

  Let $\mathcal{B}$ be the set of blocks: a key $\mathbf{a} \in \calB$ is a pair of a relation symbol    and a tuple
  that is a key in $J$ for that relation. We write $J_{\mathbf{a}}$ to refer to
  $J$ restricted to the facts of block $\mathbf{a}$; and
  $\card{I_{\mathbf{a}}}$
  is the size (\emph{not} the number of facts!) of this part of the instance
  (the size of both the facts and the associated probabilities). It is then
  clear that $\sum_{\mathbf{a} \in \calB} \card{I_{\mathbf{a}}} = \card{J}$, the
  size of the original instance.

  Now, for every $\mathbf{a} \in \calB$, consider the subset of bags
  $T_{\mathbf{a}}$ of $T$ that cover $\mathbf{a}$; it is a connected subtree, as
  it is the intersection for every element $a \in \mathbf{a}$ of the occurrence
  subtree $T_k$ of this element, which are connected subtrees, and it is not
  empty because the elements of $\mathbf{a}$ must occur together in some fact of $J$ so they
  also do in some bag of $T$. What is more, we can precompute in linear time the
  roots of all the $T_{\mathbf{a}}$ (by the same precomputation as in the proof
  of Lemma~\ref{lem:encode}). It is also clear that $\sum_{\mathbf{a} \in
  \calB}
  \card{T_{\mathbf{a}}}$ is of size linear in $\card{J}$, as, for fixed $\sigma$
  and $k$, each bag of $T$ can only occur in a constant number of
  $T_{\mathbf{a}}$.

  So we prove the result in the following way: for each $\mathbf{a}
  \in \calB$,
  we compute in time $O(\card{I_{\mathbf{a}}} + \card{T_{\mathbf{a}}})$ a
  circuit $C_{\mathbf{a}}$ to annotate the facts of $I_{\mathbf{a}}$ in $J'$, and
  we add the gates of\/ $C_{\mathbf{a}}$ to $T'$ to obtain a tree
  decomposition of $J'$ so far, making sure that we add only a constant number of
  gates to each bag, and only to bags that are in $T_{\mathbf{a}}$. If we can
  manage this for every $\mathbf{a} \in \calB$, then the result follows, as we can
  process the blocks in $J$ in order (as they are provided); our final
  pcc-instance has width that is still constant (for each bag of $T$ can
  only occur in a constant number of $T_{\mathbf{a}}$); and by the arguments
  about the sizes of the sums, the overall running time of the algorithm is
  linear in $J$.

  So in what follows we fix $\mathbf{a} \in \calB$ and describe the construction of
  $C_{\mathbf{a}}$ and the associated decomposition.

  Using our preprocessed table to find the root of $T_{\mathbf{a}}$, we can
  label its nodes by going over it top-down, in time linear in $T_{\mathbf{a}}$.
  We now notice that for every fact $F = R(\mathbf{a}, \mathbf{v})$ of
  $I_{\mathbf{a}}$, the bag $\beta(F)$ covers $F$ so it must be in
  $T_{\mathbf{a}}$. We write $\beta_{\mathbf{a}}$ for the restriction of the
  function $\beta$ to the facts of $I_{\mathbf{a}}$.

  We now say that a bag $b \in T_{\mathbf{a}}$ is an \deft{interesting bag}
  either if it is in the image of $f_{\mathbf{a}}$ or if it is a lowest common ancestor
  of some subset of bags that are in the image of $f_{\mathbf{a}}$. We now observe that
  the number of interesting bags of $T_{\mathbf{a}}$ is linear in the number of
  facts of $I_{\mathbf{a}}$; indeed, the interesting bags form the internal
  nodes and leaves of a binary tree whose leaves must all be in the image of
  $\beta_{\mathbf{a}}$, so the number of leaves is at most the number of facts of
  $I_{\mathbf{a}}$, so the total number of nodes in the tree is linear in the
  number of leaves.

  We now define a weight function $\weight$ on $T_{\mathbf{a}}$ by $\weight(b) = \pfun(F)$
  (the probability of $F$) for $F\in I_{\mathbf{a}}$ and
  $\beta_{\mathbf{a}}(F)=b$, if any such $F$ exists; $\weight(b) = 0$ otherwise. We define
  bottom-up a cumulative weight function $\weight'$ on $T_{\mathbf{a}}$ as $\weight(b)$,
  plus $\weight'(\LC(b))$ if $\LC(b) \in T_{\mathbf{a}}$, plus $\weight'(\RC(b))$ if $\RC(b)
  \in T_{\mathbf{a}}$. For notational convenience we also extend $\weight'$ to
  anything by saying that $\weight'(b) = 0$ if $b \notin T_{\mathbf{a}}$ or $b$ does
  not exist.

  Observe now that for a non-interesting bag $b$, $\weight(b)$ and $\weight'(b)$ can be
  represented either as $0$ or as a pointer to some $\weight(b')$ or $\weight'(b')$ for an
  interesting bag $b'$. Indeed, if $b$ is non-interesting then we must have
  $\weight(b) =
  0$. Now we show that if $b$ has a topmost interesting descendant $b'$ then it
  is unique: indeed, the lowest common ancestor of two interesting descendants
  of $b$ is a descendant of $b$ and it is also interesting, so there is a unique
  topmost one. Now this means that either $b'$ does not exist and $\weight'(b) = 0$,
  or it does exist and all descendants of $b$ that are in the image of
  $\beta_{\mathbf{a}}$ are descendants of $b'$, so that $\weight'(b) = \weight'(b')$ and we can
  just make $\weight'(b)$ a pointer to $\weight'(b')$.

  Now this justifies that we can compute $\weight$ and $\weight'$ bottom-up in linear time
  in $\card{T_{\mathbf{a}}} + \card{I_{\mathbf{a}}}$: observe that we are
  working on rationals with the same denominator, so the sums that we perform
  are sums of integers, whose size  always remains less than the common
  denominator; as there is a number of interesting bags which is linear in the
  number of facts of $I_{\mathbf{a}}$, and those are the only nodes for which a
  value (whose size is that of the probabilities in $I_{\mathbf{a}}$) actually
  needs to be computed and written, the computation is performed in time
  $O(\card{T_{\mathbf{a}}} + \card{I_{\mathbf{a}}})$ overall.

  We now justify that we can encode $T_{\mathbf{a}}$ to a circuit with the
  correct probabilities. For each bag $b \in T_{\mathbf{a}}$, we create a
  gate~$g_b^{\i}$; for the root bag $b$ it is an input gate with
  probability $\weight'(b)$; for other bags it is a gate whose value is defined
  by the parent bag. Intuitively, $g_b^{\i}$ describes whether to choose a fact
  from $F_{\mathbf{a}}$ within the subtree rooted at~$b$.

  For every interesting bag $b$, writing $\weight'(b) = k'/d$ and $\weight(b) = k/d$ with
  $d$ the common denominator, create one input gate $g_b^{\h}$ with probability
  $\frac1{\weight'(b)} \weight(b) = k/k'$, and one gate $g_b^{\h\wedge}$ which is the AND
  of $g_b^{\h}$ and $g_b^{\i}$. Intuitively, this gate describes whether to
  generate the fact assigned at this node, if any. If there is such a fact,
  set its image by $\phi$ to be $g_b^{\h\wedge}$. Now if $\weight'(b) > \weight(b)$
  (intuitively: there is still the possibility to generate fact at child nodes),
  we create one input gate $g_b^{\leftrightarrow}$ which has probability
  $\frac{1}{\weight'(b)-\weight(b)} \weight'(\LC(b))$. Once again, this probability simplifies to a
rational whose numerator and denominator are $< d$. We create a gate
$g_b^{\LCf}$ to be $g_b^{\i} \wedge \neg g_b^{\h} \wedge g_b^{\leftrightarrow}$
(creating a constant number of intermediate gates as necessary), and
$g_b^{\RCf}$ to be $g_b^{\i} \wedge \neg g_b^{\h} \wedge \neg
g_b^{\leftrightarrow}$, setting them to be $g_{\LC(b)}^{\i}$ and
$g_{\RC(b)}^{\i}$ where applicable (i.e., if $\LC(b)$ and $\RC(b)$ exist and are
in~$T_{\mathbf{a}}$).

By contrast, non-interesting bags $b$ just set $g_{\LC(b)}^{\i}$ and
$g_{\RC(b)}^{\i}$ (where applicable) to be $g_b^{\i}$, with no input gates.

  We now observe that by construction the resulting circuit has a tree
  decomposition that is compatible with $T$, so that we can add its events to
  $T'$ and only add constant width to the nodes of $T_{\mathbf{a}}$ as required.
  It is also easy to see that the circuit gives the correct distribution on the
  facts of $F_{\mathbf{a}}$, with the following invariant: for any bag $b \in
  T_{\mathbf{a}}$, the probability that $g^{\i}_b$ is $\true$ is $\weight'(b)$, and
  $g^{\h\wedge}_b$, $g^{\LCf}_b$ and $g^{\RCf}_b$ are either all $\false$ if
  $g^{\i}_b$ is $\false$ or, if $g^{\i}_b$ is $\true$, exactly one is true and
  they respectively have marginal probabilities $\weight(b)$,
  $\weight'(\LC(b))$, and $\weight'(\RC(b))$.
  Now the circuit construction is once
  again in time $O(\card{I_{\mathbf{a}}} + \card{T_{\mathbf{a}}})$, noting that
  interesting nodes are the only nodes where numbers need to be computed and
  written; and we have performed the entire computation in time
  $O(\card{I_{\mathbf{a}}} + \card{T_{\mathbf{a}}})$, so the overall result is
  proven.
\end{proof}

Combining Lemma~\ref{lem:bidwidths} and
Theorem~\ref{thm:main}, we can conclude:

\end{toappendix}

We move from trees to relational instances, and show another bounded-width
tractability result for \deft{block-independent disjoint} (BID)
instances (see~\cite{suciu2011probabilistic}, or~\cite{barbara1992management,re2007materialized}
for formal
definitions).
We define the \deft{treewidth} of a BID instance as
that of its underlying relational instance, and claim the following (remember
that query
evaluation on a probabilistic instance means determining the probability that
the query holds):

\begin{theoremrep}{thm:bid}
  For any fixed $k \in \NN$, MSO query evaluation on an input BID instance of treewidth $\leq
  k$ has ra-linear data complexity.
\end{theoremrep}

This implies the same claim for tuple-independent
databases~\cite{dalvi2007efficient,lakshmanan1997probview}.

All 
probabilistic results are proven by rewriting to a formalism of relational
instances with a circuit annotation, such that instance and circuit have a
bounded-width joint decomposition. We compute a treelike provenance circuit for the
instance using Theorem~\ref{thm:provenance-encodings}, combine it with the
annotation circuit, and
apply existing message passing techniques~\cite{lauritzen1988local} to compute the probability of the circuit.

\myparagraph{}{Counting}
We turn to the problem of counting query results, and reduce it in
ra-linear time to query evaluation on treelike instances, capturing a
result of~\cite{arnborg1991easy}:

\begin{theoremrep}[\cite{arnborg1991easy}]{thm:counting}
  For any fixed MSO query $q(\mathbf{x})$ with free first-order
  variables and $k\in\NN$,
  the number of matching assignments to $\mathbf{x}$ on an
  input instance $I$ of width $\leq k$ can be computed in
  ra-linear data complexity.
\end{theoremrep}

\begin{proof}
  Let $k \in \NN$.
  Let $q(\mathbf{x})$ be the MSO query. We rewrite it to the following query: $q' : \exists
  \mathbf{x} \bigwedge_{x \in \mathbf{x}}P_x(x) \wedge q(\mathbf{x})$, where the
  $P_x$ are fresh unary predicates.
  Consider an input instance $I$ of width $\leq k$, and expand it to a BID
  instance~$I'$ by setting existing relations to be trivial BID tables (i.e., all
  attributes of the relation are keys, and all facts have probability~$1$) and
  adding tables $P_x$ for every $x \in \mathbf{x}$ with one attribute, with the
  empty set as key, and with facts $P_x(a)$ for all $a \in \dom(I)$, with
  probability $1/\card{\dom(I)}$. This rewriting can clearly be performed in
  ra-linear time, and if $I$ has treewidth $\leq k$ then so does $I'$.
  Intuitively, the possible worlds of $I'$ are all the possible ways of
  extending $I$ with facts $P_x(a)$ for $x \in \mathbf{x}$ and $a \in \dom(I)$,
  with only one fact $P_x(a)$ for every $x \in \mathbf{x}$, and each possible
  world has probability $1/\card{\dom(I)}^{\card{\mathbf{x}}}$.

  We now make the immediate observation that for every such possible world
  $I'_{\mathbf{a}}$ of~$I'$
  indexed by the $a_x \in \dom(I)$ for $x \in \mathbf{x}$, where we add the
  facts $P_x(a_x)$ for $x \in \mathbf{x}$, we have $I'_{\mathbf{a}} \models q'$
  iff $I \models q(\mathbf{a})$. Hence, the number of matches of $q$ in $I$ is
  the number of possible worlds of $I'$ where $q'$ holds, that is, the
  probability of $q'$ on $I'$ multiplied by $M \defeq \card{\dom(I)}^{\card{\mathbf{x}}}$.

  We conclude by Theorem~\ref{thm:bid} that we can compute this probability in
  ra-linear time in $I'$, that is, in~$I$, and we compute the count from the
  probability by multiplying by $M$ in ra-linear time, proving the result.
\end{proof}

\section{Related Work}\label{sec:related}
\paragraph*{Bounded treewidth.}
From the original results~\cite{Courcelle90,flum2002query} on
the linear-time data complexity of MSO evaluation on treelike structures,
works such as~\cite{arnborg1991easy} have investigated counting problems,
including applications to probability computation (on graphs).
A recent paper~\cite{bodlaender2012probabilistic} also shows
the linear-time data complexity of evaluating an MSO query on a
treelike probabilistic network (analogous to a circuit). Such
works, however, do not decouple the computation of a treelike
\emph{provenance} of the query and the \emph{application} of
probabilistic inference on this provenance, as we do.
We also note results from another
approach~\cite{pichler2010counting} on treelike structures,
based on monadic Datalog (and not on MSO as the other works), that are
limited to counting.

\paragraph*{Probabilistic databases.}
The \emph{intensional} approach~\cite{suciu2011probabilistic} to query
evaluation on probabilistic databases is to compute a lineage of the query and evaluate its
probability via general purpose methods;
tree-like lineages allow for tractable
probabilistic query evaluation
\cite{jha2012tractability}. Many works in this field provide sufficient
conditions for lineage tractability, only a few based on the
data~\cite{sen2010read,roy2011faster} but most based on the
query~\cite{dalvi2007efficient,jha2012tractability}.
For treelike
instances, as we show, we can \emph{always} compute treelike lineages, and we can do so for expressive queries (beyond UCQs considered in these
works), or alternatively generalize Boolean lineages to connect them to more expressive semirings.

\paragraph*{Provenance.}
Our provenance study is inspired by the usual 
definitions of semiring provenance for the relational algebra and Datalog~\cite{green2007provenance}. 
Another notion of provenance, for XQuery queries on
trees, has been introduced in~\cite{foster2008annotated}.
Both~\cite{green2007provenance} and~\cite{foster2008annotated} provide
\emph{operational} definitions of provenance, which cannot be directly
connected to tree automata.
A different relevant work on
provenance is~\cite{deutch2014circuits}, which introduces
provenance circuits, but uses them for
Datalog and only on \emph{absorptive} semirings. 
Last, other works study 
provenance for \emph{transducers}~\cite{bojanczyk2014transducers}, but with no
clear connections to semiring provenance or provenance for Boolean queries.

\section{Conclusion}\label{sec:conclusion}
We have shown that two provenance constructions can be computed in linear time
on trees and treelike instances: one for UCQs on arbitrary semirings, the other
for arbitrary GSO queries as non-monotone Boolean expressions.
A drawback of our results is their high combined complexity, as they rely on 
non-elementary encoding of the query to an automaton. One approach to fix this
is monadic Datalog~\cite{gottlob2010monadic,pichler2010counting}; this
requires defining and computing provenance in this setting.

\paragraph{Acknowledgements.}
This work was partly supported by a financial
contribution from the Fondation Campus Paris-Saclay and the French ANR Aggreg project.

\putbib
\end{bibunit}

\end{document}